\documentclass[11pt]{article}
\usepackage[utf8]{inputenc}
\usepackage{tcolorbox}
\usepackage{amsthm}
\usepackage{amsfonts}
\usepackage{amsmath}
\usepackage{amssymb}
\usepackage{colonequals}
\usepackage{epsfig}
\usepackage{url}
\newtheorem{theorem}{Theorem}
\newtheorem{corollary}[theorem]{Corollary}
\newtheorem{lemma}[theorem]{Lemma}
\newtheorem{observation}[theorem]{Observation}
\newtheorem{problem}[theorem]{Problem}
\newcommand{\GG}{{\cal G}}
\newcommand{\CC}{{\cal C}}

\newcommand{\II}{{\cal I}}
\newcommand{\RR}{{\cal R}}
\newcommand{\ZZ}{{\cal Z}}
\DeclareMathOperator{\tw}{\mathrm{tw}}

\DeclareMathOperator{\poly}{\mathrm{poly}}
\newcommand{\cin}[1]{\mathrm{in}_{#1}}
\newcommand{\prt}[1]{\mathrm{par}_{#1}}

\title{Approximation metatheorems for classes with bounded expansion\footnote{Supported by the ERC-CZ project LL2005 (Algorithms and complexity within and beyond bounded expansion) of the Ministry of Education of Czech Republic.}}
\author{Zden\v{e}k Dvo\v{r}\'ak\thanks{Computer Science Institute, Charles University, Prague, Czech Republic. E-mail: {\tt rakdver@iuuk.mff.cuni.cz}.}}
\date{}

\begin{document}
\maketitle

\begin{abstract}
We give a number of approximation metatheorems for monotone maximization problems expressible
in the first-order logic, in substantially more general settings than the previously known.
We obtain
\begin{itemize}
\item constant-factor approximation algorithm in any class of graphs with bounded expansion,
\item a QPTAS in any class with strongly sublinear separators, and
\item a PTAS in any fractionally treewidth-fragile class (which includes all common
classes with strongly sublinear separators.
\end{itemize}
Moreover, our tools also give an exact subexponential-time algorithm
in any class with strongly sublinear separators.
\end{abstract}

\section{Introduction}

We are interested in approximation algorithms for problems such as the \textsc{Maximum Independent Set}
and its variants (weighted, distance-$d$ independent for a fixed parameter $d$, \ldots),
\textsc{Maximum Induced Matching}, \textsc{Maximum $3$-Colorable Induced Subgraph}, and similar.
There are many strong non-approximation results that preclude the existence of constant-factor
approximation algorithms for these problems in general; for example, it is NP-hard to
approximate the independence number~\cite{colnonap} of an $n$-vertex graph up to the factor of $n^{1-\varepsilon}$ for
every $\varepsilon>0$.  Hence, we need to consider more restricted settings.

There is a close connection between approximability and the existence of algorithms parameterized by the solution size.
Indeed, the existence of an EPTAS (arbitrarily precise polynomial-time approximation algorithm such that the degree
of the polynomial bounding the complexity does not depend on the precision) directly implies fixed-parameter tractability,
and a constant-factor approximation often forms a starting point for proving fixed-parameter tractability.
A natural family of problems, namely those expressible in the first-order logic, is known to be fixed-parameter tractable
in a subgraph-closed class of graphs if and (under standard complexity-theoretic assumptions) only if
the class is nowhere-dense~\cite{grohe2014deciding}.  Moreover, in a slightly more restrictive setting of
classes with bounded expansion, the parameterized algorithms have linear time complexity~\cite{dvorak2013testing}.
We refer the reader not familiar with the concept of bounded expansion to Section~\ref{sec-bexp};
here, let us just mention that examples of graph classes with this property are
planar graphs and more generally all proper minor-closed classes, all graph
classes with bounded maximum degree, and even more generally, graph classes
closed under topological minors, as well as almost all Erd\H{o}s-R\'enyi random
graphs with bounded average degree.

Motivated by this connection, we explore the approximability of maximization problems expressible in the first-order logic
when restricted to classes with bounded expansion. As our first main result, we show that every monotone
maximization problems expressible in the first-order logic admits a constant-factor approximation algorithm
in every class of graphs of bounded expansion, even in the weighted setting.  We need a few definitions to formulate the precise statement.

Let $I$ be a finite index set and let $S$ be a set of vertices of a graph $G$.  An \emph{$I$-tuple of subsets} of $S$ is a system
$A_I=\{A_i:i\in I\}$, where $A_i\subseteq S$ for each $i$.  We say that the $I$-tuple \emph{covers} a vertex
$v\in V(G)$ if $v\in \bigcup_{i\in I} A_i$.  A \emph{property of $I$-tuples of subsets} in $G$ is
a set $\pi$ of $I$-tuples of subsets of $V(G)$, listing the $I$-tuples that \emph{satisfy the property $\pi$}.
As an example, suppose $I=\{1,2,3\}$ and $\pi$ consists exactly of the $I$-tuples $\{A_1,A_2,A_3\}$
such that $A_1$, $A_2$, and $A_3$ are disjoint independent sets in $G$; then an induced subgraph of $G$
is $3$-colorable if and only if its vertex set is covered by some $I$-tuple satisfying the property $\pi$.
For $I$-tuples $A_I$ and $A'_I$, we write $A'_I\subseteq A_I$ if
$A'_i$ is a subset of $A_i$ for each $i\in I$.  Similarly, we define $A_I\cup A'_I$, $A_I\cap X$ and $A_I\setminus X$ for a set $X\subseteq V(G)$
by applying the operation in each index separately.  We say that the property $\pi$ is \emph{monotone} if for all $I$-tuples
$A'_I\subseteq A_I$, if $A_I$ satisfies the property $\pi$, then so does $A'_I$.

Our goal will be to maximize the weight of an $I$-tuple satisfying the given property.  The weight of the $I$-tuple
is a sum of the weights of the covered vertices, where the weight of vertex is determined by its membership
in the elements of the $I$-tuple.  More precisely, for a vertex $v\in V(G)$, let $\chi_{A_I}(v)\in 2^I$
be the set of indices $i\in I$ such that $v\in A_i$.  A function $w:V(G)\times 2^I\to\mathbb{Z}$ is a \emph{weight assignment}
if $w(v,\emptyset)=0$ for each $v\in V(G)$ (with a few exceptions, we will only consider assignments of non-negative weights).
Let us define $w(A_I)=\sum_{v\in V(G)} w(v,\chi_{A_I}(v))$, and let
$\mathrm{MAX}(\pi,w)$ be the maximum of $w(A_I)$ over the systems $A_I$ satisfying the property $\pi$.

Let $X_I=\{X_i:i\in I\}$ be a system of unary predicate symbols (to be interpreted as subsets of vertices
of the input graph).  A \emph{first-order $I$-formula} is a formula $\varphi$ using quantification over vertices,
the predicates $X_i$ for $i\in I$, equality, and the standard logic conjunctions.  A \emph{first-order graph $I$-formula} can additionally
use a binary \emph{adjacency predicate} $E$.  A formula is a \emph{sentence} if it has no free variables.
For a graph $G$, an $I$-tuple $A_I$ of subsets of $V(G)$, and a first-order $I$-sentence $\varphi$,
we write $G,A_I\models\varphi$ if the sentence $\varphi$ holds
when the variables in its quantifiers take values from $V(G)$, the adjacency predicate is interpreted as the adjacency in $G$,
and for $i\in I$, $X_i$ is interpreted as the set $A_i$.  The \emph{property $\pi$ expressed by $\varphi$} consists
of all $I$-tuples $A_I$ such that $G,A_I\models\varphi$.  For example, the property ``$X_1$ is a distance-$2$ independent set''
is expressed by the first-order graph $\{1\}$-sentence
$$(\forall x,y)\, (X_1(x)\land X_1(y)\land x\neq y)\Rightarrow (\lnot E(x,y)\land \lnot(\exists z)\,E(x,z)\land E(y,z)).$$

\begin{theorem}\label{thm-bexpcap}
Let $I$ be a finite index set and let $\varphi$ be a first-order graph $I$-sentence
expressing a monotone property $\pi$.  For any graph class $\GG$ with bounded expansion, there exists
a constant $c\ge 1$ and a linear-time algorithm that, given
\begin{itemize}
\item a graph $G\in \GG$ and
\item a weight assignment $w:V(G)\times 2^I\to\mathbb{Z}_0^+$,
\end{itemize}
returns an $I$-tuple $A_I$ of subsets of $V(G)$ satisfying the property $\pi$ such that
$$w(A_I)\ge \tfrac{1}{c}\cdot\mathrm{MAX}(\pi,w).$$
\end{theorem}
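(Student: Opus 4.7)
The plan is to combine low-treedepth colorings (which characterize bounded-expansion classes) with the monotonicity of $\pi$: decompose $G$ into a constant number of bounded-treedepth pieces, solve the problem exactly on each piece, and use monotonicity plus an averaging argument to argue that the best local solution approximates the global optimum within a constant factor.

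Let $q$ denote the quantifier rank of $\varphi$, and choose a constant $p$ large enough relative to $q$ for Step~3 below. Since $\GG$ has bounded expansion, the Ne\v{s}et\v{r}il--Ossona de Mendez theory yields, in linear time, a $p$-centered coloring $c\colon V(G)\to\{1,\ldots,N\}$ with a constant number $N=N(\GG,p)$ of colors: for every $J\subseteq\{1,\ldots,N\}$ with $|J|\le p$, the subgraph $G[V_J]$ induced by $V_J\colonequals c^{-1}(J)$ has treedepth at most $p$. By monotonicity of $\pi$, if $A_I^*$ is an optimum tuple then the coordinate-wise restriction $A_I^*\cap V_J$ is also feasible, and since $w(A_I)$ is additive over any vertex partition, a uniformly random $p$-subset $J$ satisfies $\mathbb{E}[w(A_I^*\cap V_J)]=(p/N)\cdot\mathrm{MAX}(\pi,w)$. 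Thus some $J$ achieves this value, and it suffices to solve, for each of the $\binom{N}{p}=O(1)$ subsets $J$, the subproblem of computing a maximum-weight feasible tuple contained in $V_J$, and return the best outcome.

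The heart of the proof is this subproblem: given $V_J$ with $G[V_J]$ of treedepth at most $p$, compute $\arg\max\{w(A_I):A_I\subseteq V_J,\ G,A_I\models\varphi\}$. The standard types-based dynamic programming for first-order model checking on bounded-treedepth graphs maintains, at each node of the treedepth decomposition, the realizable $q$-types of tuples above it; it extends naturally to weighted optimization by tracking, for each type, the best weight achievable. The subtle point is that $\varphi$ is interpreted in the ambient graph $G$, not in $G[V_J]$, so $\varphi$'s quantifiers may range outside $V_J$. To handle this I would precompute, for every $v\in V_J$, a local profile recording enough of $v$'s neighborhood in $G$ to determine its contribution to $q$-type computations---such profiles take only boundedly many values in a bounded-expansion class and can be computed in linear time using the standard neighborhood machinery---and then enrich the treedepth DP on $G[V_J]$ with these profiles so that it faithfully simulates the evaluation of $\varphi$ in $G$.

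The main obstacle lies precisely in this last step: showing that finitely many profiles suffice for the DP to correctly track $q$-types in $(G,A_I)$. I expect to handle this either by reducing $\varphi$ to a Gaifman-style local normal form, or by a direct Ehrenfeucht--Fra\"iss\'e argument showing that the $q$-type of $(G,A_I)$ is determined by the treedepth decomposition of $G[V_J]$ together with the precomputed profiles; the choice of $p$ above is dictated by what such an argument requires. Once this is in place, the total running time is linear (linear-time coloring, linear-time profile computation, and $\binom{N}{p}$ linear-time DPs), and the approximation ratio $p/N$ is a constant depending only on $\GG$ and $\varphi$, matching the statement of the theorem.
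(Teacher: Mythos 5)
Your high-level plan — cover $G$ by constantly many bounded-treedepth pieces via a centered/low-treedepth coloring, restrict the optimum to a piece using monotonicity, average over pieces, and solve each piece by dynamic programming — is exactly the skeleton of the paper's proof, and the averaging computation is fine. The gap is precisely the step you flag as "the main obstacle," and it is genuine: your two proposed tools do not plausibly close it. You need to compute $\max\{w(A_I):A_I\subseteq V_J,\ G,A_I\models\varphi\}$, where $\varphi$ is evaluated in all of $G$, and you want to reduce this to a DP on the treedepth decomposition of $G[V_J]$ augmented by per-vertex "profiles." But whether $G,A_I\models\varphi$ can hinge on interactions between solution vertices mediated by vertices \emph{outside} $V_J$ (consider distance-$2$ independence: whether $u,v\in A_1$ share a common neighbor in $V(G)\setminus V_J$). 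A per-vertex profile independent of $A_I$ cannot record such pairwise information, and a Gaifman-style normal form does not help because the local formulas it produces still quantify over neighborhood balls in $G$, which in a bounded-expansion class can be arbitrarily large and unstructured; similarly, an Ehrenfeucht--Fra\"iss\'e argument gives you no finite-state summary without additional structure on $G\setminus V_J$.

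The paper closes this gap with a dedicated quantifier-elimination theorem (Theorem~\ref{thm-elim}) for bounded-expansion classes, which rewrites $\varphi$ as a quantifier-free sentence over a "counter signature" $\sigma$ — unary predicates, guarded unary functions, and counters with triggers — computable in linear time. From the function symbols one extracts an $s$-shroud $h$ (a constant-size set $h(v)\ni v$ for each $v$), and the resulting locality statement (Theorem~\ref{thm-local}) only guarantees $G,A_I\models\varphi\iff G[Y],\II_Y,A_I\models\varphi'$ for $A_I$ contained in the \emph{$h$-center} of $Y$, i.e.\ vertices $v$ with $h(v)\subseteq Y$, not all of $Y$. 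This restriction is essential and is missing from your sketch: you would also have to shrink the feasible region to a "center" of $V_J$, and redo the averaging using an $(s,\delta)$-generic cover so that $\Pr[h(v)\subseteq V_J]$ is bounded below. So the missing ingredient is not a tweak — it is the main technical content of the paper, namely the counter-based quantifier elimination and the shroud/center mechanism it yields.
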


Actually, the result applies to even more general class of properties, expressible by the fragment of monadic second-order
logic where we allow quantification only over the subsets of the vertices in the solution.  
For a finite index set $I$ disjoint from the integers, a
\emph{solution-restricted MSOL $I$-sentence with first-order graph core $\psi$} is a formula of form
$(Q_1X_1\subseteq \bigcup_{i\in I} X_i)\ldots (Q_nX_n\subseteq \bigcup_{i\in I} X_i)\;\psi$, where $Q_1$, \ldots, $Q_n$ are quantifiers,
$X_1$, \ldots, $X_n$ are unary predicate symbols
(interpreted as subsets of vertices of the input graph), and $\psi$ is a first-order graph $I\cup\{1,\ldots,n\}$-sentence.
For example, the property ``$G[X]$ is a union of cycles, and the distance in $G$ between the distinct cycles is at least three''
(or more natural properties such as ``$G[X]$ is planar'' or ``$G[X]$ is acyclic'' that however do not use the
fact that in the first-order core, we are allowed to quantify also over the vertices not in $X$) can be expressed
in this way.

\begin{theorem}\label{thm-msolcap}
Let $I$ be a finite index set and let $\varphi$ be a solution-restricted MSOL $I$-sentence with first-order graph core
expressing a monotone property $\pi$.  For any class $\GG$ with bounded expansion, there exists
a constant $c\ge 1$ and a linear-time algorithm that, given
\begin{itemize}
\item a graph $G\in \GG$ and
\item a weight assignment $w:V(G)\times 2^I\to\mathbb{Z}_0^+$,
\end{itemize}
returns an $I$-tuple $A_I$ of subsets of $V(G)$ satisfying the property $\pi$ such that
$$w(A_I)\ge \tfrac{1}{c}\cdot\mathrm{MAX}(\pi,w).$$
\end{theorem}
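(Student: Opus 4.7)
The plan is to reduce Theorem~\ref{thm-msolcap} to Theorem~\ref{thm-bexpcap} by absorbing the solution-restricted MSOL quantifiers into an extended index set. Write $\varphi=(Q_1X_1\subseteq\bigcup_{i\in I}X_i)\ldots(Q_nX_n\subseteq\bigcup_{i\in I}X_i)\,\psi$ and set $I^*=I\sqcup\{1,\ldots,n\}$, so each MSOL-bound set becomes an extra tuple component. Construct a first-order graph $I^*$-sentence $\psi^*$ from $\psi$ by replacing every occurrence of $X_j(v)$ with $j\in\{1,\ldots,n\}$ by $X_j(v)\land\bigvee_{i\in I}X_i(v)$; this makes the effective interpretation of $X_j$ equal to $X_j\cap\bigcup_{i\in I}X_i$, enforcing the subset constraint semantically and ensuring that the effective value shrinks in lockstep with the $I$-components. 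Let $\pi^*$ be the property of $I^*$-tuples expressed by $\psi^*$, and lift the weight by $w^*(v,S)=w(v,S\cap I)$, so that $w^*$ depends only on the $I$-coordinates and $w^*(A_{I^*})=w(A_I)$ for the $I$-projection $A_I$.

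A direct verification should show $\mathrm{MAX}(\pi^*,w^*)=\mathrm{MAX}(\pi,w)$: any $A_I\in\pi$ with MSOL witnesses $B_j$ gives $(A_I,B_1,\ldots,B_n)\in\pi^*$ of equal weight, and any $(A_I,B_1,\ldots,B_n)\in\pi^*$ yields $A_I\in\pi$ via the witnesses $B_j\cap\bigcup_{i\in I}A_i$. For the case of a purely existential MSOL prefix, monotonicity of $\pi^*$ inherits from monotonicity of $\pi$: shrinking the $I^*$-tuple shrinks both the $I$-components and the effective values of the $X_j$'s, and the added conjuncts $\bigvee_{i\in I}X_i$ let us realign the witnesses produced by monotonicity of $\pi$. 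Applying Theorem~\ref{thm-bexpcap} to $\psi^*$, $w^*$, and $\GG$ then returns an $I^*$-tuple within a constant factor of the optimum, whose $I$-projection is the output.

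The main obstacle is handling alternating quantifiers, and in particular universal MSOL quantifiers, for which pushing a negation through the first-order core can break monotonicity at intermediate stages of a naive induction. I would proceed by induction on the quantifier alternation depth, at each stage invoking the \emph{global} monotonicity of $\pi$ (rather than of intermediate sub-properties) to certify that the final extended first-order property $\pi^*$ is still monotone: given any $(A_{I^*})\in\pi^*$ and any shrinking $A'_{I^*}\subseteq A_{I^*}$, the $I$-projection $A'_I$ remains in $\pi$ by assumption, and a careful re-selection of the extra components via the game-theoretic semantics of the MSOL alternation yields a witnessing shrunken $I^*$-tuple in $\pi^*$. This realignment step is the technical crux; modulo it, the rest is a clean reduction to Theorem~\ref{thm-bexpcap}, which additionally preserves the linear time complexity since $n$ and $|I|$ are constants.
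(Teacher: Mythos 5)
The reduction you propose does not go through: the property $\pi^*$ you define need not be monotone, even when the MSOL prefix consists of a single existential quantifier, so Theorem~\ref{thm-bexpcap} cannot be applied. Concretely, take $I=\{1\}$ and $\varphi=(\exists X_2\subseteq X_1)(\forall v)(X_1(v)\Rightarrow X_2(v))$. The property $\pi$ this expresses is ``always true'' (take $X_2=X_1$ as witness), hence monotone. Your rewrite replaces $X_2(v)$ by $X_2(v)\land X_1(v)$, giving $\psi^*\equiv(\forall v)(X_1(v)\Rightarrow X_2(v))$, i.e.\ $\pi^*=\{(A_1,B_2):A_1\subseteq B_2\}$. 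Shrinking only the $B_2$-component leaves $\pi^*$: the pair $(\{v\},\{v\})$ lies in $\pi^*$ while $(\{v\},\emptyset)$ does not. The conjoined guard $\bigvee_{i\in I}X_i$ does not effect any ``realignment'': monotonicity of $\pi$ tells you that \emph{some} witness for $A'_I$ exists, but the witness component $B'_j$ of the shrunken $I^*$-tuple is fixed and in general bears no relation to that guaranteed witness. The problem is thus not confined to universal quantifiers. And for universal quantifiers there is no way to absorb the quantified variable into a tuple component at all, since an approximation algorithm must output a single tuple, while a universally quantified witness would have to be chosen adversarially after that tuple is fixed; no game-theoretic re-selection can repair this.

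The paper's proof of Theorem~\ref{thm-msolcap} does not attempt a reduction to Theorem~\ref{thm-bexpcap}. It applies Corollary~\ref{cor-local} directly: the locality result for first-order sentences extends verbatim to solution-restricted MSOL sentences with first-order graph core because the second-order quantifiers range only over subsets of $\bigcup_{i\in I}X_i$, and when $A_I$ consists of subsets of the $h$-center of $Y$ those quantified sets are also subsets of the $h$-center and hence of $Y$. Combined with the generic covers of bounded treedepth from Corollary~\ref{cor-lowtd} and the optimization version of Courcelle's theorem (Theorem~\ref{thm-optmsol}) on each $G[Y]$, this yields the constant-factor approximation without ever needing $\pi^*$ to be monotone as a first-order property of an enlarged tuple. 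If you want to salvage your plan, you would have to prove that the locality and bounded-treewidth machinery applies to the MSOL core as a black box -- which is exactly what the paper does -- rather than eliminating the set quantifiers up front.
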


We are also interested in the graph classes for which every monotone
maximization problems expressible in the first-order logic admits a polynomial-time approximation scheme (PTAS),
i.e., an arbitrarily precise polynomial-time approximation algorithm.  Note that it is hard to approximate the maximum independent
set within the factor of $0.995$ in graphs of maximum degree at most three~\cite{berman1999some}, and thus we 
do not aim to obtain PTAS in all classes with bounded expansion.  The class of graphs of maximum degree three has exponential
expansion, motivating us to consider the classes with polynomial expansion.  Dvořák and Norin~\cite{dvorak2016strongly}
proved these are exactly the graph classes with strongly sublinear separators\footnote{For an $n$-vertex graph $G$, a set $X\subseteq V(G)$ is
a \emph{balanced separator} if each component of $G-X$ has at most $2n/3$ vertices.  Let $s(G)$ denote the minimum
size of a balanced separator in $G$, and for a class $\GG$ of graphs, let $s_\GG:\mathbb{Z}^+\to\mathbb{Z}_0^+$ be defined by
$s_\GG(n)=\max\{s(H):H\subseteq G\in\GG, |V(H)|\le n\}.$
The class $\GG$ has \emph{strongly sublinear separators} if $s_\GG(n)=O(n^{1-\beta})$ for some $\beta>0$.},
and the approximation questions have been intensively studied for various graph classes with this property
(such as planar graphs or more generally for proper minor-closed classes);
see Section~\ref{sec-previous} for an overview.  While we were not able to obtain a PTAS for all classes with strongly
sublinear separators, we were at least able to obtain a quasi-polynomial time approximation schemes
(but only for properties expressed by first-order graph sentences, rather than solution-restricted MSOL sentences
with first-order graph core).

\begin{theorem}\label{thm-qptas}
Let $I$ be a finite index set and let $\varphi$ be a first-order graph $I$-sentence
expressing a monotone property $\pi$.  For any class $\GG$ with strongly sublinear separators, there exists
a polynomial $p$ and an algorithm that, given
\begin{itemize}
\item a graph $G\in \GG$,
\item a weight assignment $w:V(G)\times 2^I\to\mathbb{Z}_0^+$, and
\item a positive integer $o$,
\end{itemize}
returns in time $\exp(p(o\cdot \log |V(G)|))$ an $I$-tuple $A_I$ of subsets of $V(G)$ satisfying the property $\pi$ such that
$$w(A_I)\ge \bigl(1-\tfrac{1}{o}\bigr)\cdot\mathrm{MAX}(\pi,w).$$
\end{theorem}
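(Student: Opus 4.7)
The plan is to reduce the approximate maximization on $G$ to an exact maximization on a subgraph $G-S$ of polylogarithmic treewidth, using the $1/o$ slack to ``delete'' a small-weight set. Let $\beta>0$ be such that every $n$-vertex subgraph of a graph in $\GG$ has a balanced separator of size $O(n^{1-\beta})$. Since strongly sublinear separators imply bounded (in fact polynomial) expansion, $r$-neighborhoods of small sets remain small in $G$, and by Gaifman's theorem the sentence $\varphi$ depends only on $r$-neighborhoods for a radius $r$ determined by $\varphi$.

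The technical core would be a quasi-fragile treewidth decomposition of $\GG$: for every $G\in \GG$ on $n$ vertices and every integer $o\ge 1$, one can compute in polynomial time a distribution $\mathcal{D}$ over subsets $S\subseteq V(G)$ such that $\Pr_{\mathcal{D}}[v\in S]\le 1/o$ for every vertex $v$, and every $S$ in the support admits a tree decomposition of $G - S$ of width $(o \log n)^{O(1)}$. To build $\mathcal{D}$, I would iterate a Baker-style shifting scheme over the levels of a recursive balanced separator decomposition of $G$: recurse to depth $d=O(\log n)$, tag each vertex by the level of the deepest ancestor separator that contains it, pick a uniformly random shift $s\in \{0,\ldots,o-1\}$, and take $S$ to be the vertices whose tag is congruent to $s$ modulo $o$. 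The marginal bound $\Pr[v\in S]\le 1/o$ is immediate. The polylogarithmic width bound of $G-S$ needs an argument combining the geometric decrease of separator sizes down the recursion with the fact that the deleted layer sparsifies the hierarchy every $o$ levels, so that a new tree decomposition of $G-S$ can be built whose bags each consolidate only a bounded number of surviving separators.

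Given $\mathcal{D}$, enumerate over the $o$ shifts. By averaging, for some shift the total weight that a fixed optimal solution $A_I^*$ places on $S$ is at most $\mathrm{MAX}(\pi,w)/o$, and by monotonicity of $\pi$ the tuple $A_I^*\setminus S$ is a feasible solution supported on $V(G)\setminus S$ of weight at least $(1-1/o)\mathrm{MAX}(\pi,w)$. It then remains to compute $\mathrm{MAX}(\pi, w|_{V(G)\setminus S})$ exactly on $G-S$. This is standard dynamic programming over a tree decomposition of width $t=(o\log n)^{O(1)}$: the state at each bag records the restriction of $A_I$ to the bag together with a finite amount of first-order type information about the already-processed subtree, sufficient to evaluate $\varphi$ bottom-up. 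The total running time is $\exp(p_0(t))\cdot n = \exp(p(o\log n))$ for a suitable polynomial $p$ depending only on $\varphi$ and $\GG$.

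\textbf{Main obstacle.}
The step I expect to be the main difficulty is the fragility construction: producing a tree decomposition of $G - S$ of width polylogarithmic in $n$ after removing only a $1/o$-fraction of vertices. A naive reading of the separator hierarchy, where each bag is the union of surviving separators along a root-to-leaf path, gives width of order $n^{1-\beta}$, polynomial in $n$ and therefore far from the polylogarithmic target. The argument must exploit both the geometric shrinkage of separator sizes down the recursion and the Baker-style observation that the deleted congruence class of levels cuts every long root-to-leaf path in the decomposition tree into segments of length at most $o$, allowing one to build a qualitatively different tree decomposition of $G-S$ whose bags consolidate only these short segments. Once this fragile decomposition is in hand, the remaining pieces---transfer of the weight bound via monotonicity of $\pi$ and bounded-treewidth DP for FO-expressible properties---are routine.
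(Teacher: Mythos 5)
Your high-level plan—use the $1/o$ slack to restrict attention to a subgraph of polylogarithmic treewidth, apply monotonicity to transfer the optimum, and solve exactly by dynamic programming with $\exp(\mathrm{poly}(\text{width}))$ time—is essentially the skeleton of the paper's proof. However, two of the load-bearing steps in your sketch are genuine gaps, not routine.

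\textbf{The locality step is missing and is not routine.} You propose to compute $\mathrm{MAX}(\pi, w|_{V(G)\setminus S})$ ``exactly on $G-S$,'' with the DP state recording ``first-order type information about the already-processed subtree.'' But $\pi$ is a property of $I$-tuples \emph{in $G$}, not in $G-S$: the truth of $\varphi$ for a tuple $A_I$ supported on $V(G)\setminus S$ can depend on vertices and edges inside $S$ (e.g.\ short paths through $S$), and a DP that only sees $G-S$ has no access to that. Gaifman locality does not bridge this, since an $r$-neighborhood of a solution vertex may intersect $S$. The paper handles exactly this with Theorem~\ref{thm-local}: there is an $s$-shroud $h$ and, for each induced subgraph $G[Y]$, an interpretation $\II_Y$ of auxiliary unary predicates and functions, such that for $I$-tuples supported on the $h$-center of $Y$, satisfaction of $\varphi$ in $G$ is equivalent to satisfaction of a fixed rewritten sentence $\varphi'$ in $(G[Y],\II_Y)$. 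This is the technical heart of the paper and is proved by a quantifier-elimination argument specific to bounded expansion (Theorem~\ref{thm-elim} and Lemmas~\ref{lemma-shroud}--\ref{lemma-elim}); it cannot be waved away. The same machinery also underlies the low-treewidth DP (Theorem~\ref{thm-optfo}): the per-bag state is a ``shadow''/census built from counters, which is what makes the state space $\exp(O(t))$ rather than a tower of exponentials. A side effect you also miss: because solutions must lie in the $h$-center, one needs an $(s,1-1/o)$-generic cover where $s$ is the shroud size, not merely a $(1,1-1/o)$-generic one (equivalently, your deletion probability per vertex should be about $1/(os)$, not $1/o$).

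\textbf{The fragility construction is asserted, not established.} You correctly notice that restricting the naive separator-hierarchy tree decomposition to $G-S$ still gives width $\Theta(n^{1-\beta})$, and you conjecture that deleting one congruence class of recursion levels ``cuts long root-to-leaf paths into segments of length at most $o$,'' from which a polylogarithmic-width decomposition of $G-S$ should follow. This does not follow: the surviving separators along a root-to-leaf path still sum geometrically to $\Theta(n^{1-\beta})$, and deleting a level does not remove the requirement that higher-level separators appear in the bags of lower nodes that see them. The paper does not reprove this; it invokes Theorem~\ref{thm-pllcov} from~\cite{onsubsep}, which gives the needed $(s,1-1/o)$-generic cover of treewidth $\mathrm{poly}(os\log n)$ together with the tree decompositions. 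That result uses a different and substantially more involved argument than a Baker-style shift over a single separator recursion; re-deriving it is a project in its own right, not a detail. Your ``main obstacle'' paragraph is therefore an accurate self-diagnosis, but the fix you sketch does not yet work.

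Once those two pieces (the locality theorem with its shroud, and the polylogarithmic-treewidth generic cover) are in hand, your averaging-over-shifts argument and monotonicity step match the paper's final assembly essentially verbatim.
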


Interestingly, the ideas used to prove Theorem~\ref{thm-qptas} also lead to exact subexponential-time algorithms.
\begin{theorem}\label{thm-exact}
Let $I$ be a finite index set and let $\varphi$ be a first-order graph $I$-sentence
expressing a property $\pi$.  Let $\GG$ be a class of graphs such that $s_\GG(n)=O(n^{1-\beta})$ for some positive $\beta<1$.
There exists an algorithm that, given
\begin{itemize}
\item a graph $G\in \GG$ and
\item a weight assignment $w:V(G)\times 2^I\to\mathbb{Z}$ (of not necessarily non-negative weights),
\end{itemize}
returns in time $\exp(O(|V(G)|^{1-\beta}\log^{1/2} |V(G)|))$ an $I$-tuple $A_I$ of subsets of $V(G)$ satisfying the property $\pi$ such that
$w(A_I)=\mathrm{MAX}(\pi,w)$.
\end{theorem}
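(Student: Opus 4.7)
\medskip
\noindent\textbf{Proof plan.} The plan is to combine a recursive divide-and-conquer based on distance-$r$ balanced separators with Gaifman's locality theorem, in order to keep the interface between recursive subproblems bounded.

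I would first apply Gaifman's theorem to rewrite $\varphi$ as an equivalent Boolean combination of basic local sentences
$$(\exists y_1,\ldots,y_k)\Bigl(\bigwedge_{i\neq j}d(y_i,y_j)>2r\land\bigwedge_{i}\theta_\ell(y_i)\Bigr),$$
where the radius $r$, the number of witnesses $k$, and the finite list of $r$-local formulas $\theta_\ell(x)$ depend only on $\varphi$. Whether $(G,A_I)\models\varphi$ is then determined by a bounded-length vector recording, for each $\ell$, the $\min(k,\cdot)$-truncated count of pairwise-$2r$-far vertices $y$ with $(G,A_I)\models\theta_\ell(y)$; each $\theta_\ell(y)$ is strictly local, depending only on $N^r[y]$ together with $A_I\cap N^r[y]$.

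Next, by Dvo\v{r}\'ak--Norin, strongly sublinear separators coincide with polynomial expansion, and taking $2r$-th powers preserves polynomial expansion. Hence $G^{2r}$ admits a balanced separator $X$ of size $O(n^{1-\gamma})$ for some $\gamma>0$, and this $X$ is a distance-$(2r)$ separator of $G$: removing $X$ splits $V(G)\setminus X$ into $V_1,V_2$ each of size at most $2n/3$ with $d_G(V_1,V_2)>2r$, so $N^r[v]\subseteq V_j\cup X$ for every $v\in V_j$, and any $y_i\in V_1,y_j\in V_2$ are automatically at $G$-distance $>2r$. The truth of $\theta_\ell$ at a vertex of $V_j\setminus X$ thus depends only on $A_I\cap(V_j\cup X)$. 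The algorithm enumerates all $\exp(O(n^{1-\gamma}))$ choices of $A_I\cap X$; for each, it recursively solves on $G[V_j\cup X]$ with $X$ fixed, returning, for every achievable truncated count-vector, the maximum weight of a completion on $V_j$. Combining the two return vectors with the contributions from $X$ itself (read off from the guess) gives the optimum for the fixed boundary, and the global optimum is the maximum over all guesses. The recurrence $T(n)\leq\exp(O(n^{1-\gamma}))(T(a_1n)+T(a_2n))$ with $a_1,a_2\leq 2/3$ sums geometrically to $\exp(O(n^{1-\gamma}))$; the $\log^{1/2}n$ slack in the stated bound easily absorbs polylogarithmic overheads from separator computation and from relating $\gamma$ to $\beta$.

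The main obstacle will be the interface passed between recursive calls: a naive first-order summary of a side with a boundary of size $|X|$ would carry up to $2^{2^{\Theta(|X|)}}$ distinct quantifier-rank-$q$ types, which would blow the time budget by a tower. Gaifman's theorem is exactly what collapses this interface to a vector of truncated witness counts of $O(1)$ size, so that the per-level cost is dominated by enumerating $A_I\cap X$ rather than by the DP state. A secondary care will be handling witnesses whose $r$-ball intersects $X$: these cannot be cleanly charged to either recursive side and must be enumerated together with the boundary guess. Because the DP maximizes explicitly, neither monotonicity of $\pi$ nor non-negativity of $w$ is needed, matching the extra generality stated in the theorem.
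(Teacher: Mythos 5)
Your approach is genuinely different from the paper's, and it contains a real gap at the step where you combine the two recursive sides.

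The paper does not use Gaifman locality. It first applies a quantifier elimination theorem (Theorem~\ref{thm-elim}, built on the machinery of Section~\ref{sec-elim}) that turns $\varphi$ into a quantifier-free ``global'' formula whose only non-trivial ingredients are counters triggered by unary functions \emph{guarded by the graph}. The dependency of the formula's truth on the membership of a vertex $v$ in the sets $A_I$ is thereby confined to the bounded-size shroud $h_\II(v)$, of size at most $2^{\ell(\sigma)}$ regardless of $\deg(v)$ (Lemma~\ref{lemma-shroud}). It then builds a tree decomposition of width $O(n^{1-\beta}\log^{1/2} n)$ by recursively applying the approximate separator algorithm of Feige et al.\ to $G$ itself, and runs a DP whose state per bag is $\exp(O(\text{bag size}))$ (Lemma~\ref{lemma-numstates}, Corollary~\ref{cor-numstates}, Theorem~\ref{thm-optfo}). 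The bounded shroud is exactly what makes the DP state small.

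The gap in your plan is the evaluation of the $r$-local formulas $\theta_\ell(y)$ at separator vertices $y \in X$. For $y \in V_j$ one can verify that the entire ball $N^r_G[y]$ and its induced structure live inside $G[V_j\cup X]$, so the subcall can evaluate $\theta_\ell(y)$ correctly. But for $y\in X$, $N^r_G[y]$ spans both $V_1$ and $V_2$, and in classes of bounded expansion vertices may have unbounded degree, so $|N^r_G[y]|$ can be $\Theta(n)$. Determining $\theta_\ell(y)$ in $G$ from information computed in $G[V_1\cup X]$ and $G[V_2\cup X]$ is a Feferman--Vaught style composition over the amalgamating set $X\cap N^r_G[y]$, and that set can be as large as $|X|=\Theta(n^{1-\gamma})$. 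The number of auxiliary facts each side must report to make such a composition go through grows as a tower in $|X|$, blowing the budget you set aside. (A concrete obstruction: a $\theta_\ell(y)$ asserting ``$y$ has a neighbour $z_1\in V_1$ and a neighbour $z_2\in V_2$ whose $X$-neighbourhoods coincide'' needs $\Theta(2^{|X|})$ bits of boundary information.) Enumerating which of the at most $k$ witnesses lie in $X$, as you propose, does not solve this, since the difficulty is evaluating $\theta_\ell$ at such a witness, not locating it. So what Gaifman buys you — collapsing the \emph{outer} structure of $\varphi$ to counting scattered witnesses — does not collapse the inner $r$-local part at boundary vertices; the paper's guarded-function shrouds do.

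There is also a secondary mismatch: you pass to $G^{2r}$ to get a distance-$2r$ separator. Polynomial expansion is preserved under taking fixed powers, so $G^{2r}$ still has strongly sublinear separators $O(n^{1-\gamma})$, but $\gamma$ can be \emph{polynomially} smaller than $\beta$, not just smaller by a polylog. The $\log^{1/2}n$ slack in the theorem's bound is precisely the approximation factor of the Feige et al.\ separator algorithm applied to $G$, and it is not available to absorb a drop from $n^{1-\beta}$ to $n^{1-\gamma}$. The paper avoids powers entirely because, after quantifier elimination, the only non-local structure remaining is via guarded (hence distance-one) functions.
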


We can obtain PTASes under a slightly stronger assumption on the considered class of graphs,
efficient \emph{fractional treewidth-fragility}.  For a positive integer $s$ and a positive real number $\delta\le 1$,
a multiset $\ZZ$ of subsets of vertices of a graph $G$ is an \emph{$(s,\delta)$-generic cover} of $G$
if for every set $S\subseteq V(G)$ of size at most $s$, we have $S\subseteq Z$ for at least $\delta|\ZZ|$ sets $Z\in\ZZ$.
The \emph{treewidth} of the cover is the maximum of $\tw(G[Z])$ over all $Z\in\mathbb{Z}$.
We say that a class of graphs $\GG$ is \emph{fractionally treewidth-fragile}
if for some function $f:\mathbb{Z}^+\to\mathbb{Z}^+$, the following claim holds:
for every $G\in\GG$ and positive integers $s$ and $o$, there exists an $(s,1-1/o)$-generic cover of $G$
of treewidth at most $f(os)$.  The class is \emph{efficiently fractionally treewidth-fragile}
if such a cover can be found in time polynomial in $|V(G)|$, and in particular, $\ZZ$ has polynomial size\footnote{Note this
is a somewhat non-standard formulation of fractional treewidth-fragility.  In the usual definition~\cite{twd,onsubsep},
one requires the existence of a system of sets whose deletion results in a graph of treewidth at most $f(o)$ and such that each vertex
belongs to at most $1/o$ fraction of the sets, i.e., the complements of the sets of the system form a $(1,1-1/o)$-generic
cover of treewidth at most $f(o)$.   To match this with our definition, it suffices to observe that a $\bigl(1,1-\tfrac{1}{os}\bigr)$-generic
cover is also $(s,1-1/o)$-generic.}.

\begin{theorem}\label{thm-msolptas}
Let $I$ be a finite index set and let $\varphi$ be a solution-restricted MSOL $I$-sentence with first-order graph core
expressing a monotone property $\pi$.  For any class $\GG$ that is efficiently fractionally treewidth-fragile, there exists
a function $f$, a polynomial $p$, and an algorithm that, given
\begin{itemize}
\item a graph $G\in \GG$,
\item a weight assignment $w:V(G)\to\mathbb{Z}_0^+$, and
\item a positive integer $o$,
\end{itemize}
returns in time $f(o)p(|V(G)|)$ an $I$-tuple $A_I$ of subsets of $V(G)$ satisfying the property $\pi$ such that
$$w(A_I)\ge \bigl(1-\tfrac{1}{o}\bigr)\cdot\mathrm{MAX}(\pi,w).$$
\end{theorem}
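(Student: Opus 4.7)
The approach is Baker-style: reduce to bounded-treewidth pieces via the efficient fractional treewidth-fragility, solve the problem exactly on each piece, and take the best. First I would apply the fragility to compute in polynomial time an $(s,1-1/o)$-generic cover $\ZZ$ of $G$ in which each set induces a subgraph of treewidth at most $t=f(os)$; the parameter $s$ depends on $\varphi$ (already $s=1$ suffices for the averaging argument). Let $A^*_I$ be an optimum $I$-tuple. By monotonicity of $\pi$, $A^*_I\cap Z$ satisfies $\pi$ in $G$ for every $Z\in\ZZ$; exchanging sums,
\[\sum_{Z\in\ZZ}w(A^*_I\cap Z)\;=\;\sum_{v\in\bigcup_i A^*_i}w(v)\cdot\bigl|\{Z\in\ZZ:v\in Z\}\bigr|\;\geq\;(1-1/o)\,|\ZZ|\,w(A^*_I),\]
so some $Z^*\in\ZZ$ gives $w(A^*_I\cap Z^*)\geq(1-1/o)\cdot\mathrm{MAX}(\pi,w)$. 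Thus it suffices, for each $Z\in\ZZ$ independently, to compute a maximum-weight $I$-tuple $A_I$ with $\bigcup_{i\in I}A_i\subseteq Z$ that satisfies $\pi$ in $G$, and to return the best across $Z$.

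For fixed $Z$ the outer MSOL quantifiers are confined to subsets of $\bigcup_i X_i\subseteq Z$, so they fit a standard dynamic-programming framework on a tree decomposition of $G[Z]$ of width at most $t$; the remaining task is to encode the first-order graph core $\psi$ in a form amenable to this dynamic programming. I plan to invoke Gaifman's locality theorem to rewrite $\psi$ as a Boolean combination of basic local sentences asserting the existence of tuples of vertices at pairwise distance greater than $2r$ in $G$ whose $r$-balls satisfy specified local properties, for $r=r(\psi)$. Basic sentences that do not mention any of the $X_i$ become fixed Boolean constants to be precomputed on $G$; those that do mention some $X_i$ refer only to vertices within distance $r$ of $\bigcup_i A_i\subseteq Z$, and can be captured by decorating each vertex of $Z$ with a bounded-size label encoding its $r$-local type in $G$. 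The labeled structure is an MSOL instance on $G[Z]$, and an optimization version of Courcelle's theorem then computes the required maximum-weight $A_I$ in time $h(t,|\varphi|)\cdot|Z|$; combined with the polynomial bound on $|\ZZ|$ this yields the desired $f(o)p(|V(G)|)$ total running time.

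The main technical obstacle is bounding the number of relevant $r$-local types by a function of $\varphi$ alone, so that the labels on $G[Z]$ have bounded cardinality and do not increase treewidth beyond a function of $t$ and $|\varphi|$. The plan is to combine a finite quantifier-rank argument (there are only finitely many equivalence classes of vertices under ``$\psi$ cannot distinguish us in any way relevant to a solution located in $Z$'') with a judicious choice of $s$ that guarantees the relevant $r$-balls around any small set are contained entirely within $Z$ for a $(1-1/o)$-fraction of the $Z\in\ZZ$; on such $Z$, the $r$-local type of $v\in Z$ in $G$ agrees with its type in $G[Z]$, and the decoration is genuinely extractable from $G[Z]$. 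Making this transfer precise, and verifying that it cooperates both with the averaging estimate and with the Courcelle bookkeeping, is the technically heaviest part of the proof.
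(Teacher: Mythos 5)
Your high-level plan — compute an $(s,1-1/o)$-generic cover of bounded treewidth, exchange sums to show some piece carries a $(1-1/o)$ fraction of the optimum, then optimize on each piece with Courcelle — is the same skeleton as the paper's, and the averaging estimate is correct. But the crucial middle step, reducing "does $A_I\subseteq Z$ satisfy $\pi$ in $G$" to a condition evaluable on the bounded-treewidth graph $G[Z]$, cannot be carried out by Gaifman locality in the way you propose, and this is where the real work lies.

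Concretely, Gaifman's theorem rewrites the core as a Boolean combination of basic local sentences of the form "there exist $\ell$ vertices at pairwise $G$-distance $>2r$ whose $r$-balls in $G$ satisfy a given $r$-local formula." Three things go wrong. First, the pairwise-distance constraints are distances in $G$, not in $G[Z]$; two vertices of $Z$ may be close in $G$ via a path through $V(G)\setminus Z$, and no per-vertex label on $G[Z]$ can communicate this pairwise information to a tree-decomposition DP on $G[Z]$. Second, the $r$-local type of a vertex with the $X_i$-labels is not precomputable: it depends on $A_I$, which is exactly the object being optimized. You would need the "skeleton" type plus knowledge of $A_I$ restricted to $B_r(v)$ to determine it, but the DP only sees $A_I$ restricted to bags of $G[Z]$, not to $r$-balls in $G$. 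Third, your proposed fix — choose $s$ large enough so that the relevant $r$-balls are entirely inside $Z$ — is not available: $(s,1-1/o)$-genericity only controls sets of \emph{size} at most $s$, and $r$-balls in a bounded-expansion class have unbounded size (degrees are not bounded), so no constant $s$ can force $B_r(v)\subseteq Z$.

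The paper sidesteps all of this by replacing Gaifman locality with the quantifier-elimination locality of Theorem~\ref{thm-local} (via Corollary~\ref{cor-local} for the MSOL extension): after elimination, membership of $A_I$ in $\pi$ is determined by bounded-out-degree unary functions and counters, so the "reach" of each vertex is a \emph{bounded-size set} $h(v)$ (an $s$-shroud), not an $r$-ball. This is the form of locality that is actually compatible with the $(s,1-1/o)$-generic cover: the fraction of $Z$'s containing $h(v)$ is at least $1-1/o$, the formula is rewritten as $\varphi'$ with an interpretation $\II_Y$ computable on $G[Y]$, and Courcelle (Theorem~\ref{thm-optmsol}) then applies. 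Note also that the paper optimizes over $I$-tuples of subsets of the $h$-center of $Y$ (the vertices whose whole shroud lies in $Y$), not over all subsets of $Y$; that restriction is precisely what makes the equivalence $G,A_I\models\varphi\iff G[Y],\II_Y,A_I\models\varphi'$ hold, and it is also what the averaging is run against. Without a shroud-style locality statement, your approach leaves the core reduction unproven.
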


Efficiently fractionally treewidth-fragile classes include many of the known graph classes with strongly sublinear separators, in particular
\begin{itemize}
\item all hereditary classes with sublinear separators and bounded maximum degree~\cite{twd},
\item all proper minor-closed classes, as an easy consequence of the result of DeVos et al.~\cite{devospart}
(or~\cite{bakergame} without using the Robertson-Seymour structure theorem), and
\item many geometric graph classes, such as the intersection graphs of convex sets with bounded aspect ratio in a fixed Euclidean space
that have bounded clique number (as can be seen using the idea of~\cite{erlebach2005polynomial}).
\end{itemize}
Indeed, it is possible (and I have conjectured) that all classes with sublinear separators are fractionally treewidth-fragile.

Let us finish the introduction by giving two natural open questions.
Our results only apply to maximization problems.  More precisely, the technique we use can be applied to minimization problems
(with \emph{monotone} meaning the supersets of valid solutions are also valid solutions)
as well, but the resulting algorithm have error bounded by a fraction of the total weight of all vertices,
rather than a fraction of the optimal solution weight.
\begin{problem}
Do monotone minimization problems expressible in the first order logic admit constant factor approximation
in all classes with bounded expansion?  And PTASes in all efficiently fractionally treewidth-fragile
graph classes?
\end{problem}
As a simplest example, we do not know whether there exists a PTAS for weighted vertex cover in fractionally treewidth-fragile graph classes.

Secondly, many results for classes of graphs with bounded expansion extend to nowhere-dense graph classes,
up to replacement of some constants by terms of order $n^{o(1)}$.  Our approach does not apply to this setting,
since it is based on a quantifier elimination result specific to graph classes with bounded expansion.
\begin{problem}
Do monotone maximization problems expressible in the first order logic admit an $O(n^{o(1)})$-factor approximation
for $n$-vertex graphs from nowhere-dense classes?
\end{problem}

\subsection{Proof outline}

The proofs of all our results are based on three ingredients:

\textbf{(1) A strong locality result for first-order properties in graphs from
classes with bounded expansion}, proved using a modification of the quantifier elimination procedure of~\cite{dvorak2013testing}.
To state the result, we need a few more definitions.  A \emph{simple signature} $\sigma$ is a set of unary predicate and function symbols.
For a finite index set $I$ and a system $X_I$ of unary predicate symbols disjoint from $\sigma$, a
\emph{first-order graph $(I,\sigma)$-formula} is a formula $\varphi$ using all the ingredients from
the definition of a first-order graph $I$-formula and additionally the predicates and unary functions from $\sigma$.
For a graph $G$, a unary function $f:V(G)\to V(G)$ is \emph{guarded by $G$} if for each $v\in V(G)$, either $f(v)=v$ or
$v$ is adjacent to $f(v)$ in $G$.  A \emph{$G$-interpretation} $\II$ of $\sigma$ assigns to each unary predicate symbol $P$ a subset $P_\II$ of vertices of $G$
and to each unary function symbol $f$ a unary function $f_\II$ guarded by $G$.  
For a positive integer $s$ and a graph $G$, a function $h:V(G)\to 2^{V(G)}$ is an \emph{$s$-shroud} if for each $v\in V(G)$,
$|h(v)|\le s$ and $v\in h(v)$.  An \emph{$h$-center} of a set $Y\subseteq V(G)$ is the set $\{v\in Y:h(v)\subseteq Y\}$.

\begin{theorem}\label{thm-local}
Let $I$ be a finite index set, let $\varphi$ be a first-order graph $I$-sentence,
and let $\GG$ be a class of graphs with bounded expansion.
There exists a constant $s$, a simple signature $\sigma$ disjoint from all symbols appearing in $\varphi$,
and a first-order graph $(I,\sigma)$-sentence $\varphi'$ such that the following claim holds.

Given a graph $G\in \GG$, we can in linear time find an $s$-shroud $h$ with the following property: For any $Y\subseteq V(G)$,
we can in linear time find a $G[Y]$-interpretation $\II_Y$ of $\sigma$ for which every $I$-tuple $A_I$ of subsets of
the $h$-center of $Y$ satisfies
$$G,A_I\models \varphi\text{ if and only if }G[Y],\II_Y,A_I\models \varphi'.$$
\end{theorem}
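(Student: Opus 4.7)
The plan is to refine the quantifier elimination procedure of~\cite{dvorak2013testing} so that it simultaneously tracks the locality of each eliminated quantifier, ultimately yielding a formula that can be evaluated on any induced subgraph $G[Y]$ with an appropriate interpretation of auxiliary symbols. The key observation is that in bounded expansion classes one has transitive fraternal augmentations of bounded in-degree, which allow every existential quantifier to be replaced by a bounded choice among guarded unary functions; the locality enhancement I aim for is that the Skolem-like functions produced in the process have values determined by a bounded-size local context around each vertex, so their values can be precomputed and then restricted to $Y$.

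First, I would apply the quantifier elimination of~\cite{dvorak2013testing} to $\varphi$. Using a transitive fraternal augmentation $G^+$ of $G$ of bounded in-degree (available since $\GG$ has bounded expansion), this yields a simple signature $\sigma_0$ of unary predicates and unary functions guarded by $G^+$, together with a sentence $\varphi''$ over $\sigma_0\cup\{X_i:i\in I\}$ and an interpretation of $\sigma_0$ on $G$ such that for every $A_I$, $G,A_I\models\varphi$ iff the augmented structure satisfies $\varphi''$. Since every $G^+$-guarded function $f$ has $f(v)$ within distance at most a constant $r$ from $v$ in $G$, I would then split $f$ into a composition of $r$ adjacent ($G$-guarded) functions along a fixed shortest path and rewrite $\varphi''$ accordingly, producing the final signature $\sigma$, an interpretation $\II_G$ of $\sigma$ on $G$, and a sentence $\varphi'$ over $\sigma\cup\{X_i:i\in I\}$. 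The entire construction and the computation of $\II_G$ run in linear time by the standard bounded-expansion algorithmics.

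Next, I would define $h(v)$ as the set of vertices reachable from $v$ by at most a bounded number of applications of the functions in $\sigma$ (under $\II_G$), where the bound is chosen so that the evaluation of $\varphi'$ starting from $v$ accesses only vertices in $h(v)$; the bounded in-degree of $G^+$ and bounded nesting depth of function applications yield $|h(v)|\le s$ for a constant $s$. Given $Y\subseteq V(G)$, define $\II_Y$ on $G[Y]$ by restricting each predicate to $Y$ and setting $f_{\II_Y}(v)=f_{\II_G}(v)$ whenever $f_{\II_G}(v)\in Y$, and $f_{\II_Y}(v)=v$ otherwise; this preserves $G[Y]$-guardedness and is linear-time computable.

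The main obstacle is establishing the equivalence $G,A_I\models\varphi\iff G[Y],\II_Y,A_I\models\varphi'$ when $A_I$ lies in the $h$-center of $Y$. The naive form of $\varphi'$ has quantifiers ranging over all of $V(G[Y])=Y$, not just the $h$-center, and a priori its truth could depend on $\II_G$-values outside $Y$. To overcome this, I would arrange the elimination step so that at each level the Skolem-like witnesses are drawn only from the $h$-neighborhood of a vertex already committed to, with every chain of such witnesses ultimately rooted at a vertex of $\bigcup_i A_i$; the $h$-center condition then guarantees that all relevant function chains stay inside $Y$, so $\II_Y$ and $\II_G$ agree along them. Given this invariant, the required equivalence follows by a direct induction on the structure of $\varphi'$.
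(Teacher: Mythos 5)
There is a genuine gap at the very first step. You take the quantifier elimination of~\cite{dvorak2013testing} as a black box, obtaining a signature $\sigma_0$, a sentence $\varphi''$, and \emph{an interpretation of $\sigma_0$ on $G$} that is then fixed once and for all, with the Skolem-like functions and predicates depending only on $G$. But this is not what the elimination produces in the presence of the free set predicates $X_I$. To eliminate a quantifier in a subformula that mentions $X_I$ (say $(\exists y)\,(E(x,y)\land X_1(y))$), the replacement predicate must encode something like ``$x$ has a neighbor in $A_1$'', and its interpretation necessarily depends on $A_I$. So there is no single $\II_G$ that works for all $A_I$; the elimination-derived structure changes as $A_I$ changes. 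The paper acknowledges exactly this: ``we need to be more explicit in terms of how the newly introduced functions and predicates are defined and in particular, how they depend on the set system whose weight we are maximizing.'' The entire counter/trigger machinery (Theorem~\ref{thm-elim}, Lemmas~\ref{lemma-shroud}, \ref{lemma-delvers}, \ref{lemma-uncount}) exists to make this $A_I$-dependence explicit, local, and ultimately re-expressible by quantification inside $\varphi'$ rather than by an $A_I$-dependent interpretation.

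Your last paragraph gestures at the issue (``a priori its truth could depend on $\II_G$-values outside $Y$'') and proposes to ``arrange the elimination step so that at each level the Skolem-like witnesses are drawn only from the $h$-neighborhood of a vertex already committed to, with every chain of such witnesses ultimately rooted at a vertex of $\bigcup_i A_i$.'' This is precisely the hard part, and the sketch supplies no mechanism for it. In particular, witnesses for universally quantified variables and for vertices whose predicate values aggregate over many $A_I$-members are \emph{not} rooted at $\bigcup_i A_i$; in the paper this is handled by counters whose triggers aggregate contributions along a bounded-depth function chain, so that Lemma~\ref{lemma-shroud} can certify that changing $A_I$ only inside the $h$-center of $Y$ leaves all counter values outside $Y$ intact, and Lemma~\ref{lemma-delvers} then lets the external contributions be absorbed into a fixed census. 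Finally, note the paper's $\varphi'$ is not quantifier-free: the counter comparisons must be re-expanded into first-order quantification over $Y$ (Lemma~\ref{lemma-uncount}). Your outline seems to aim at a quantifier-free $\varphi'$, which is both unnecessary (the theorem only asks for a first-order sentence) and, given the above, unattainable by the route proposed.
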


That is, for the $I$-tuples of subsets of the $h$-center of $Y$, we can evaluate whether they have the property $\pi$
(in the whole graph $G$) just by looking at the induced subgraph $G[Y]$ enhanced by $\II_Y$.  Note that
Theorem~\ref{thm-local} straightforwardly extends to solution-restricted MSOL $I$-sentences 
$(Q_1X_1\subseteq \bigcup_{i\in I} X_i)\ldots (Q_nX_n\subseteq \bigcup_{i\in I} X_i)\;\psi$
with first-order graph core, since if $X_I$ is interpreted as an $I$-tuple $A_I$ of subsets of the $h$-center of $Y$,
then $X_1, \ldots, X_n$ also correspond to subsets of the $h$-center of $Y$.

\begin{corollary}\label{cor-local}
Let $I$ be a finite index set, let $\varphi$ be a solution-restricted MSOL $I$-sentence with first-order graph core,
and let $\GG$ be a class of graphs with bounded expansion.
There exists a constant $s$, a simple signature $\sigma$ disjoint from all symbols appearing in $\varphi$,
and a solution-restricted MSOL $(I,\sigma)$-sentence $\varphi'$ with first-order graph core such that the following claim holds.

Given a graph $G\in \GG$, we can in linear time find an $s$-shroud $h$ with the following property: For any $Y\subseteq V(G)$,
we can in linear time find a $G[Y]$-interpretation $\II_Y$ of $\sigma$ for which every $I$-tuple $A_I$ of subsets of
the $h$-center of $Y$ satisfies
$$G,A_I\models \varphi\text{ if and only if }G[Y],\II_Y,A_I\models \varphi'.$$
\end{corollary}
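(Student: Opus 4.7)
The plan is to derive Corollary~\ref{cor-local} directly from Theorem~\ref{thm-local} by applying the first-order locality result to the core of the MSOL sentence, and then observing that the outer block of set quantifiers transfers without extra work. The hint in the paragraph just before the corollary already pinpoints the reason: every witness for a quantifier $Q_j X_j \subseteq \bigcup_{i\in I} X_i$ is, by the very form of the restriction, a subset of $\bigcup_{i\in I} A_i$, and this union sits inside the $h$-center of $Y$ whenever $A_I$ does. So Theorem~\ref{thm-local} applies uniformly to every such witness.

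Concretely, I would write $\varphi = (Q_1 X_1 \subseteq \bigcup_{i\in I} X_i) \ldots (Q_n X_n \subseteq \bigcup_{i\in I} X_i)\,\psi$ with $\psi$ a first-order graph $(I \cup \{1,\ldots,n\})$-sentence, and then feed $\psi$ into Theorem~\ref{thm-local} using the enlarged index set $I \cup \{1,\ldots,n\}$. This yields a constant $s$, a simple signature $\sigma$, a first-order graph $(I\cup\{1,\ldots,n\}, \sigma)$-sentence $\psi'$, a linear-time construction of the $s$-shroud $h$, and linear-time constructions of the interpretations $\II_Y$. The output sentence is then
$$\varphi' \;:=\; (Q_1 X_1 \subseteq \bigcup_{i\in I} X_i) \ldots (Q_n X_n \subseteq \bigcup_{i\in I} X_i)\,\psi',$$
which by inspection is a solution-restricted MSOL $(I,\sigma)$-sentence with first-order graph core; the shroud and the interpretations are reused unchanged.

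The equivalence follows by a straightforward induction on $n$, or equivalently by matching the two evaluation games. Fix $Y$ and an $I$-tuple $A_I$ of subsets of the $h$-center of $Y$. Both $G,A_I \models \varphi$ and $G[Y],\II_Y,A_I \models \varphi'$ are decided by a game in which sets $B_1,\ldots,B_n$ with $B_j \subseteq \bigcup_{i\in I} A_i$ are chosen in the order prescribed by the quantifier prefix, after which the payoff is determined by the core. Since $\bigcup_{i\in I} A_i$ lies inside the $h$-center of $Y$, every choice $(B_1,\ldots,B_n)$ produces an $(I\cup\{1,\ldots,n\})$-tuple of subsets of the $h$-center of $Y$, and Theorem~\ref{thm-local} gives the equivalence $G \models \psi[A_I,B_1,\ldots,B_n]$ iff $G[Y],\II_Y \models \psi'[A_I,B_1,\ldots,B_n]$. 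The move sets and payoffs of the two games thus coincide, so their values agree.

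I do not expect a genuine obstacle here: the restriction $X_j \subseteq \bigcup_{i\in I} X_i$ in the definition of solution-restricted MSOL is engineered precisely so that the outer quantifiers stay within the $h$-center of $Y$, making Theorem~\ref{thm-local} applicable off the shelf. The only minor bookkeeping is to keep $\sigma$ disjoint from both $I$ and $\{1,\ldots,n\}$, which costs nothing since Theorem~\ref{thm-local} chooses $\sigma$ only after the input sentence has been fixed.
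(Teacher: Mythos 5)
Your proof is correct and takes essentially the same approach as the paper: apply Theorem~\ref{thm-local} to the first-order core $\psi$ with the enlarged index set $I\cup\{1,\ldots,n\}$, and then observe that the restriction $X_j\subseteq\bigcup_{i\in I}X_i$ forces every witness for a set quantifier to lie inside the $h$-center of $Y$ whenever $A_I$ does, so the outer quantifier block transfers verbatim. The paper states this in a single sentence before the corollary; your write-up is just a more explicit version of the same argument.
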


\textbf{(2) The existence of sufficiently generic covers.}  For efficiently fractionally
treewidth-fragile classes, we have them by definition.  For classes with bounded expansion, we use covers
obtained from \emph{low-treedepth colorings}.  A \emph{rooted forest} $F$ is an acyclic graph with a specified \emph{root}
vertex in each component.  The \emph{depth} of $F$ is the number of vertices
on the longest path from a root to a leaf.  If the path in $F$ from a root to a vertex $v$ contains a vertex $u$,
we say that $u$ is an \emph{ancestor} of $v$ and $v$ is a \emph{descendant} of $u$.  The \emph{closure} of $F$ is the graph with
the vertex set $V(F)$ where each vertex is adjacent exactly to its ancestors and descendants in $F$.  The \emph{treedepth}
of a graph $H$ is the minimum $d$ such that $H$ is a subgraph of the closure of a rooted forest of depth $d$.
A graph of treedepth $d$ is known to have treewidth (in fact, even pathwidth) smaller than $d$~\cite{nesbook}.
For a positive integer $s$, a \emph{treedepth-$s$ coloring} of a graph $G$ is a coloring such that the union of
every $s$ color classes induces a subgraph of treewidth at most $s$.  Ne\v{s}et\v{r}il and Ossona de Mendez~\cite{grad2}
proved the following claim.
\begin{theorem}[Ne\v{s}et\v{r}il and Ossona de Mendez~\cite{grad2}]\label{thm-lowtw}
For every class $\GG$ of graphs with bounded expansion and every positive integer $s$, there exists an integer $a$ and a linear-time
algorithm that given a graph $G\in \GG$ returns a treedepth-$s$ coloring of $G$ using at most $a$ colors.
\end{theorem}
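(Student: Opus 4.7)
The plan is to follow the transitive fraternal augmentation strategy of Nešetřil and Ossona de Mendez. The backbone fact we will use is the alternative characterization of bounded expansion: a class $\GG$ has bounded expansion if and only if for every integer $r\ge 0$, the average degree of every $r$-shallow topological minor of a graph in $\GG$ is bounded by a constant $d_r$ depending only on $\GG$ and $r$. From this one obtains, in particular, that every subgraph of a graph $G\in\GG$ has bounded degeneracy, so that $G$ admits in linear time an acyclic orientation of bounded maximum in-degree.

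Starting from $G\in\GG$, I would construct a sequence of augmented graphs $G=G_0\subseteq G_1\subseteq\cdots\subseteq G_p$, where each $G_{i+1}$ is obtained from a bounded in-degree orientation of $G_i$ by adding, for every pair of vertices with a common out-neighbor in $G_i$ (the fraternal edges) and for every directed path of length two (the transitive edges), the corresponding edge. The key technical lemma to establish is that $G_i$ is a depth-$O(2^i)$ minor of $G$, so that bounded expansion of $\GG$ transfers into a uniform bound $d'_i$ on the degeneracy of $G_i$; hence each augmentation round takes linear time and produces a graph that is again boundedly degenerate. Iterating $p$ rounds, with $p$ chosen as a function of $s$, yields $G_p$ of bounded degeneracy, which we then properly color greedily with $a:=a(\GG,s)$ colors in linear time; this is the coloring returned by the algorithm.

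It remains to show that any $s$ color classes $C_{i_1},\ldots,C_{i_s}$ induce in $G$ a subgraph $H$ of treewidth (in fact, treedepth) at most $s$. For this I would argue that, after sufficiently many augmentation rounds, the augmentation encodes every short distance: two vertices $u,v\in V(H)$ with $\mathrm{dist}_H(u,v)\le 2^p$ are joined in $G_p$. Hence a proper coloring of $G_p$ using at most $a$ colors forbids, within any $s$ chosen classes, any short path with all distinct colors on consecutive vertices, which can be leveraged to build a depth-$s$ elimination forest for $H$ by recursively eliminating a color class as roots at each level. Since treedepth $d$ implies treewidth at most $d-1$, the required treewidth bound follows.

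The main obstacle is the double bookkeeping: one must simultaneously control the depth at which the augmented edges correspond to minors of $G$ (to invoke bounded expansion for the degeneracy bound of $G_i$) and certify that after $p$ rounds the augmentation captures all length-$\le s$ paths inside the chosen classes (to obtain the treedepth bound). Choosing $p$ large enough for the second while keeping it a function of $s$ only, so that the first remains controlled by $\GG$, is precisely what makes the parameter $a$ in the statement depend only on $\GG$ and $s$, and it is where the subtlety of the argument of Nešetřil and Ossona de Mendez concentrates.
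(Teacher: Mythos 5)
The theorem is cited from Nešetřil and Ossona de Mendez~\cite{grad2} without a proof in the paper, so the relevant comparison is against the standard proof via transitive fraternal augmentations, which is indeed the strategy you outline. Your high-level plan is the right one, but two of the concrete claims you rely on are not correct as stated, and the second one is a real gap.

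First, the assertion that ``$G_i$ is a depth-$O(2^i)$ minor of $G$'' cannot hold: $G_i$ has edges that are not in $G$, so it is neither a subgraph nor a (shallow) minor of $G$. What is true, and what the argument needs, is that the grads are controlled: $\nabla_r(G_{i+1})$ is bounded by a function of $\nabla_{g(r)}(G_i)$ and the in-degree bound of the chosen orientation, because a depth-$r$ minor of $G_{i+1}$ can be pulled back to a somewhat deeper minor of $G_i$ by reinstating, for each augmented edge, the length-two path witnessing it. This is a statement about grads, not about $G_i$ being a minor of $G$, and the distinction matters: it is the reason one must keep track of the orientation's in-degree bound at every round, not just the class $\GG$.

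Second, and more seriously, the claim that ``two vertices $u,v\in V(H)$ with $\mathrm{dist}_H(u,v)\le 2^p$ are joined in $G_p$'' is false. Transitive and fraternal augmentations add only those edges that arise from the particular orientation chosen at each round (common out-neighbour, or directed length-two path); they do not add all pairs at bounded distance. For example, depending on how a long path is reoriented at each stage, two vertices at distance four need not become adjacent after two rounds. Consequently, the ``short distances are encoded'' shortcut does not give you the elimination forest. The actual mechanism in the cited proof is that a proper coloring of the augmented graph $G_p$ is a \emph{$p$-centered coloring} of $G$ (every connected subgraph either receives at least $p$ colors or has some color appearing exactly once); this is established by an induction that works directly with the sequence of orientations $\vec{G}_0,\ldots,\vec{G}_{p-1}$, exhibiting in each connected subgraph with few colors a vertex whose removal disconnects the rest at lower ``level''. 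The passage from $p$-centered colorings to the treedepth bound is then a separate, elementary induction. Replacing your distance claim by the centered-coloring argument is the missing piece.
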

By considering the cover consisting of all $\binom{a}{s}$ unions of $s$-tuples of color classes, we obtain the following claim.
\begin{corollary}\label{cor-lowtd}
For every class $\GG$ of graphs with bounded expansion and every positive integer $s$, there exists a positive integer $c$ and a linear-time
that given a graph $G\in \GG$ returns an $(s,1/c)$-generic cover of size $c$ and treedepth at most $s$.
\end{corollary}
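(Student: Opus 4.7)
The plan is to directly instantiate the cover suggested in the paragraph just before the statement. First I would invoke Theorem~\ref{thm-lowtw} with parameter $s$ to obtain, in linear time, a treedepth-$s$ coloring of $G$ using $a$ colors, where $a$ depends only on $\GG$ and $s$. Fix an enumeration of the color classes $V_1,\ldots,V_a$ and set $c:=\binom{a}{s}$. I would then take $\ZZ$ to consist of the $c$ sets of the form $\bigcup_{j\in U} V_j$ over all $s$-element subsets $U\subseteq\{1,\ldots,a\}$. Since $c$ is a constant (depending only on $\GG$ and $s$), writing out the $c$ sets from the coloring costs $O(c\cdot |V(G)|)=O(|V(G)|)$ time, which handles the algorithmic requirement.

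Next I would verify the two structural conditions. The treedepth bound on each $Z\in\ZZ$ is immediate from the definition of a treedepth-$s$ coloring: $G[Z]$ is induced by the union of exactly $s$ color classes, so its treedepth is at most $s$. For the $(s,1/c)$-genericity, consider an arbitrary $S\subseteq V(G)$ with $|S|\le s$, and let $T\subseteq\{1,\ldots,a\}$ be the set of colors appearing on $S$, so $|T|\le s$. Every $s$-element $U\subseteq\{1,\ldots,a\}$ with $T\subseteq U$ yields a set $Z\in\ZZ$ that contains $S$, and the number of such $U$ equals $\binom{a-|T|}{s-|T|}\ge 1 = |\ZZ|/c$ (using $|T|\le s\le a$). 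Hence at least a $1/c$ fraction of the sets in $\ZZ$ cover $S$, which is exactly the genericity condition.

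There is essentially no obstacle here: all the difficult work is packaged inside Theorem~\ref{thm-lowtw}. The only point requiring any care is the genericity count above, which reduces to a one-line binomial inequality.
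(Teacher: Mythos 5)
Your proof is correct and is exactly the argument the paper intends: the paper's "proof" is just the one-line remark preceding the statement ("By considering the cover consisting of all $\binom{a}{s}$ unions of $s$-tuples of color classes, we obtain the following claim"), and you have fleshed it out with the correct genericity count. The only trivial omission is that one should pad the coloring with empty color classes to ensure $a\ge s$ so that $\binom{a-|T|}{s-|T|}\ge 1$ always holds; also note the paper's definition of treedepth-$s$ coloring contains a typo (it says ``treewidth'' where it should say ``treedepth''), which you have correctly read through.
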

For classes with strongly sublinear separators, in~\cite{onsubsep} I proved they are ``almost'' fractionally treewidth-fragile,
in the following sense (again, with a somewhat different notation, see the footnote at the definition of fractional treewidth-fragility).
\begin{theorem}[Dvořák~\cite{onsubsep}]\label{thm-pllcov}
For every class $\GG$ with strongly sublinear separators, there exists a polynomial $f:\mathbb{Z}^+\to\mathbb{Z}^+$
and a polynomial-time algorithm that, for every $G\in\GG$ and positive integers $s$ and $o$,
returns an $(s,1-1/o)$-generic cover of $G$ of treewidth at most $f(os\log |V(G)|)$.
Moreover, the algorithm also returns the corresponding tree decomposition for each element of the cover.
\end{theorem}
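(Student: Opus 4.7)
\textbf{Proof plan for Theorem~\ref{thm-pllcov}.} The plan is to exploit the separator hypothesis recursively: build a balanced separator decomposition of $G$, and construct the cover by a shift-based deletion pattern on the depth of this decomposition, so that the surviving graph in each cover element decomposes into pieces sitting inside small-depth subtrees of the decomposition.

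I first iterate the separator algorithm on $G$ to obtain a rooted separator tree $T$: starting with $G$, each node $u$ is labelled with a subproblem $H_u$ and a $(2/3)$-balanced separator $X_u$ of $H_u$ with $|X_u|\le c|V(H_u)|^{1-\beta}$, and the children of $u$ correspond to the components of $H_u-X_u$. The balance ensures $T$ has depth $D=O(\log|V(G)|)$, and every vertex of $G$ appears in exactly one $X_u$, yielding a function $\mathrm{depth}(v)\in\{0,\ldots,D\}$. The standard tree decomposition of $G$ associated with $T$ has, as its bag at $u$, the union of separators on the root-to-$u$ path. Setting $t=os$, I then define $Z_r=V(G)\setminus\{v:\mathrm{depth}(v)\equiv r\pmod{t}\}$ for each $r\in\{0,\ldots,t-1\}$ and take $\ZZ=\{Z_0,\ldots,Z_{t-1}\}$. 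Each vertex lies in exactly $t-1$ of the $t$ sets, so $\ZZ$ is a $(1,1-1/t)$-generic cover; by the footnote observation (a $(1,1-1/(os))$-generic cover is automatically $(s,1-1/o)$-generic), the required genericity follows. Removing the depth-$\equiv r$ vertices cuts $T$ into subtrees of depth less than $t$, $G[Z_r]$ is the disjoint union of their induced subgraphs, and its tree decomposition is obtained by restricting the one of $G$ to these subtrees.

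The main obstacle is to prove the desired treewidth bound $f(os\log|V(G)|)$: naively bounding the bag at the root of such a subtree gives only $O(|V(G)|^{1-\beta})$, far from polylogarithmic in $|V(G)|$. The step I expect to be the hardest is to refine the construction -- roughly, by recursively re-applying the sublinear-separator property inside the ``shallow'' subtrees of $T$, using that after $\Theta(\log(|V(G)|/(os)))$ levels the remaining subproblems have size only $\poly(os\log|V(G)|)$, and folding these inner decompositions into the outer one while ensuring each vertex is still deleted in only a $1/t$ fraction of cover elements. Once this technical piece is handled, the whole algorithm runs in polynomial time: $T$ and its associated decomposition are built by polynomially many calls to the separator algorithm for $\GG$, and the cover sets and their tree decompositions are then read off $T$ in linear additional time.
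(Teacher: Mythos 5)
Note first that the paper does \emph{not} prove this theorem: it is cited from the earlier paper~\cite{onsubsep} and used as a black box, so there is no internal argument to compare your plan against.

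Evaluating the plan on its own merits, the central step is missing, and you say so yourself. Depth-modular deletion in a $(2/3)$-balanced separator tree gives the right genericity but the wrong treewidth. After removing all vertices at depths $\equiv r\pmod{os}$, a component of $G[Z_r]$ corresponds to a subtree of the separator tree rooted just below a deleted level $d$, and the bags near its top still contain separators of subproblems of size $\Theta((2/3)^{d}n)$; for small $d$ that is $\Theta(n^{1-\beta})$, nowhere near $\poly(os\log n)$. The refinement you sketch---re-applying separators inside the shallow subtrees---does not fix this, because the offending large bags sit at the \emph{top} of each shallow subtree, and refining the decomposition below them cannot shrink a bag that must still separate a linear-size part of the graph from the rest. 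The scaling is also slightly off: after $\Theta(\log(n/(os)))$ levels the subproblems have size $O(os)$, not $\poly(os\log n)$, so the "fold the inner decompositions in" step is not even well-posed as written. As it stands, your construction yields an $(s,1-1/o)$-generic cover of treewidth $\Theta(n^{1-\beta})$, which is a different and much weaker statement. Getting the claimed $\poly(os\log n)$ bound requires a genuinely different mechanism for choosing which vertices to delete in each cover element (for instance, iterating with vertex weights that account for separator multiplicities so that vertices which keep landing in large separators are removed with higher priority); this is the actual content of the cited theorem, not a technical detail that can be deferred.
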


\textbf{(3) The means to solve the problem on graphs of bounded treewidth.}  For Theorems~\ref{thm-msolcap} and \ref{thm-msolptas},
we use the well-known result of Courcelle~\cite{courcelle}, in the following optimization version (note that we do not need
to be given a tree decomposition, as for graphs of bounded treewidth, we can find an optimal tree decomposition in linear time~\cite{bodlaender1996linear}).

\begin{theorem}\label{thm-optmsol}
Let $I$ be a finite index set and let $\varphi$ be a MSOL graph formula with free variables $X_I$,
expressing a property $\pi$.  For any positive integer $t$, there exists a linear-time algorithm that, given
\begin{itemize}
\item a graph $G$ of treewidth at most $t$,
\item a set $X\subseteq V(G)$, and
\item a weight assignment $w:V(G)\times 2^I\to\mathbb{Z}$,
\end{itemize}
returns an $I$-tuple $A_I$ of subsets of $X$ satisfying the property $\pi$ such that
$$w(A_I)=\mathrm{MAX}(\pi,w).$$
\end{theorem}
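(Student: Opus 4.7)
The plan is to reduce Theorem~\ref{thm-optmsol} to the standard optimization version of Courcelle's theorem for MSOL on graphs of bounded treewidth. First, apply Bodlaender's linear-time algorithm~\cite{bodlaender1996linear} to compute a tree decomposition of $G$ of width at most $t$; no such decomposition is provided in the input, but once we have one of constant width, the remaining work is a dynamic program over it.

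Next, I would absorb the restriction $A_i\subseteq X$ into the formula. Treat $X$ as a fresh unary predicate interpreted by the given set, and replace $\varphi$ by
$$\varphi^\star \;:=\; \varphi \;\land\; \bigwedge_{i\in I}\forall v\,(X_i(v)\Rightarrow X(v)).$$
Then $I$-tuples $A_I$ of subsets of $X$ satisfying $\pi$ are exactly the $I$-tuples for which $G, X, A_I\models\varphi^\star$. For the objective, observe that since $I$ is fixed, the set $2^I$ has constant size, so the weight assignment $w$ is a tuple of $2^{|I|}$ vertex-weight functions $w_S:V(G)\to\mathbb{Z}$ (one for each $S\subseteq I$), and
$$w(A_I)=\sum_{v\in V(G)} w_{\chi_{A_I}(v)}(v)$$
is a sum over vertices of contributions determined by the constant-size "type" $\chi_{A_I}(v)$ of each vertex with respect to the free set variables $X_I$.

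Given these two ingredients, the statement matches precisely the extended/optimization MSOL metatheorem: maximize $w(A_I)$ over interpretations of the free monadic variables $X_I$ subject to an MSOL constraint, on graphs of bounded treewidth. This is handled by the standard bag-by-bag dynamic programming, where for each bag $B$ of the tree decomposition, and for each equivalence class (under the Myhill--Nerode relation induced by $\varphi^\star$) of partial $X_I$-assignments to the vertices already processed, we store the maximum cumulative weight; the per-vertex term $w_{\chi_{A_I}(v)}(v)$ is added exactly when $v$ leaves the bag and its characteristic $\chi_{A_I}(v)$ becomes fixed. The number of types and the time to update them depend only on $\varphi$ and $t$, both constants, yielding overall linear time.

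The only obstacle is ensuring we really cite this in its full optimization form (per-vertex weights indexed by the characteristic of $v$ with respect to free monadic variables), since some classical formulations phrase Courcelle's theorem only for the cardinality objective. The version with vertex-indexed "label weights" is the form stated by Courcelle and Mosbah and used routinely for weighted MSOL optimization; invoking it yields the theorem directly. After that, the reconstruction of an optimal $A_I$ is the standard backtracking through the DP table, still in linear time.
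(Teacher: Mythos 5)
Correct, and in essence the same route the paper takes: the paper does not give a proof of Theorem~\ref{thm-optmsol} at all, simply citing Courcelle's theorem (in its optimization form) together with Bodlaender's linear-time tree-decomposition algorithm, and your write-up spells out exactly the standard reduction that justifies that citation — treating $X$ as a fresh unary predicate conjoined via $\bigwedge_{i\in I}\forall v\,(X_i(v)\Rightarrow X(v))$, observing that $w(A_I)$ decomposes into per-vertex contributions indexed by $\chi_{A_I}(v)\in 2^I$, and invoking the Courcelle--Mosbah weighted MSOL dynamic program with backtracking.
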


This is not sufficient for the proof of Theorem~\ref{thm-qptas}, where we work with covers of polylogarithmic treewidth,
and thus we need a better control over the dependence of the time complexity on the treewidth.  We use the following result
proved using the locality property underlying Theorem~\ref{thm-local}; we believe this result to be of independent interest.
\begin{theorem}\label{thm-optfo}
Let $I$ be a finite index set and let $\varphi$ be a first-order graph $I$-sentence
expressing a property $\pi$.  For any class $\GG$ with bounded expansion, there exists
a constant $c>0$ and an algorithm that, given
\begin{itemize}
\item a graph $G\in \GG$,
\item a tree decomposition $\tau$ of $G$ with at most $|V(G)|$ nodes,
\item a set $X\subseteq V(G)$, and
\item a weight assignment $w:V(G)\times 2^I\to\mathbb{Z}_0$,
\end{itemize}
returns in time $O(\exp(ct)|V(G)|)$, where $t$ is the width of the decomposition $\tau$, an $I$-tuple $A_I$ of subsets of $X$ satisfying the property $\pi$ such that
$$w(A_I)=\mathrm{MAX}(\pi,w).$$
\end{theorem}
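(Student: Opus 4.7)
The plan is to first replace $\varphi$ by an equivalent sentence over an enriched structure whose subformulas are locally evaluable, and then run a bottom-up dynamic program on the given tree decomposition.

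Concretely, first apply Theorem~\ref{thm-local} with $Y=V(G)$: since $h(v)\subseteq V(G)$ holds for every vertex $v$, the $h$-center of $V(G)$ is $V(G)$ itself, and the theorem yields, in linear time, a simple signature $\sigma$, a $G$-interpretation $\II$ of $\sigma$, and a first-order graph $(I,\sigma)$-sentence $\varphi'$ such that
$$G,A_I\models\varphi\quad\iff\quad G,\II,A_I\models\varphi'$$
for every $I$-tuple $A_I$. Since $\sigma$ consists only of unary predicates and unary functions guarded by $G$, the Gaifman graph of the enriched structure $(G,\II)$ is contained in $G$, so $\tau$ is still a tree decomposition of $(G,\II)$ of width~$t$, and the task reduces to maximizing $w(A_I)$ over $I$-tuples $A_I$ of subsets of $X$ satisfying $(G,\II,A_I)\models\varphi'$.

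Next, convert $\tau$ to its standard refinement with introduce/forget/join nodes in linear time, and run a bottom-up dynamic program. At a node with bag $B$ of size at most $t+1$, the DP state is a pair $(\chi,\theta)$, where $\chi\colon B\to 2^I$ is a tentative assignment of bag vertices to the elements of $A_I$ (with $\chi(v)=\emptyset$ forced for $v\notin X$), and $\theta$ is a type recording, for every subformula of $\varphi'$ and every tuple of vertices from $B$, its truth value on $(G,\II,A_I)$ restricted to the already processed subtree. The stored value per state is the maximum weight of a partial assignment consistent with $(\chi,\theta)$; introduce/forget/join transitions follow the standard treewidth-DP pattern, and the optimum is read off at the root by keeping only the types that satisfy $\varphi'$.

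The main obstacle is bounding the number of relevant types $\theta$ by $\exp(O(t))$; a naive FO-type approach on a treewidth-$t$ graph gives a doubly-exponential (or worse) bound in $t$, which would be inadequate for the downstream use of Theorem~\ref{thm-qptas}, where $t$ is polylogarithmic. The simple-signature, guarded-functions shape of $\varphi'$ guaranteed by Theorem~\ref{thm-local} is what makes the improvement possible: every term of $\varphi'$ evaluated at a vertex $v$ lands in a bounded-radius $G$-neighborhood of $v$, so $\varphi'$ is locally evaluable and its relevant types per bag collapse to $\exp(O(t))$ equivalence classes, with the hidden constant depending only on $\varphi$ and $\GG$. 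Combined with the $2^{|I|(t+1)}=\exp(O(t))$ choices of $\chi$, constant-time transitions, and $O(|V(G)|)$ nodes of the refined $\tau$, this yields the running time $O(\exp(ct)\,|V(G)|)$.
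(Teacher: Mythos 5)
There is a genuine gap in the argument, and it is precisely the part you flagged yourself as the ``main obstacle.'' You reduce to a first-order $(I,\sigma)$-sentence $\varphi'$ via Theorem~\ref{thm-local} (with $Y=V(G)$), and then propose a DP whose state at a bag $B$ stores, as part of $\theta$, ``for every subformula of $\varphi'$ and every tuple of vertices from $B$, its truth value on $(G,\II,A_I)$ restricted to the already processed subtree.'' But $\varphi'$ is a \emph{quantified} first-order formula (Theorem~\ref{thm-local}'s proof reintroduces quantifiers by expanding $\gamma(x)\ge m$ and $\#\theta\ge m$ via Lemma~\ref{lemma-uncount}), so the ``truth value of a subformula restricted to the processed subtree'' is neither well-defined (an existential quantifier may only be witnessed outside the subtree) nor compositionally sufficient as a DP state without Hintikka-type information; and the usual way to make it sufficient gives a state space that is at least doubly exponential in $t$. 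You acknowledge this and then assert that the ``simple-signature, guarded-functions shape'' makes the relevant types collapse to $\exp(O(t))$. That assertion is exactly what needs a proof, and nothing in the proposal supplies it; bounded-radius term evaluation does not by itself control the number of FO types over a bag of size $t+1$.

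The paper sidesteps $\varphi'$ entirely. It invokes Theorem~\ref{thm-elim} directly, obtaining a \emph{quantifier-free global} sentence over a counter signature, and replaces your ``type $\theta$'' with a concrete, explicitly bounded state: for a separation $G=L_x\cup R_x$ along the tree decomposition, the equivalence relation $\equiv_{L_x,R_x}$ is refined by (i) the restriction of $A_I$ to the shroud $S=h_\II(V(L_x\cap R_x))\cap V(L_x)$ (of size $O(t)$), and (ii) a ``census/shadow'' $n$ that records, truncated at a fixed constant $M$, how many processed vertices satisfy each $x$-local $\theta$ and contribute to each counter at each boundary vertex (Lemma~\ref{lemma-numstates}, Corollary~\ref{cor-numstates}). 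This gives the $\exp(O(t))$ bound on states, with composability coming from Lemma~\ref{lemma-shroud} and Lemma~\ref{lemma-delvers}, not from FO-type arguments. To repair your proposal you would need to replace the vague ``types'' by this census/shroud state and work with the quantifier-free counter formula rather than with $\varphi'$; as written, the crucial $\exp(O(t))$ bound is claimed but not established.
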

Let us remark that the reason we are not able to generalize Theorem~\ref{thm-qptas} to solution-restricted MSOL sentences
with first-order graph core is that we cannot prove the analogue of Theorem~\ref{thm-optfo} in that setting.

We are now ready to prove our results.
\begin{proof}[Proof of Theorems~\ref{thm-msolcap}, \ref{thm-qptas} and \ref{thm-msolptas}]
Note that $\GG$ has bounded expansion:
\begin{itemize}
\item In the situation of Theorem~\ref{thm-msolcap}, this is an assumption.
\item In the situation of Theorem~\ref{thm-qptas}, this is the case since classes with strongly sublinear separators have polynomial expansion~\cite{dvorak2016strongly}.
\item In the situation of Theorem~\ref{thm-msolptas}, this is the case since every fractionally treewidth-fragile class has bounded expansion~\cite{twd}.
\end{itemize}
Let $s$, $\sigma$, and $\varphi'$ be obtained by applying Theorem~\ref{thm-local} to $\varphi$ and $\GG$.
Now, for the input graph $G$ and the weight assignment $w$ (and the precision $o$ in the case of Theorems~\ref{thm-qptas} and \ref{thm-msolptas}),
we apply the following algorithm:
\begin{itemize}
\item Let $\ZZ$ be an $(s,\delta)$-generic cover of treewidth at most $t$, where
\begin{itemize}
\item in the situation of Theorem~\ref{thm-msolcap}, $\ZZ$ is obtained using Corollary~\ref{cor-local},
$\delta=1/c$, and $t=s$;
\item in the situation of Theorem~\ref{thm-msolptas}, $\ZZ$ is obtained using Theorem~\ref{thm-pllcov}
for the given $o$ and $s$, $\delta=1-1/o$, and $t=f(os\log |V(G)|)$; and
\item in the situation of Theorem~\ref{thm-msolptas}, $\ZZ$ is obtained using the definition of
efficient fractional treewidth-fragility for the given $o$ and $s$, $\delta=1-1/o$, and $t=f(os)$.
\end{itemize}
\item Let $h$ be the $s$-shroud obtained using Theorem~\ref{thm-local}.
\item For each $Y\in \ZZ$:
\begin{itemize}
\item Let $\II_Y$ be the $G[Y]$-interpretation of $\sigma$ obtained using Theorem~\ref{thm-local}.
\item Let $Y'$ be the $h$-center of $Y$.
\item Using Theorem~\ref{thm-optmsol} or \ref{thm-optfo} in the bounded-treewidth graph $G[Y]$,
find an $I$-tuple $A_I^Y$ of subsets of $Y'$ satisfying $G[Y],\II_Y,A^Y_I\models \varphi'$ such that $w(A^Y_I)$ is maximum
possible.
\end{itemize}
\item Return the $I$-tuple $A_I^Y$ such that $w(A^Y_I)$ is maximum among all $Y\in\ZZ$.
\end{itemize}
Note that Theorem~\ref{thm-local} implies that $G,A^Y_I\models \varphi$, and thus it suffices to bound the
approximation ratio of the algorithm.

Let $A_I$ be an $I$-tuple of subsets of vertices of $G$ satisfying the property $\pi$ such that $w(A_I)$ is maximum.
Choose $Y\in \ZZ$ uniformly at random.  Since $h$ is an $s$-shroud and $\ZZ$ is an $(s,\delta)$-cover,
for each $v\in V(G)$, the probability that $h(v)\subseteq Y$ is at least $\delta$.
Hence, letting $A'_I=A_I\cap Y'$ (where $Y'$ is the $h$-center of $Y$)
the expected value of $w(A'_I)$ is at least $\delta\cdot w(A_I)$.
Since $\pi$ is monotone, we have $G,A'_I\models \varphi$, and by Theorem~\ref{thm-local},
$G,\II_Y,A'_I\models\varphi'$.  This implies $w(A^Y_I)\ge w(A'_I)$, and
thus the expected value of $w(A^Y_I)$ is also at least $\delta\cdot w(A_I)$.
Since we return the maximum over all elements of $\ZZ$, this implies the weight
of the returned set is at least $\delta\cdot w(A_I)$.
\end{proof}

\begin{proof}[Proof of Theorem~\ref{thm-exact}]
Note that since $\GG$ has strongly sublinear separators, it has bounded (in fact, polynomial) expansion~\cite{dvorak2016strongly}.

Using the algorithm of~\cite{feige}, we can for any $n$-vertex subgraph of $G$ in polynomial time find a balanced separator of size
$O(n^{1-\beta}\log^{1/2} n)$.  Using this algorithm, let us construct a tree decomposition $\tau$ of $G$ as follows:
\begin{itemize}
\item Find a balanced separator $S$ in $G$.
\item Recursively find a tree decomposition $\tau_C$ of each component $C$ of $G-S$.
\item Add a new root vertex adjacent to the root of $\tau_C$ for each each component $C$ of $G-S$,
and add $S$ to all bags including the root one.
\end{itemize}
This tree decomposition has width $t=O(|V(G)|^{1-\beta}\log^{1/2} |V(G)|)$.  We then apply the algorithm from Theorem~\ref{thm-optfo}
for this tree decomposition and $X=V(G)$.
\end{proof}

Theorems~\ref{thm-local} and \ref{thm-optfo} are proved in Section~\ref{sec-elim}.

\subsection{Bounded expansion}\label{sec-bexp}

The theory of bounded expansion and nowhere-density was developed chiefly by Nešetřil and Ossona de Mendez
in a series of papers~\cite{grad1,grad2,npom-nd1} to capture the notion of graph sparsity with respect to
the expressive power of the first-order logic.

For a non-negative integer $r$, an \emph{$r$-shallow minor} of a graph $G$
is any graph obtained from a subgraph of $G$ by contracting pairwise vertex-disjoint subgraphs of radius at most $r$.
A class of graphs $\GG$ has \emph{expansion bounded by a function $f:\mathbb{Z}_0^+\to\mathbb{Z}_0^+$}
if for every $r\ge 0$, every $r$-shallow minor of a graph belonging to $\GG$ has average degree at most $f(r)$.
We say that a class has \emph{bounded expansion} if it has expansion bounded by some function,
and \emph{polynomial expansion} if it has expansion bounded by a polynomial.

Many natural graph classes have bounded expansion, thus making it possible to treat them uniformly within this
framework.  For example, Dvořák and Norin~\cite{dvorak2016strongly} proved that a class of graphs has polynomial expansion
if and only if it has strongly sublinear separators.  This includes
\begin{itemize}
\item planar graphs~\cite{lt79}, and more generally all proper minor-closed classes~\cite{alon1990separator};
\item graphs drawn in the plane (or on a fixed surface) with a bounded number of crossings on each edge~\cite{osmenwood}; and
\item many geometric graph classes, such as the intersection graphs of convex sets with bounded aspect ratio in a fixed Euclidean space
that have bounded clique number, or nearest-neighbor graphs of point sets in a fixed Euclidean space~\cite{miller1997separators}.
\end{itemize}
Classes with bounded (but superpolynomial) expansion include
\begin{itemize}
\item Graph classes with bounded maximum degree, and more generally all graph classes
closed under topological minors~\cite{grad1};
\item graphs with bounded stack or queue number~\cite{osmenwood}; and,
\item almost all Erd\H{o}s-R\'enyi random graphs with linear number of edges~\cite{osmenwood}.
\end{itemize}

For a more in-depth introduction to the topic, the reader is referred to the book of Ne\v{s}et\v{r}il and Ossona de Mendez~\cite{nesbook}.
The classes with bounded expansion have found many applications in the design of parameterized
algorithms~\cite{dvorak2013testing,domker,amiri2017distributed,kreutzer2018polynomial,kerdom}.
Their applications in he context of approximation algorithms are discussed in the following section.

\subsection{Related work}\label{sec-previous}

Distance versions of both minimum dominating set and the maximum independent set are known to admit
constant-factor approximation algorithms in classes with bounded expansion~\cite{apxdomin,apxlp}.
A constant-factor approximation algorithm for weighted and distance version of the minimum dominating set 
also follows from~\cite{chan2012weighted} combined with the bounds on the neighborhood complexity
in classes with bounded expansion~\cite{neicom}.

There are many knonw techniques to obtain approximation schemes for specific
classes of graphs with strongly sublinear separators (polynomial expansion).
We illustrate the power of these techniques on variants of the maximum \textsc{independent set} problem:
The \textsc{distance-$r$ independent set} problem (parameterized by a fixed positive integer $r$),
where we require the distance between distinct vertices of the chosen set to be greater than $r$,
and the \textsc{weighted} version of the problem, where the input contains an assignment of weights to vertices
and we maximize the sum of weights of vertices in the set rather than the size of the set;
see Table~\ref{tab-compare} for a summary.
\begin{itemize}
\item Let us start with the techniques that apply to all classes with strongly sublinear separators.
Lipton and Tarjan~\cite{lt80} observed that for each $\varepsilon>0$,  one can split the input graph $G$
by iteratively deleting sublinear separators into components of size $\poly(1/\varepsilon)$,
where the resulting set $R$ of removed vertices has size at most $\varepsilon|V(G)|$.
One can then solve the problem in each component separately by brute force and obtain an approximation with the additive error
$\varepsilon|V(G)|$.  In addition to only giving an additive approximation bound, this technique is limited to
the problems for which a global solution can be obtained from the partial solutions in $G-R$; e.g., it does not apply to
the \textsc{distance-$2$ independent set} problem.  It also does not apply in the \textsc{weighted} setting.

Har-Peled and Quanrud~\cite{har2015approximation} proved that in any hereditary class with sublinear separators, a simple
local search approach (incrementally improving an initial solution by changes of bounded size) gives PTAS for a number
of natural optimization problems, including the \textsc{$r$-independent set} problem for any fixed $r\ge 1$ (this
is not explicitly stated in their paper, but it is easy to work out the argument).  On the other hand,
it is not clear which problems are amenable to this approach, and it fails even for some very simple problems
(e.g., finding the maximum monochromatic set in an edgeless graphs with vertices colored red and blue).
The technique does not apply in the weighted setting.
\item The property of \emph{fractional treewidth-fragility}, which as noted
above is satisfied by many classes with strongly sublinear separators, was developed as a way to extend
Baker's technique (discussed below) to more general graph classes.  While a direct application (solving the problem
separately in each subgraph induced by the cover) fails for the distance versions of the problems,
we have overcame this restriction in a joint work with Lahiri~\cite{dvorlah}.  The approach presented in this
paper can be seen as a substantial generalization in terms of the algorithmic problems to which
it applies (Dvořák and Lahiri~\cite{dvorlah} only consider problems expressible in terms of distances between
the solution vertices).  The approach works for weighted problems.  However, as a major restriction,
it generally only applies to maximization problems.
\item In~\cite{thin}, I made a rather technical attempt to improve upon the fractional treewidth-fragility,
by introducing a more powerful notion of \emph{thin systems of overlays}.  All hereditary classes with sublinear separators
and bounded maximum degree have this property, and so do all proper minor-closed classes.  Thin systems of overlays make
it possible to design PTAS for the \textsc{$r$-independent set} problem for any fixed $r\ge 1$, as well as for other problems defined in terms of distances
between the vertices in the solution (including minimization problems such as the distance version of the minimum dominating set).
However, the notion is not suitable for the problems where more complex relationships need to be considered.
\item Baker~\cite{baker1994approximation} designed a very powerful technique for planar graphs based on finding a partition
of the graph into layers (where the edges are allowed only within the layers and between consecutive layers)
such that the union of a bounded number of these layers induces a subgraph of bounded treewidth.
This is a substantially more restrictive condition than fractional treewidth-fragility (not even all proper minor-closed classes
have this property).  In a trade-off, the range of problems for which it applies is much wider.  In particular, it can deal with
all (maximization or minimization) problems expressible in monotone first-order logic~\cite{dawar2006approximation}, which includes all the discussed variants
of the \textsc{independent set} problem.  A modified version of this technique
(where the layering step is iterated and combined with removal of a bounded number of vertices) also can be used in less restricted
settings~\cite{bakergame}, including for example all proper minor-closed classes (but not all classes with sublinear separators).
\item The arguments based on bidimensionality~\cite{demaine2005bidimensionality} are rather powerful, but limited in scope to (subclasses of)
the proper minor-closed classes. With regards to the PTAS design, they essentially build on the Baker's technique framework.
\end{itemize}

\begin{table}
\begin{tabular}{|p{2cm}|p{3cm}||c|c|c|c|}
\hline
Technique&Applies to&\textsc{indep. set}&\textsc{$r$-indep. s.}&\textsc{weighted i.s.}&monot. FO\\
\hline
\hline
Iterated separators&all classes with sublinear separators&\checkmark&&&\\
\hline
Local search&all classes with sublinear separators&\checkmark&\checkmark&&\\
\hline
Fractional treewidth-fragility&bounded max. degree, proper minor closed, \ldots; maybe all?&\checkmark&\checkmark&\checkmark&\checkmark (this work)\\
\hline
Thin systems of overlays&bounded max. degree, proper minor closed, \ldots&\checkmark&\checkmark&\checkmark&\\
\hline
Baker's technique&proper minor closed, some geometric settings&\checkmark&\checkmark&\checkmark&\checkmark\\
\hline
\end{tabular}
\caption{PTAS design techniques in hereditary classes with sublinear separators.}\label{tab-compare}
\end{table}

\section{The quantifier elimination result and its applications}\label{sec-elim}

In~\cite{twd}, it has been shown that any first-order formula on a graph from a class with bounded expansion
can be transformed into an equivalent quantifier-free first-order formula on a graph from a (different) class with bounded expansion,
by introducing unary functions guarded by the resulting graph and new unary predicates, both computable in linear time.
Essentially the same procedure can be applied to first-order formulas describing set properties; however, for our application,
we need to be more explicit in terms of how the newly introduced functions and predicates are defined and in particular, how
they depend on the set system whose weight we are maximizing.

To this end, in addition to unary functions and predicates, we introduce \emph{counters}.  Semantically, a counter $\gamma$ assigns a non-negative
integer $\gamma(v)$ to each vertex $v$.  For each counter symbol $\gamma$, our formulas can use expressions of form $\gamma(x)\ge m$,
where $x$ is a term and $m$ is a positive integer, with the natural interpretation.
The symbols for counters are linearly ordered, and we say that a formula is \emph{$\gamma$-dominated} if all counter symbols appearing in the formula are strictly smaller than $\gamma$.
For a variable $x$, we say that a formula $\theta$ is \emph{$x$-local}
if $\theta$ is quantifier-free, does not use unary functions, and $x$ is the only variable appearing in $\theta$.

The counters are used to keep track of the numbers of vertices that satisfy a prescribed property.
Formally, the values of each counter $\gamma$ are determined by an associated \emph{trigger} $(f,\theta)$, where $f$ is a unary function symbol
and $\theta$ is an $x$-local $\gamma$-dominated first-order formula.  For each vertex $v$, the value $\gamma(v)$ is equal to the number
of vertices $u\in V(G)\setminus\{v\}$ such that $f(u)=v$ and $\theta(u)$ holds.

A \emph{global formula} is a formula that can additionally use elementary formulas of form $\#\theta\ge m$, where $\theta$
is an $x$-local formula and $m$ is a positive integer; this elementary formula is true if there are at least $m$
vertices $v\in V(G)$ such that $\theta(v)$ holds.

Let $I$ be a finite index set.  
A \emph{counter $I$-signature} $\sigma$ is a set of unary predicate and function symbols and linearly ordered counter symbols,
together with the triggers associated with these counter symbols, where the triggers are allowed to refer to the unary predicates from $X_I$ and $\sigma$.
For a graph $G$, a \emph{$G$-interpretation} of $\sigma$ is a $G$-interpretation
of the simple signature consisting of the unary predicate and function symbols from $\sigma$.
We can now state the quantifier elimination result.  

\begin{theorem}\label{thm-elim}
Let $I$ be a finite index set, let $\varphi$ be a first-order graph $I$-sentence,
and let $\GG$ be a class of graphs with bounded expansion.  There exists a counter $I$-signature $\sigma$
and a global quantifier-free first-order $(I,\sigma)$-sentence $\varphi'$ such that the following claim holds.
Given a graph $G\in \GG$, we can in linear time compute a $G$-interpretation $\II$ of $\sigma$ such that
$$G,A_I\models \varphi\text{ if and only if }G,\II,A_I\models \varphi'$$
for every $I$-tuple $A_I$ of subsets of $V(G)$.
\end{theorem}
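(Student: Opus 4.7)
The plan is to prove the theorem by induction on the quantifier depth of $\varphi$, adapting the quantifier-elimination procedure of \cite{dvorak2013testing,twd} so that the set predicate symbols $X_I$ are tracked correctly through newly introduced counters.

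My first step would be to equip $G$ with a transitive fraternal augmentation of sufficiently large radius $r$ (chosen to cover the ``reach'' of the atomic subformulas of $\varphi$). Because $\GG$ has bounded expansion, this can be carried out in linear time and yields a family of unary functions $f_1, \ldots, f_d$ guarded by $G$, together with unary predicates recording the arc types, with the property that any two vertices at $G$-distance at most $r$ are connected by a short composition of the $f_j$'s whose type is recorded by one of the predicates. A standard lemma shows the class obtained after this augmentation still has bounded expansion, so the construction can be iterated between successive quantifier-elimination rounds.

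Next, I would eliminate the innermost existential $\exists y\, \psi(y, x_1, \ldots, x_p)$ with $\psi$ quantifier-free by splitting the candidate witnesses into the close case (where $y$ lies within distance $r$ of some $x_i$) and the far case (where $y$ is at distance greater than $r$ from every $x_i$). In the close case, the augmentation provides bounded-length function-composition names for all candidates, so $\exists y$ becomes a quantifier-free disjunction obtained by substituting each such name for $y$ in $\psi$. In the far case, all $E(y, x_i)$ atoms are false and $\psi$ simplifies to $\rho(\bar x)\land \theta(y)$ for some $y$-local $\theta$; the existence (or the $m$-fold existence after further cleanup) of a far witness is then captured by a global atom $\#\theta\ge m$. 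Universal quantifiers are handled by passing to the negation. Each $\#\theta\ge m$ atom that still needs to sit under an outer quantifier will then be replaced by a fresh counter $\gamma$ with trigger $(f,\theta)$ for a guard function $f$ designed via the augmentation so that the preimages of $f$ encode the intended count; $\gamma$-dominance is preserved because each new trigger refers only to counters introduced earlier in the induction.

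The hard part will be the interaction between the set predicates $X_I$ and the counters. Since $\theta$ will in general refer to the $X_i$'s, the value of $\#\theta\ge m$, and of any counter derived from it, depends on $A_I$ and cannot be baked into the $G$-interpretation $\II$ once and for all. This is precisely why the standard device of \cite{twd} of replacing such atoms by new unary predicates is insufficient here, and why counters, whose values are recomputed from their triggers as $A_I$ varies while $\II$ stays fixed, are indispensable. Designing the trigger functions so that the $\gamma$-dominance order is respected throughout the entire bottom-up elimination, and so that every counter admits a linearly-computable description from $G$ alone, is the main technical challenge I anticipate.
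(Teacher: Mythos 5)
Your plan identifies the right high-level strategy (a quantifier-elimination induction that introduces counters because the $x$-local formulas refer to the $X_I$ predicates, so their counts cannot be frozen into $\II$), and you correctly pinpoint why the standard device of replacing counted atoms by fresh unary predicates breaks down in the set-maximization setting. But the proposal stops exactly where the paper's work begins: you flag ``designing the trigger functions so that the $\gamma$-dominance order is respected ... is the main technical challenge I anticipate'' without resolving it, and that design is the entire content of the proof. The close/far dichotomy is a sketch, not a construction: in the far regime the count of witnesses is \emph{not} a global atom $\#\theta\ge m$ but a global count corrected by a per-$x_i$ ``close count,'' and encoding the close count by a counter $\gamma_c$ with trigger $(c,\theta)$ for each bounded-length composition $c$ of augmentation functions is delicate (one must control double-counting of witnesses reached along several compositions, and keep the corrections themselves quantifier-free). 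Nothing in the proposal indicates how to do this while preserving $\gamma$-dominance.

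There is also a concrete technical mismatch with the theorem you are asked to prove. The statement requires a \emph{$G$-interpretation} of $\sigma$, which forces every unary function to be guarded by $G$ (i.e., $f(v)$ is $v$ or a $G$-neighbor of $v$). Transitive fraternal augmentations produce edges (and hence functions) spanning distances up to the augmentation radius, so the functions you would obtain are guarded only by a power of $G$, not by $G$ itself. This matters downstream: Lemma~\ref{lemma-delvers} records census entries $n(\gamma,v)$ precisely for vertices $v\in Y$ having a $G$-neighbor outside $Y$, and Lemma~\ref{lemma-numstates} / Theorem~\ref{thm-optfo} take the resulting shroud to live within the tree-decomposition bags. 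With functions guarded only by $G^r$ you would have to widen the census to a radius-$r$ fringe and re-derive the state-count bound. This is fixable, but it is a real deviation, not a drop-in.

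Finally, the paper does not in fact go the fraternal-augmentation route. It first kills the adjacency predicate using a degenerate orientation (Lemma~\ref{lemma-elimedge}, which yields functions genuinely guarded by $G$), then uses a low-treedepth coloring to produce an $s$-generic system of \emph{scaffoldings} (rooted forests of bounded depth), and eliminates one quantifier per $X$-\emph{template} describing how the terms sit in the scaffolding (Lemmas~\ref{lemma-elimone}--\ref{lemma-elim}). The point of this alternative decomposition is exactly what you flagged as the hard part: the scaffolding's parent function $\prt{F}$ serves uniformly as the trigger function, the counters $A_{u,y}$ are indexed by edges of the template tree, and the $\gamma$-dominance order is the (reversed) depth order in the scaffolding, which is automatically well-founded. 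If you want to push your fraternal-augmentation version through, you would need to invent an analogous well-founded skeleton for ordering the counters and a guardedness repair; the paper sidesteps both by building the skeleton out of the treedepth forest from the start.
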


Let us remark that since $\varphi'$ is a quantifier-free sentence, it is a Boolean combination of
formulas of form $\#\theta\ge m$, where $\theta$ is an $x$-local formula, and in particular $\varphi'$ does not
refer to any function symbols.

Importantly, the interpretation $\II$ of the predicate and function symbols from $\sigma$ is independent on $A_I$;
the choice of $A_I$ only affects the values of counters.  Let $\ell(\sigma)$ denote the number of counter symbols in $\sigma$,
and for $i=1,\ldots,\ell(\sigma)$, let $f_{i,\II}$ be the interpretation of the unary function from the trigger of the
$i$-th counter symbol in $\sigma$, in their fixed linear ordering.  For $v\in V(G)$, let us define
$h_{0,\II}(v)=\{v\}$ and for $i=1,\ldots,\ell(\sigma)$, let
$h_{i,\II}(v)=h_{i-1,\II}(v)\cup \{f_{i,\II}(u):u\in h_{i-1,\II}(v)\}$.  We let $h_\II=h_{\ell(\sigma),\II}$;
note that $h_\II$ is a $2^{\ell(\sigma)}$-shroud.

For a counter $\gamma$ of a counter $I$-signature $\sigma$, a $G$-interpretation $\II$, and an $I$-tuple $A_I$ of subsets of $V(G)$,
let $\gamma(\II,A_I,v)$ denote the value of the counter $\gamma$ when the symbols of $\sigma$ are interpreted
according to $\II$ and the predicates $X_I$ are interpreted as $A_I$.  The applications of Theorem~\ref{thm-elim} use the
simple fact that the membership of a vertex $v$ in the sets of an $I$-tuple $A_I$ only affects the values of the
counters in $h_\II(v)$.

\begin{lemma}\label{lemma-shroud}
Let $I$ be a finite index set, let $\sigma$ be a counter $I$-signature, let $G$ be a graph and let $\II$
be a $G$-interpretation of $\sigma$.  For any $X\subseteq V(G)$ and $I$-tuples $A_I$ and $A'_I$ of subsets of vertices
of $G$, if $A_I\setminus X=A'_I\setminus X$, then $\gamma(\II,A_I,v)=\gamma(\II,A'_I,v)$
for every $v\in V(G)\setminus h_\II(X)$.
\end{lemma}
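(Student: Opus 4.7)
The plan is to strengthen the statement and prove it by induction on the counter index. Specifically, for each $i \in \{0,1,\ldots,\ell(\sigma)\}$ and every vertex $u \in V(G) \setminus h_{i,\II}(X)$ (where we interpret $h_{i,\II}(X)$ as $\bigcup_{v \in X} h_{i,\II}(v)$), I would show both that $\chi_{A_I}(u) = \chi_{A'_I}(u)$ and that $\gamma_j(\II,A_I,u) = \gamma_j(\II,A'_I,u)$ for every counter $\gamma_j$ with $j \le i$. Taking $i = \ell(\sigma)$ and using $h_\II = h_{\ell(\sigma),\II}$ will yield the lemma.

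The base case $i = 0$ is immediate from the hypothesis $A_I \setminus X = A'_I \setminus X$, since $h_{0,\II}(X) = X$. For the inductive step, I would fix $u \notin h_{i,\II}(X)$ and, using $h_{i-1,\II}(X) \subseteq h_{i,\II}(X)$, invoke the inductive hypothesis at $u$ to inherit $\chi_{A_I}(u) = \chi_{A'_I}(u)$ and agreement of $\gamma_j(u)$ for all $j < i$; this reduces the task to proving $\gamma_i(\II,A_I,u) = \gamma_i(\II,A'_I,u)$. Writing the trigger of $\gamma_i$ as $(f_i,\theta_i)$, by definition this counter equals the number of $w \neq u$ with $f_{i,\II}(w) = u$ and $G,\II,A_I \models \theta_i(w)$. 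The key observation is that for any such $w$, the recursive definition of $h_{i,\II}$ ensures that $w \in h_{i-1,\II}(X)$ would force $u = f_{i,\II}(w) \in h_{i,\II}(X)$, contradicting our choice of $u$. Hence every such $w$ lies outside $h_{i-1,\II}(X)$, and the inductive hypothesis applies at $w$.

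It remains to verify that $\theta_i(w)$ has the same truth value under $A_I$ and $A'_I$. Because $\theta_i$ is $x$-local, it is quantifier-free, uses no unary functions, and involves no variable other than $x$, so each of its atomic subformulas must take one of the forms $P(x)$ for a unary predicate $P$ in $X_I \cup \sigma$, or $\gamma_j(x) \ge m$ for some counter $\gamma_j$; and since $\theta_i$ is $\gamma_i$-dominated, only counters with $j < i$ appear. Every such atom at $w$ is invariant under the swap by the inductive hypothesis (unary predicates from $\sigma$ are interpreted by $\II$ and do not depend on the $I$-tuple at all), so the two counts coincide and the induction closes. The main subtlety, which I expect to be the only real obstacle to write up cleanly, is aligning the layered definition of the shroud $h_{i,\II}$ with the strict ordering on counters: one has to ensure that every vertex probed while computing $\gamma_i$ at an unaffected $u$ is already unaffected at level $i-1$, so that exactly the counters $\theta_i$ may consult are the ones we have already controlled.
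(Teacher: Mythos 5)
Your proposal is correct and follows essentially the same inductive argument as the paper's own proof: induct on the counter index $i$, use the recursive definition of $h_{i,\II}$ to argue that any $w$ with $f_{i,\II}(w)=v$ and $v\notin h_{i,\II}(X)$ must lie outside $h_{i-1,\II}(X)$, and then use $x$-locality plus $\gamma_i$-domination to conclude that $\theta_i(w)$ has the same truth value under both $I$-tuples. You spell out the key containment $w\notin h_{i-1,\II}(X)$ a bit more explicitly than the paper does, and you carry $\chi_{A_I}(u)=\chi_{A'_I}(u)$ as a separate invariant (which is redundant, since $X\subseteq h_{i,\II}(X)$ makes it automatic), but the substance is identical.
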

\begin{proof}
For $i=1,\ldots,\ell(\sigma)$, let $(\theta_i,f_i)$ be the trigger of the
$i$-th counter symbol $\gamma_i$ in $\sigma$, in their fixed linear ordering, and let
$f_{i,\II}$ be the interpretation of $f_i$ in $\II$.  By induction on $i$, we show that
$\gamma_i(\II,A_I,v)=\gamma_i(\II,A'_I,v)$ for each $v\in V(G)\setminus h_{i,\II}(X)$.
Recall that $\gamma_i(\II,A_I,v)$ is the number of vertices $u\in V(G)\setminus\{v\}$
such that $f_{i,\II}(u)=v$ and $G,\II,A_I\models \theta_i(u)$.  Note that
$u\in V(G)\setminus h_{i-1,\II}(X)\subseteq V(G)\setminus h_{j,\II}(X)\subseteq V(G)\setminus X$ for each $j\le i-1$,
and thus $\gamma_j(\II,A_I,u)=\gamma_j(\II,A'_I,v)$ for each such $j$ by the induction hypothesis.
Since $\theta_i$ is $x$-local, the value of $\theta_i(u)$ only depends on these counters,
$\II$, and the interpretation of $X_I$ on $u\not\in X$,
and thus $G,\II,A_I\models \theta_i(u)$ if and only if $G,\II,A'_I\models \theta_i(u)$.
Consequently, $\gamma_i(\II,A_I,v)=\gamma_i(\II,A'_I,v)$, as required.
\end{proof}

This is useful in combination with the fact that vertices with fixed values of the counters can be
deleted.  Let $I$ be a finite index set, let $\sigma$ be a counter $I$-signature, and let $\varphi$ be a global quantifier-free first-order
$(I,\sigma)$-sentence.
For a graph $G$ and $Y\subseteq V(G)$, a \emph{$(G,Y,\sigma,\varphi)$-census} is a function $n$ that, letting $M$ be the largest integer appearing in the formula $\varphi$ and the triggers of $\sigma$,
assigns an element of $\{0,\ldots,M\}$ to
\begin{itemize}
\item each $x$-local formula $\theta$ appearing in an elementary formula $\#\theta\ge m$ in $\varphi$, and
\item each pair $(\gamma,v)$, where $\gamma$ is a counter symbol of $\sigma$ and $v\in Y$ is a vertex with at least one neighbor in $V(G)\setminus Y$.
\end{itemize}
Given a $G$-interpretation $\II$ of $\sigma$ and an $I$-tuple $A_I$ of subsets of vertices of $G$, we say that $n$ is a \emph{$(G,Y,\II,\sigma,\varphi)$-shadow of $A_I$}
if
\begin{itemize}
\item for each $x$-local formula $\theta$ appearing in an elementary formula $\#\theta\ge m$ in $\varphi$,
$n(\theta)$ is the minimum of $M$ and the number of vertices $u\in V(G)\setminus Y$ such that
$G,\II,A_I\models \theta(u)$; and,
\item for each counter $\gamma$ with trigger $(\theta,f)$ and each vertex $v\in Y$ with at least
one neighbor in $V(G)\setminus Y$, $n(\gamma,v)$ is the minimum of $M$ and the number of vertices $u\in V(G)\setminus Y$ such that
$f_\II(u)=v$ and $G,\II,A_I\models \theta(u)$.
\end{itemize}

\begin{lemma}\label{lemma-delvers}
Let $I$ be a finite index set, let $\sigma$ be a counter $I$-signature, and let
$\varphi$ be a global quantifier-free first-order $(I,\sigma)$-sentence.
There exists a signature $\sigma'$ obtained from $\sigma$ by adding unary predicate symbols
and changing the triggers on the counter symbols,
and a global quantifier-free first-order $(I,\sigma')$-sentence $\varphi'$
such that the following claim holds.  For any graph $G$, a $G$-interpretation $\II$ of $\sigma$,
a set $Y\subseteq V(G)$ and a $(G,Y,\sigma,\varphi)$-census $n$, we can in linear time find
a $G[Y]$-interpretation $\II$ of $\sigma's$ such that 
$$G,\II,A_I\models \varphi\text{ if and only if }G[Y],\II',A_I\models \varphi'$$
for every $I$-tuple $A_I$ of subsets of $Y$ with $(G,Y,\II,\sigma,\varphi)$-shadow $n$.
\end{lemma}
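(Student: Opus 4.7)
The plan rests on two observations. First, by the guardedness of the unary functions in $\sigma$, only vertices $v \in Y$ with an external neighbor can receive a non-zero external contribution to any counter; moreover, this contribution is bounded by $M$ and hence recorded by $n(\gamma, v)$. Second, $\#\theta$ on $G$ splits into a part inside $Y$ (computable inside $G[Y]$) and a part outside $Y$ bounded by $M$ and recorded by $n(\theta)$. I would therefore store these external quantities in fresh unary predicates, redirect each counter to accumulate only the internal contribution, and replace every counter atom and every $\#$-atom by a disjunction that combines the internal value with the relevant census entry.

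\medskip

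\textbf{Construction.} Adjoin to $\sigma$ the unary predicate symbols $P_{\gamma, k}$ for every counter $\gamma$ of $\sigma$ and every $k \in \{0, 1, \ldots, M\}$, and $Q_{\theta, m'}$ for every $x$-local formula $\theta$ appearing in an atom $\#\theta \geq m$ of $\varphi$ and every $m' \in \{1, \ldots, M\}$. Given $G$, $\II$, $Y$, $n$, define $\II'$ on $G[Y]$ by restricting each unary predicate of $\sigma$ to $Y$; setting $f_{\II'}(u) = f_{\II}(u)$ when $f_{\II}(u) \in Y$ and $f_{\II'}(u) = u$ otherwise (guarded by $G[Y]$, and preserving $\{(u, v) \in Y \times Y : u \neq v,\; f_\II(u) = v\}$); placing $v$ in $P_{\gamma, k}$ iff $n(\gamma, v) = k$, with the convention $n(\gamma, v) = 0$ when $v$ has no external neighbor; and setting $Q_{\theta, m'} = Y$ or $\emptyset$ according to whether $n(\theta) \geq m'$. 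All of this takes linear time.

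\medskip

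\textbf{Rewriting.} Processing the counters in their linear order $\gamma_1 < \gamma_2 < \cdots$, define for every $\gamma_i$-dominated $x$-local formula $\theta$ a rewrite $\theta^*$ by replacing each counter atom $\gamma_j(x) \geq m$ with $j < i$ by
$$\bigvee_{k = 0}^{M} P_{\gamma_j, k}(x) \wedge \alpha_{k, m},$$
where $\alpha_{k, m}$ is the truth constant if $k = M$ or $k \geq m$ and is the atom $\gamma_j(x) \geq m - k$ otherwise (here $\gamma_j$ denotes the new counter in $\sigma'$, whose trigger has already been defined by the recursion). The new trigger of $\gamma_i$ is $(f_i, \theta_i^*)$, where $(f_i, \theta_i)$ is the original. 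Finally, obtain $\varphi'$ from $\varphi$ by replacing every atom $\#\theta \geq m$ by
$$\bigl(\#\theta^* \geq m\bigr) \vee \bigvee_{k = 0}^{m - 1} \bigl(\#\theta^* \geq k\bigr) \wedge \bigl(\#Q_{\theta, m - k}(x) \geq 1\bigr).$$

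\medskip

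\textbf{Verification and main obstacle.} A straightforward induction on $i$ shows that $\gamma_i(\II', A_I, v) = |\{u \in Y \setminus \{v\} : f_{i, \II}(u) = v,\; G, \II, A_I \models \theta_i(u)\}|$ for every $v \in Y$, i.e., the new counter accumulates exactly the ``internal'' part of the old counter; combined with $n(\gamma_i, v)$ it reconstructs every threshold $\gamma_i(x) \geq m$ with $m \leq M$, since the disjunct $P_{\gamma_j, k}(x) \wedge \alpha_{k, m}$ witnesses exactly the external contribution being $k$, split according to whether the external part alone already meets the threshold or whether a specific amount of internal help is needed. The same principle validates the $\#$-rewrite at the outer level, with $Q_{\theta, m - k}$ playing the role of the external census, provided $|Y| \geq 1$ so that $\#Q_{\theta, m'}(x) \geq 1$ faithfully encodes $n(\theta) \geq m'$. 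The main subtlety is the careful bookkeeping around the cap at $M$ in the census; the degenerate case $|Y| = 0$ can be handled by guarding $\varphi'$ with $\#(x = x) \geq 1$ and dealing with the empty $A_I$ separately.
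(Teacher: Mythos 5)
Your construction is essentially the paper's: split each counter value at $v\in Y$ into an internal contribution (from $u\in Y$) and an external one (from $u\notin Y$, bounded by $M$ and recorded in the census thanks to guardedness), encode the external part via new unary predicates, redirect each trigger to accumulate only the internal part, and rewrite every counter atom $\gamma(x)\ge m$ so that it correctly recombines the internal and the recorded external contributions. Your invariant ``$\gamma_i(\II',A_I,v)$ equals the number of $u\in Y\setminus\{v\}$ with $f_{i,\II}(u)=v$ and $G,\II,A_I\models\theta_i(u)$'' is exactly the same inductive statement the paper proves, and your case analysis on the cap at $M$ (using $m\le M$ throughout) is correct.

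Two small deviations are worth flagging. First, the cosmetic one: you use exact-level predicates $P_{\gamma,k}$ (``$n(\gamma,v)=k$'') where the paper uses threshold predicates (``$n(\gamma,v)\ge m$''); the resulting rewrites are logically equivalent. Second, and more substantively, for the global atoms $\#\theta\ge m$ the paper substitutes the constant $\max(0,m-n(\theta))$ directly into $\varphi'$, so the paper's $\varphi'$ technically depends on the census $n$; you instead encode $n(\theta)$ through the predicates $Q_{\theta,m'}$ interpreted as $Y$ or $\emptyset$, which keeps $\varphi'$ genuinely independent of $G$, $\II$, $Y$ and $n$ as the lemma's quantifier order requires. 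Your version is, if anything, more faithful to the statement. The price is the degenerate case $Y=\emptyset$, where $\#Q_{\theta,m'}\ge 1$ fails regardless of $n(\theta)$; your ``guard with $\#(x=x)\ge 1$ and handle the empty case separately'' is a reasonable sketch, though strictly speaking a single fixed sentence $\varphi'$ cannot encode the arbitrary truth value of $G,\II,\emptyset_I\models\varphi$ on the zero-vertex graph $G[\emptyset]$, so one must either exclude $Y=\emptyset$ (which never arises in the paper's applications) or pass the answer for the empty tuple out-of-band. This is a genuine but very minor loose end, and the rest of the argument is sound.
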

\begin{proof}
Let $M$ be the largest integer appearing in the formula $\varphi$ or the triggers of $\sigma$.  
For each counter symbol $\gamma$ and each positive integer $m\le M$,
we add to $\sigma'$ a unary predicate $P_{\gamma,m}$, interpreted in $\II'$
as the set of vertices $v\in Y$ such that $n(\gamma,v)\ge m$.
Each function symbol $g$ is interpreted in $\II'$ by setting
$g_{\II'}(v)=g_{\II}(v)$ if $g_{\II}(v)\in Y$ and $g_{\II}(v)=v$ otherwise.  The predicate symbols
are interpreted in $\II'$ as the restrictions of their intepretations in $\II$ to $Y$.

For a counter symbol $\gamma$ with trigger $(\theta,f)$, we set the trigger of $\gamma$ in $\sigma'$
to be $(\theta',f)$, where $\theta'$ is obtained from $\theta$ by replacing each formula of form $\gamma'(x)\ge m$
by the formula
$$(\gamma'(x)\ge m)\lor (P_{\gamma',1}(x)\land \gamma'(x)\ge m-1)\lor\ldots\lor P_{\gamma',m}(x).$$
This ensures that for each $v\in Y$ and each $I$-tuple $A_I$ of subsets of $Y$ with $(G,Y,\II,\sigma,\varphi)$-shadow $n$, we have
$$\min(M,\gamma(\II,A_I,v))=\min(M,\gamma(\II',A_I,v)+n(\gamma,v)),$$
where $n(\gamma,v)=0$ if all neighbors of $v$ belong to $Y$.
For each subformula of $\varphi$ of form $\#\theta\ge m$, let $\theta'$ be obtained by performing the same
replacements.  The formula $\varphi'$ is obtained from $\varphi$ by replacing each such subformula
by the formula $\#\theta'\ge \max(0, m - n(\theta))$.
The fact that $G,\II,A_I\models \varphi$ if and only if $G[Y],\II',A_I\models \varphi'$ holds
for every $I$-tuple $A_I$ of subsets of $Y$ with $(G,Y,\II,\sigma,\varphi)$-shadow $n$
is clear from the construction.
\end{proof}

We also need the fact that counters can be eliminated at the expense of re-introducing quantifiers.

\begin{lemma}\label{lemma-uncount}
Let $I$ be a finite index set, let $\sigma$ be a counter $I$-signature, and
let $\sigma'$ be the simple signature consisting of the predicate and function symbols from $\sigma$.
For any counter symbol $\gamma\in \sigma$ and a positive integer $m$,
there exists a first-order $(I,\sigma')$-formula $\psi_{\gamma,m}$ with one free variable $x$ such that
the following claim holds:
Let $G$ be a graph and let $\II$ be a $G$-interpretation of $\sigma$.
For any $I$-tuple $A_I$ of subsets of vertices of $G$, we have $\gamma(\II,A_I,v)\ge m$ if and only of
$G,\II,A_I\models \psi_{\gamma,m}(v)$.
\end{lemma}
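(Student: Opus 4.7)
The plan is a straightforward induction on the position of $\gamma$ in the linear ordering of the counter symbols of $\sigma$. Recall that by the definition of a counter $I$-signature, the trigger $(f,\theta)$ of $\gamma$ has the property that $\theta$ is $x$-local and $\gamma$-dominated: $\theta$ is quantifier-free, uses no unary functions, mentions no variable other than $x$, and uses only counter symbols strictly preceding $\gamma$ in the linear order. This guarantees well-foundedness of the induction.

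For the base case, when $\gamma$ is the smallest counter symbol, the formula $\theta$ contains no counter expressions at all, so $\theta$ is itself a first-order $(I,\sigma')$-formula (its only atomic subformulas are of the form $P(x)$ for $P\in\sigma$, $X_i(x)$ for $i\in I$, and equalities involving $x$). In the inductive step, assume the lemma has already been established for all counter symbols strictly smaller than $\gamma$ and all positive integers. Build a first-order $(I,\sigma')$-formula $\tilde{\theta}(y)$ from $\theta$ by replacing each atomic counter expression $\gamma'(x)\ge m'$ by $\psi_{\gamma',m'}(x)$ (using the inductively constructed formula), and then renaming the free variable $x$ to $y$ throughout, while renaming the bound variables inside the inserted $\psi_{\gamma',m'}$'s to be fresh so as to avoid variable capture.

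Now define
$$\psi_{\gamma,m}(x) \;\equiv\; \exists u_1\cdots\exists u_m\left(\bigwedge_{1\le i<j\le m} u_i\neq u_j \;\land\; \bigwedge_{i=1}^{m}\bigl(u_i\neq x\land f(u_i)=x\land \tilde{\theta}(u_i)\bigr)\right),$$
where again the $u_i$ are chosen to be fresh with respect to everything inside $\tilde{\theta}$. Correctness is then immediate from the semantic definition of $\gamma(\II,A_I,v)$: by the induction hypothesis, $\tilde{\theta}(u)$ and $\theta(u)$ have the same truth value under $(\II,A_I)$ at every vertex $u$, so $\psi_{\gamma,m}(v)$ holds precisely when there exist at least $m$ pairwise distinct vertices $u\neq v$ with $f_\II(u)=v$ and $G,\II,A_I\models \theta(u)$, which is exactly the condition $\gamma(\II,A_I,v)\ge m$.

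The only subtlety is the bookkeeping in the variable renaming when substituting $\psi_{\gamma',m'}$, whose bound variables might otherwise clash with $u_1,\ldots,u_m$ or with one another when several counter expressions are replaced, but this is entirely standard and presents no real difficulty. There is no genuine obstacle in the argument; the substance of the lemma lies in the well-foundedness provided by the $\gamma$-dominated condition and in the simple observation that a counter value is in any case a first-order definable quantity once all strictly earlier counters are eliminated.
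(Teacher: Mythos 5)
Your proof is correct and follows essentially the same approach as the paper: induction along the linear ordering of counter symbols, substituting $\psi_{\gamma',m'}(x)$ for each $\gamma'(x)\ge m'$ in the trigger's formula, and then existentially quantifying $m$ pairwise-distinct witnesses mapped to $x$ by $f$. The extra remarks on the base case and variable renaming are routine bookkeeping the paper leaves implicit.
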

\begin{proof}
We prove the claim by induction along the linear ordering of the counter symbols in $\sigma$,
and thus we can assume that the claim holds for all counter symbols appearing in the trigger $(\theta,f)$ of $\gamma$.
Let $\theta'$ be the $(I,\sigma')$-formula obtained from $\theta$ by replacing each formula of form $\gamma'(x)\ge m'$ by
the formula $\psi_{\gamma',m'}(x)$.  We let $\psi_{\gamma,m}$ be the formula
$$(\exists x_1)\ldots (\exists x_m) \left(\bigwedge_{i<j} x_i\neq x_j\right)\land \bigwedge_{i=1}^m (x_i\neq x\land f(x_i)=x\land \theta'(x_i)).$$
\end{proof}

We now straightforwardly compose the results.
\begin{proof}[Proof of Theorem~\ref{thm-local}]
Let $\sigma_1$ be the counter $I$-signature and $\varphi_1$ the global quantifier-free first-order $(I,\sigma_1)$-sentence
obtained using Theorem~\ref{thm-elim}.  Let $s=2^{\ell(\sigma_1)}$.
Let $\sigma_2$ and $\varphi_2$ be the counter $I$-signature and the global quantifier-free first-order $(I,\sigma_2)$-sentence
obtained using Lemma~\ref{lemma-delvers} for $\sigma_1$ and $\varphi_1$.
We let $\sigma$ be the simple signature consisting of the predicate and function symbols from $\sigma_2$,
and for each counter symbol $\gamma$ of $\sigma_2$ and each positive integer $m$, we let $\psi_{\gamma,m}$
be the formula constructed in Lemma~\ref{lemma-uncount}.  We let $\varphi'$ be the formula obtained from $\varphi_2$
by replacing every subformula of form $\gamma(x)\ge m$ by the formula $\psi_{\gamma,m}(x)$,
and every subformula of form $\#\theta\ge m$ by the formula
$$(\exists x_1)\ldots (\exists x_m) \left(\bigwedge_{i<j} x_i\neq x_j\right)\land \bigwedge_{i=1}^m \theta(x_i).$$

Now, given a graph $G$, we first use the algorithm from Theorem~\ref{thm-elim} to compute a $G$-interpretation $\II_1$ of $\sigma_1$ such that
$$G,A_I\models \varphi\text{ if and only if }G,\II_1,A_I\models \varphi_1$$
for every $I$-tuple $A_I$ of subsets of $V(G)$.  We let $h$ be the $s$-shroud $h_{\II_1}$ and let $X$ be the $h$-center of $Y$.
Let $n$ be the $(G,Y,\II_1,\sigma_1,\varphi_1)$-shadow of the $I$-tuple of empty sets.
Note that by Lemma~\ref{lemma-shroud}, every $I$-tuple $A_I$ of subsets of $X$ has $(G,Y,\II_1,\sigma_1,\varphi_1)$-shadow $n$.
Using the algorithm from Lemma~\ref{lemma-delvers}, we compute a $G[Y]$-interpretation $\II_2$ of $\sigma_2$
such that
$$G,\II_1,A_I\models \varphi_1\text{ if and only if }G[Y],\II_2,A_I\models \varphi_2$$
for every $I$-tuple $A_I$ of subsets of $X$.  By construction, $\varphi'$ is a first-order $(I,\sigma)$-sentence
such that 
$$G[Y],\II_2,A_I\models \varphi_2\text{ if and only if } G[Y],\II_2,A_I\models \varphi'$$
for every $I$-tuple $A_I$ of subsets of $Y$.
Therefore,
$$G,A_I\models \varphi\text{ if and only if }G[Y],\II_2,A_I\models \varphi'$$
for every $I$-tuple $A_I$ of subsets of the $h$-center $X$ of $Y$, and we can
set $\II_Y=\II_2$.
\end{proof}

Next, we prove Theorem~\ref{thm-optfo}.
Let $I$ be a finite index set, let $\sigma$ be a counter $I$-signature,
let $\varphi$ be a global quantifier-free first-order $(I,\sigma)$-sentence,
let $G$ be a graph, and let $\II$ be a $G$-interpretation of $\sigma$.  Suppose $G=L\cup R$ for some subgraphs $L$ and $R$ of $G$.  In this context,
for $I$-tuples $A_I$ and $A'_I$ of subsets of vertices of $L$, we write $A_I\equiv_{(L,R)} A'_I$ if
for every $I$-tuple $B_I$ of subsets of $V(R)\setminus V(L)$, we have
$$G,\II,A_I\cup B_I\models \varphi\text{ if and only if }G,\II,A'_I\cup B_I\models \varphi.$$
The algorithm is based on the following key fact.

\begin{lemma}\label{lemma-numstates}
Let $I$ be a finite index set, let $\sigma$ be a counter $I$-signature,
let $\varphi$ be a global quantifier-free first-order $(I,\sigma)$-sentence,
let $G=L\cup R$ be a graph and let $\II$ be a $G$-interpretation of $\sigma$. Let $h_\II$ be the corresponding
$2^{\ell(\sigma)}$-shroud and let $S=h_{\II}(V(L\cap R))\cap V(L)$.  If $A_I$ and $A'_I$ are $I$-tuples of subsets of $V(L)$ such that
\begin{itemize}
\item $A_I\cap S=A'_I\cap S$ and
\item $A_I$ and $A'_I$ have the same $(G,V(L)\setminus S,\II,\sigma,\varphi)$-shadow $n$,
\end{itemize}
then $A_I\equiv_{(L,R)} A'_I$.
\end{lemma}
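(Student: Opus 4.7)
The plan is to show that for every $x$-local formula $\theta$ appearing in an elementary subformula $\#\theta\ge m$ of $\varphi$, the truncated counts $\min(M,\#\theta(\hat A))$ and $\min(M,\#\theta(\hat A'))$ coincide, where $\hat A = A_I\cup B_I$ and $\hat A' = A'_I\cup B_I$. Since $\varphi$ is a Boolean combination of elementary subformulas of this form, this immediately yields $A_I\equiv_{(L,R)} A'_I$. I would decompose $\#\theta(\hat A)$ as $\#_Y\theta(\hat A) + \#_{V(G)\setminus Y}\theta(\hat A)$, where $Y = V(L)\setminus S$, and handle the two pieces separately.

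The first key structural observation I would establish is that $h_\II(V(R)\setminus V(L))\cap V(L)\subseteq S$: any forward $f$-path from a vertex in $V(R)\setminus V(L)$ that reaches $V(L)$ must first cross into $V(L)$ at a vertex of $V(L\cap R)$, because no edge of $G$ joins $V(L)\setminus V(R)$ to $V(R)\setminus V(L)$, and from that entry point onward the path lies in $h_\II(V(L\cap R))\cap V(L) = S$. Applying Lemma~\ref{lemma-shroud} with $X = V(R)\setminus V(L)$ then gives $\gamma(\II,\hat A,v) = \gamma(\II,A_I,v)$ for every counter $\gamma$ and every $v\in Y$; combined with the fact that on $Y\subseteq V(L)\setminus V(R)$ the $X_i$-membership under $\hat A$ coincides with that under $A_I$, this yields that $\theta(v)$ under $\hat A$ agrees with $\theta(v)$ under $A_I$ for each $v\in Y$, and symmetrically for $\hat A'$ and $A'_I$. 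Hence $\#_Y\theta(\hat A) = \#_Y\theta(A_I)$ and $\#_Y\theta(\hat A') = \#_Y\theta(A'_I)$.

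For the remaining contribution over $V(G)\setminus Y = S\cup(V(R)\setminus V(L))$, I would apply Lemma~\ref{lemma-shroud} with $X = Y$ (the set on which $\hat A$ and $\hat A'$ differ), giving counter agreement between $\hat A$ and $\hat A'$ outside $h_\II(Y)$; together with $A_I\cap S = A'_I\cap S$ pinning down the $X_i$-membership on $S$ and with $\hat A$, $\hat A'$ agreeing on $V(R)\setminus V(L)$, this handles the vertices outside $h_\II(Y)$. For the boundary vertices inside $h_\II(Y)\cap (V(G)\setminus Y)$, the shadow entries $n(\gamma,v)$ supply exactly the external contribution to each counter at the boundary, and an inductive argument along the linear order of the counter symbols in $\sigma$ (paralleling the proof of Lemma~\ref{lemma-shroud}) matches up every counter value and hence every $\theta(v)$. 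The hardest step I anticipate is the simultaneous bookkeeping needed to combine the internal counts $\#_Y\theta(A_I)$ and $\#_Y\theta(A'_I)$ (which the shadow bounds only indirectly) with the external shadow $n(\theta)$ and the boundary counter data $n(\gamma,v)$; closing this comparison carefully so that $\min(M,\#\theta(\hat A)) = \min(M,\#\theta(\hat A'))$ for every $\theta$ is where the proof will demand the most care.
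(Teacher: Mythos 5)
Your approach is a genuine alternative to the paper's. You work directly on the truncated counts $\min(M,\#\theta)$; the paper instead shows (using Lemma~\ref{lemma-shroud} with $X=V(R)\setminus V(L)$ together with the identity $S=h_\II(V(R))\cap V(L)$, which is your first structural observation) that $A_I\cup B_I$ and $A'_I\cup B_I$ have the same shadow as $A_I$ and $A'_I$, and then invokes Lemma~\ref{lemma-delvers} with $Y=V(R)\cup S$ to transfer the question to $G[V(R)\cup S]$, where the two tuples restrict to literally the same $I$-tuple, making the conclusion immediate. The counter bookkeeping you rightly flag as the hard step is exactly what the proof of Lemma~\ref{lemma-delvers} already packages (via the predicates $P_{\gamma,m}$ and the modified triggers), which is why the paper's composition is shorter.

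If you carry out your self-contained version, two points need care. First, the induction along the counter order cannot yield exact agreement of counter values on the $V(R)\cup S$ side, only agreement after truncation at $M$; this suffices because trigger formulas and the elementary formulas $\#\theta\ge m$ only test thresholds $m\le M$, and $\min(M,a)=\min(M,a')$ together with $b=b'$ gives $\min(M,a+b)=\min(M,a'+b')$. Second, be precise about which side of the cut the census entries $n(\gamma,v)$ index: for your boundary analysis you need them at vertices $v$ on the $V(R)\cup S$ side that absorb a trigger contribution from some $u\in V(L)\setminus S$, and such $v$ must in fact lie in $S$, since no edge of $G$ joins $V(L)\setminus S$ to $V(R)\setminus V(L)$ (such an edge would lie in $R$ with its $V(L)$-endpoint in $V(L\cap R)\subseteq S$). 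With that in hand the final combination is mechanical: for each $\theta$, the $Y$-counts agree up to truncation via $n(\theta)$, the $(V(R)\cup S)$-counts agree exactly by the boundary induction, and the identity above closes the comparison — so the simultaneous bookkeeping you feared is actually the easy part once the counter induction is set up correctly.
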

\begin{proof}
By the definition of $h_{\II}$, we have $S=h_{\II}(V(R))\cap V(L)$.  Consider any $I$-tuple $B_I$ of subsets of $V(R)\setminus V(L)$.
By Lemma~\ref{lemma-shroud} applied with $X=V(R)\setminus V(L)$,
we have $\gamma(\II,A_I\cup B_I,v)=\gamma(\II,A_I,v)$ and $\gamma(\II,A'_I\cup B_I,v)=\gamma(\II,A'_I,v)$
for each $v\in V(L)\setminus S$.  Consequently, the $(G,V(L)\setminus S,\II,\sigma,\varphi)$-shadow
of both $A_I\cup B_I$ and $A'_I\cup B_I$ is equal to $n$.  Let $\sigma'$, $\varphi'$, and $\II'$ be obtained using Lemma~\ref{lemma-delvers}
for the census $n$ and $Y=V(R)\cup S$.  By the assumptions, we have $(A_I\cup B_I)\cap (V(R)\cup S)=(A'_I\cup B_I)\cap (V(R)\cup S)$, and thus
\begin{align*}
G,\II,A_I\cup B_I&\models \varphi\text{ if and only if}\\
G[V(R)\cup S],\II',(A_I\cup B_I)\cap (V(R)\cup S)&\models \varphi'\text{ if and only if}\\
G[V(R)\cup S],\II',(A'_I\cup B_I)\cap (V(R)\cup S)&\models \varphi'\text{ if and only if}\\
G,\II,A'_I\cup B_I&\models \varphi,
\end{align*}
as required.
\end{proof}

\begin{corollary}\label{cor-numstates}
Let $I$ be a finite index set, let $\sigma$ be a counter $I$-signature,
let $\varphi$ be a global quantifier-free first-order $(I,\sigma)$-sentence,
let $G=L\cup R$ be a graph and let $\II$ be a $G$-interpretation of $\sigma$.
Then $\equiv_{L,R}$ has $\exp(O(|V(L\cap R)|))$ equivalence classes.
\end{corollary}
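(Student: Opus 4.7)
My plan is to apply Lemma~\ref{lemma-numstates}, which says that the equivalence class of an $I$-tuple $A_I$ of subsets of $V(L)$ under $\equiv_{L,R}$ is determined by the pair consisting of $A_I\cap S$ and the $(G,V(L)\setminus S,\II,\sigma,\varphi)$-shadow $n$ of $A_I$, where $S=h_\II(V(L\cap R))\cap V(L)$; it then suffices to bound the number of such pairs. First I would bound $|S|$: since $h_\II$ is a $2^{\ell(\sigma)}$-shroud with $\ell(\sigma)$ a constant depending only on $\varphi$ and $\GG$, we have $|S|\le|h_\II(V(L\cap R))|\le 2^{\ell(\sigma)}|V(L\cap R)|=O(|V(L\cap R)|)$, so the number of distinct values of $A_I\cap S$ is at most $2^{|I|\cdot|S|}=\exp(O(|V(L\cap R)|))$.

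Next I would bound the number of distinct shadows. A shadow consists of constantly many entries $n(\theta)\in\{0,\dots,M\}$ (one per $x$-local subformula appearing in an elementary formula of $\varphi$, contributing only $O(1)$ possibilities in total) together with entries $n(\gamma,v)\in\{0,\dots,M\}$ indexed by counter symbols $\gamma$ of $\sigma$ and vertices $v\in V(L)\setminus S$ with a neighbor in $V(R)\cup S$. Although this index set of $v$'s can be much larger than $|V(L\cap R)|$, the key observation will be that $n(\gamma,v)$ is forced to be $0$ whenever no $u\in V(R)\cup S$ satisfies $f_{\gamma,\II}(u)=v$. Because $f_{\gamma,\II}$ is guarded by $G$, such a $u$ is either $v$ itself (impossible, since $v\notin V(R)\cup S$) or adjacent to $v$ in $G$. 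Now $V(L\cap R)\subseteq S$ implies $v\in V(L)\setminus V(R)$, and since $G=L\cup R$ as subgraphs there is no edge between $V(L)\setminus V(R)$ and $V(R)\setminus V(L)$; hence any such $u$ must lie in $(V(L)\cap V(R))\cup S=S$. Consequently the potentially non-zero entries are indexed by $v\in\bigcup_{\gamma}f_{\gamma,\II}(S)$, a set of size at most $(\text{number of counters})\cdot|S|=O(|V(L\cap R)|)$.

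The total number of distinct shadows is therefore at most $(M+1)^{O(|V(L\cap R)|)}\cdot O(1)=\exp(O(|V(L\cap R)|))$, and multiplying by the earlier bound on $A_I\cap S$ gives the claim. The main obstacle I expect is identifying this \emph{sparsity} property of the shadow: although a shadow formally has one counter entry per boundary vertex of $V(L)\setminus S$ and there can be many such boundary vertices, the guardedness of the unary functions in $\sigma$ forces all but $O(|V(L\cap R)|)$ of those entries to be identically zero regardless of the choice of $A_I$; the remainder is straightforward counting.
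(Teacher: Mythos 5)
Your proof is correct and follows essentially the same approach as the paper: apply Lemma~\ref{lemma-numstates}, bound $|S|\le 2^{\ell(\sigma)}|V(L\cap R)|$, and count the possible censuses and restrictions of the $I$-tuple to $S$. In fact your justification for the census count---using guardedness of the trigger functions to argue that only $O(|S|)$ entries $n(\gamma,v)$ can be nonzero, since any witnessing $u\in V(R)\cup S$ with $f_{\gamma,\II}(u)=v$ must lie in $S$---spells out more carefully a step that the paper dispatches with the terser (and on its own insufficient) remark that only vertices of $S$ have neighbors in $V(L)\setminus S$.
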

\begin{proof}
Let $M$ be the largest integer appearing in the formula $\varphi$ and the triggers of $\sigma$
and let $a$ be the number of $x$-local formulas appearing in $\varphi$.
Let $$c=(M+1)^{\ell(\sigma)2^{\ell(\sigma)}}2^{|I|2^{\ell(\sigma)}}.$$
Let $h_\II$ be the $2^{\ell(\sigma)}$-shroud corresponding to $\II$ and let $S=h_{\II}(V(L\cap R))\cap V(L)$.

By Lemma~\ref{lemma-numstates}, each equivalence class of $\equiv_{(L,R)}$ is determined by
\begin{itemize}
\item the $(G,V(L)\setminus S,\sigma,\varphi)$-census $n$: since only the vertices of $S$ can have neighbors in $V(L)\setminus S$, the number of such censuses is
at most $(M+1)^{a+\ell(\sigma)|S|}\le (M+1)^{a+\ell(\sigma)2^{\ell(\sigma)}\cdot |V(L\cap R)|}$; and
\item the restriction of the $I$-tuple to $S$: there are $2^{|I|\cdot|S|}\le 2^{|I|2^{\ell(\sigma)}\cdot |V(L\cap R)|}$
options.
\end{itemize}
Hence, the number of equivalence classes of $\equiv_{L,R}$ is at most $(M+1)^ac^{|V(L\cap R)|}$.
\end{proof}

\begin{proof}[Proof of Theorem~\ref{thm-optfo}]
Let $T$ be the tree of the tree decomposition $\tau$, rooted arbitrarily.  For each node $x\in V(T)$,
let $T_x$ denote the subtree of $T$ induced by $x$ and its descendants.  Let $L_x=G[\bigcup_{y\in V(T_x)} \tau(y)]$
and $R_x=G[\bigcup_{y\in (V(T)\setminus V(T_x))\cup \{x\}} \tau(y)]$, so that $L_x\cup R_x=G$ and $|V(L_x\cap R_x)|=|\tau(x)|\le t+1$.

Let $\sigma$, $\varphi'$ and $\II$ be obtained using Theorem~\ref{thm-elim}.
We use the standard dynamic programming approach, computing for each $x\in V(T)$ a table assigning to each equivalence class $C$
of $\equiv_{L_x,R_x}$ an $I$-tuple $A_{I,C}\in C$ of subsets of $X\cap V(L_x)$ such that $w(A_{I,C})$ is maximized.
Since $\equiv_{L_x,R_x}$ has $\exp(O(t))$ equivalence classes by Corollary~\ref{cor-numstates}, this
can be done in total time $\exp(O(t))|V(G)|$ for all nodes of $\tau$.

Let $r$ be the root of $T$, and note that $L_r=G$ and $R_r=G[\tau(r)]$.  We go over the equivalence classes $C$ of
$\equiv_{L_r,R_r}$ corresponding to $I$-tuples satisfying the property expressed by $\varphi'$ (and thus also by $\varphi$),
and return the $I$-tuple $A_{I,C}$ maximizing $w(A_{I,C})$.
\end{proof}

To finish the argument, we need to prove Theorem~\ref{thm-elim}.

\section{Eliminating the quantifiers}

First, let us note that we can restrict ourselves to non-graph formulas (i.e., formulas
that do not use the adjacency predicate).
\begin{lemma}\label{lemma-elimedge}
Let $I$ be a finite index set, let $\varphi$ be a first-order graph $I$-sentence,
and let $\GG$ be a class of graphs with bounded expansion.  There exists a simple signature $\sigma$
and a first-order $(I,\sigma)$-sentence $\varphi'$ such that the following claim holds.
Given a graph $G\in \GG$, we can in linear time compute a $G$-interpretation $\II$ of $\sigma$ such that
$$G,A_I\models \varphi\text{ if and only if }G,\II,A_I\models \varphi'$$
for every $I$-tuple $A_I$ of subsets of $V(G)$.
\end{lemma}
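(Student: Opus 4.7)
The plan is to encode the adjacency predicate via a bounded number of guarded unary functions coming from a low in-degree orientation, and then syntactically rewrite $\varphi$.

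First I would invoke the standard fact from the theory of bounded expansion that for any class $\GG$ with bounded expansion there is a constant $d$ (depending only on $\GG$) and a linear-time algorithm that orients the edges of any $G\in\GG$ so that every vertex has in-degree at most $d$. This is immediate from the fact that bounded expansion implies bounded degeneracy, and a degeneracy ordering, computable in linear time by repeatedly removing a minimum-degree vertex, yields an orientation with maximum in-degree equal to the degeneracy. Let $\sigma=\{f_1,\ldots,f_d\}$ be a simple signature consisting of $d$ unary function symbols; for a given $G\in\GG$, define the $G$-interpretation $\II$ by setting $f_{i,\II}(v)$ to be the $i$-th in-neighbor of $v$ in the chosen orientation if it exists, and $f_{i,\II}(v)=v$ otherwise. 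Each $f_{i,\II}$ is then guarded by $G$ in the sense required by the definition of a $G$-interpretation.

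Second, I would form $\varphi'$ from $\varphi$ by a purely syntactic substitution: replace every atomic subformula of the form $E(x,y)$ by
\[
(x\neq y)\land\bigvee_{i=1}^{d}\bigl(f_i(x)=y\ \lor\ f_i(y)=x\bigr).
\]
This removes every use of the adjacency predicate, so $\varphi'$ is a first-order $(I,\sigma)$-sentence in the sense of the paper. The correctness of the substitution is immediate from the construction of $\II$: two distinct vertices $u,v$ are adjacent in $G$ exactly when the edge $uv$ is oriented from one of them to the other, i.e.\ exactly when one of $u,v$ appears as some $f_{i,\II}$-image of the other, while the $x\neq y$ conjunct discards the spurious fixed-point values used to make $f_i$ total. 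Hence for every $I$-tuple $A_I$ of subsets of $V(G)$,
\[
G,A_I\models\varphi\ \text{iff}\ G,\II,A_I\models\varphi'.
\]

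The only ingredient that is not entirely routine is obtaining the bounded in-degree orientation in linear time, but this is a well-known property of bounded expansion classes (in fact, of any degenerate class), so I would simply cite it. Everything else is a mechanical syntactic rewrite and a one-line semantic verification.
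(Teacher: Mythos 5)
Your proposal is correct and matches the paper's proof essentially line for line: both use a linear-time bounded-degeneracy orientation to encode adjacency via $d$ guarded unary functions, then substitute $E(x,y)$ by $(x\neq y)\land\bigvee_{i=1}^{d}(f_i(x)=y\lor f_i(y)=x)$. The only cosmetic difference is that the paper phrases the orientation in terms of bounded out-degree and makes the ``$i$-th neighbor'' indexing explicit via a proper edge-labeling $\kappa$, whereas you use bounded in-degree and leave the indexing implicit.
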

\begin{proof}
Let $h$ be the function bounding the expansion of $\GG$ and let $d=2h(0)$.  We let $\sigma$ consist of
the function symbols $f_1$, \ldots, $f_d$, and we obtain $\varphi'$ from $\varphi$ by
replacing each occurrence of the edge predicate of form $E(t_1,t_2)$ by the formula
$$t_1\neq t_2\land \bigvee_{i=1}^d f_i(t_1)=t_2\lor f_i(t_2)=t_1.$$
Note that any subgraph of a graph $G\in\GG$ has average degree at most $2\nabla_0(G)\le d$,
and thus $G$ is $d$-degenerate.  Hence, we can in linear time find an acyclic orientation $\vec{G}$ of
$G$ such that each vertex has outdegree at most $d$.  Let $\kappa:E(G)\to\{1,\ldots,d\}$ be chosen arbitrarily
so that $\kappa(e_1)\neq \kappa(e_2)$ for any distinct edges $e_1$ and $e_2$ oriented in $\vec{G}$ away
from a common incident vertex. We interpret the function $f_i$ in $\II$ by setting $f_i(u)=v$ if
$v$ is an outneighbor of $u$ in $\vec{G}$ and $\kappa(uv)=i$, and $f_i(u)=u$ if no such vertex $v$ exists.
\end{proof}

Next, we need some definitions and auxiliary results.  Recall that in a rooted forest, the \emph{ancestors} of a vertex $v$
are the vertices on the path from $v$ to a root (including $v$) and \emph{descendants} of $v$ are the vertices for which $v$ is
an ancestor.  \emph{Strict} ancestors and descendants of $v$ are the ancestors and descendants different from $v$.
If $G$ is a graph and $F$ is a rooted forest with $V(F)\subseteq V(G)$, we say that $F$ is a \emph{scaffolding} of $G$ if
$F\subseteq G$ and for each edge $uv\in E(G)$, if $u,v\in V(F)$, then $u$ is a strict ancestor or descendant of $v$ in $F$.
For a positive integer $s$, a system $\ZZ$ of scaffoldings in $G$ is \emph{$s$-generic}
if for every set $X\subseteq V(G)$ of size at most $s$, there exists a scaffolding $F\in \ZZ$
such that $X\subseteq V(F)$.  We use the following consequence of Corollary~\ref{cor-lowtd}.

\begin{lemma}\label{lemma-scaffold}
For every class $\GG$ of graphs with bounded expansion and every positive integer $s$, there exists a positive integer $c$ and a linear-time
algorithm that, given a graph $G\in \GG$, returns an $s$-generic system of size $c$ of scaffoldings of depth at most $2^s$.
\end{lemma}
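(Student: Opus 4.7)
The plan is to combine Corollary~\ref{cor-lowtd} with a depth-first search. Applying the corollary (in linear time), I obtain an $(s,1/c)$-generic cover $\ZZ$ of $G$ of size $c$ in which every set $Z\in\ZZ$ induces a subgraph $G[Z]$ of treedepth at most $s$. For each $Z\in\ZZ$ I compute, in linear time, a depth-first search forest $F_Z$ of $G[Z]$ (rooted at the first vertex visited in each component of $G[Z]$). I claim that $\{F_Z:Z\in\ZZ\}$ is the desired $s$-generic system of scaffoldings.

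There are three items to verify. First, $F_Z$ is a scaffolding of $G$: by construction $F_Z\subseteq G[Z]\subseteq G$, and by the classical DFS property every non-tree edge of $G[Z]$ connects a DFS-descendant to a DFS-ancestor. Hence any edge $uv\in E(G)$ with $u,v\in V(F_Z)=Z$ lies in $G[Z]$, and is therefore either a tree edge or a back edge of the DFS; in either case $u$ and $v$ are in strict ancestor-descendant relation in $F_Z$.

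Second, the depth of $F_Z$ is at most $2^s$. Here I invoke the standard fact that a graph of treedepth at most $s$ contains no simple path on $2^s$ vertices (since a path on $n$ vertices has treedepth $\lceil \log_2(n+1)\rceil$; see~\cite{nesbook}). Every root-to-leaf path in the DFS forest $F_Z$ is a simple path in $G[Z]$, so it has at most $2^s-1<2^s$ vertices. Third, the system is $s$-generic: for any $X\subseteq V(G)$ with $|X|\le s$, the $(s,1/c)$-genericity of $\ZZ$ (which has exactly $c$ elements) guarantees at least one $Z\in\ZZ$ with $X\subseteq Z=V(F_Z)$.

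The total running time is linear: Corollary~\ref{cor-lowtd} runs in linear time, and we then run $c=O(1)$ depth-first searches, each in time $O(|V(G[Z])|)=O(|V(G)|)$. The main conceptual point—and the only non-trivial step—is the translation from the treedepth bound on $G[Z]$ to the depth bound on the scaffolding: it is exactly the inequality ``treedepth $\le s$ implies longest path has fewer than $2^s$ vertices'' that explains the exponential ``$2^s$'' (rather than $s$) in the statement of the lemma.
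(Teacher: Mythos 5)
Your proposal is correct and follows essentially the same route as the paper's proof: both apply Corollary~\ref{cor-lowtd} to obtain the $(s,1/c)$-generic cover, take a DFS forest of each induced subgraph, and bound the depth via the observation that a graph of treedepth at most $s$ contains no path on $2^s$ vertices. Your write-up is a bit more explicit about why the DFS forest is a scaffolding (tree/back edge classification), but that is the same argument the paper compresses into ``$G[C]$ is a subgraph of the closure of $F_C$.''
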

\begin{proof}
Let $\CC$ be the $(s,1/c)$-generic cover of size $c$ and treedepth at most $s$ obtained using Corollary~\ref{cor-lowtd}.
For each $C\in\CC$, the graph $G[C]$ has treedepth at most $s$.  Since a path with $2^s$ vertices has treedepth
greater than $s$, we conclude that $P_{2^s}\not\subseteq G[C]$.  Let $F_C$ be an arbitrary DFS forest of $G[C]$;
then $F_C$ has depth less than $2^s$.  Moreover, $F_C\subseteq G$ and $G[C]$ is a subgraph of the closure of $F_C$, and
thus $F_C$ is a scaffolding.  Hence, we can return the system $\{F_C:C\in\CC\}$.
\end{proof}

The \emph{depth} of a vertex $v$ in a scaffolding $F$ is the number of edges on the path from $v$ to a root;
in particular, the roots of $F$ have depth $0$.
A \emph{representation} $\RR_F$ of a scaffolding $F$ in a graph $G$ consists of a unary predicate symbol
$\cin{F}$ interpreted by the set $V(F)$ and a unary function symbol $\prt{F}$ interpreted as the function defined so that
if $v$ is a non-root vertex of $F$, then $\prt{F}(v)$ is the parent of $v$ in $F$, and $\prt{F}(v)=v$ otherwise.

A \emph{term} is a variable or a composition of unary function symbols applied to a variable.
For a finite set $X$ of terms, an \emph{$X$-template} is a pair $(Q,\mu)$, where $Q$ is a rooted forest and $\mu:X\to V(Q)$ is a
function such that
\begin{itemize}
\item every leaf of $Q$ is in the image of $\mu$, and
\item for every function symbol $f$ and a term $t$, if $t\in X$ and $f(t)\in X$, then $\mu(t)$ is an ancestor or a descendant
of $\mu(f(t))$ in $Q$.
\end{itemize}
Note that for each $d$, there are only finitely many non-isomorphic $X$-templates of depth at most $d$.
We refer to the last condition as \emph{guard-consistency}, and note it implies the following useful fact.
\begin{observation}\label{obs-guard}
Let $X$ be a finite set of terms, containing a term $t$ with free variable $z$ and all its subterms.
Let $T=(Q,\mu)$ be an $X$-template.   Then there exists a subterm $t'$ of $t$ such that $\mu(t')$ is
a common ancestor of $\mu(t)$ and $\mu(z)$ in $Q$.  In particular, if $\mu$ does not map any term containing $z$ to a strict ancestor
of $\mu(t)$, then $\mu(z)$ is a descendant of $\mu(t)$.
\end{observation}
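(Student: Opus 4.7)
The plan is to exploit the fact that, since $X$ contains $t$ and all its subterms, and each subterm of a unary term arises by applying a single function symbol to the previous one, guard-consistency forces a chain of comparable images in $Q$. Write $t = f_k(f_{k-1}(\cdots f_1(z)\cdots))$ with $f_1,\ldots,f_k$ unary function symbols (if $k=0$ then $t=z$ and the claim is trivial), and set $t_0 = z$ and $t_i = f_i(t_{i-1})$ for $i=1,\ldots,k$. Each $t_i$ is a subterm of $t$, hence in $X$. Let $v_i = \mu(t_i)$; in particular $v_0 = \mu(z)$ and $v_k = \mu(t)$. Applying the guard-consistency clause to the pair $(t_{i-1}, t_i) = (t_{i-1}, f_i(t_{i-1}))$ for each $i$, the vertices $v_{i-1}$ and $v_i$ are comparable in $Q$ (one is an ancestor of the other).

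The combinatorial heart of the proof is then the following claim about rooted forests: any sequence of vertices in which consecutive elements are comparable contains a common ancestor of its two endpoints. I would argue this as follows. Comparability of $v_0$ and $v_1$ places them in a single tree of $Q$; inducting along the chain, all $v_i$ sit in the same tree. Let $u$ be the lowest common ancestor of $v_0$ and $v_k$ in that tree. If some $v_i$ equals $u$ or is a strict ancestor of $u$, it is automatically a common ancestor of $v_0$ and $v_k$, and we are done. Otherwise every $v_i$ is a proper descendant of $u$, hence lies in the subtree under a unique child of $u$. Two vertices lying below distinct children of $u$ are incomparable in $Q$, so consecutive elements of the chain must stay under the same child; but $v_0$ and $v_k$ are under different children of $u$ (that is what it means for $u$ to be their LCA), a contradiction. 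Taking $t' = t_i$ for the index $i$ produced yields the first statement of the observation.

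For the \emph{in particular} clause, note that every $t_i$ is a term containing $z$ (unary function symbols introduce no new variables). The subterm $t'$ supplied by the first part is some $t_i$ with $\mu(t_i) = v_i$ an ancestor of $v_k = \mu(t)$. If $i < k$, then $t_i$ is a term containing $z$ mapped to an ancestor of $\mu(t)$, and the hypothesis forbids this from being a strict ancestor, so $v_i = v_k$. In either case $\mu(t)$ is an ancestor of $\mu(z)$, i.e., $\mu(z)$ is a descendant of $\mu(t)$, as required.

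The only step that is not immediate is the combinatorial claim about comparable-consecutive sequences in a rooted forest; this requires the LCA case analysis sketched above, and I expect it to be the main (albeit minor) obstacle. Everything else is a direct unpacking of definitions.
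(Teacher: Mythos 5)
Your proof takes essentially the same approach as the paper's, which is even terser: it simply observes that guard-consistency makes $\mu$ map the chain of subterms $z=t_0,t_1,\ldots,t_k=t$ to a walk in the closure of $Q$ from $\mu(z)$ to $\mu(t)$, and such a walk must pass through a common ancestor of its endpoints. Your worked-out LCA argument is the right way to justify that last step, but there is a small logical slip: negating ``some $v_i$ equals $u$ or is a strict ancestor of $u$'' does not by itself give ``every $v_i$ is a proper descendant of $u$,'' since a priori a $v_i$ could be incomparable with $u$. This is easily patched by the same one-step reasoning you already use for the children: starting from $v_0$, which is a proper descendant of $u$, a comparable successor $v_{i+1}$ that fails to be a proper descendant of $u$ would have to be an ancestor of $v_i$ that is not below $u$, hence $u$ itself or a strict ancestor of $u$, returning you to the first case.
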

\begin{proof}
By the guard-consistency, $\mu$ maps the subterms of $t$ to a walk in the closure of $Q$ from $\mu(z)$ to $\mu(t)$,
and such a walk must pass through a common ancestor of these vertices.
\end{proof}
If $X$ is a set of terms, $T=(Q,\mu)$ is an $X$-template,
and $F$ is a symbol for a scaffolding, then let $\tau_{T,F}$ be the conjunction of the following formulas (where $f^{(k)}$ denotes the $k$-times
nested application of the function $f$, and in particular $f^{(0)}(t)$ is $t$):
\begin{itemize}
\item $\cin{F}(t)$ for each $t\in X$,
\item for $t\in X$ such that the depth of $\mu(t)$ in $Q$ is $k$:
\begin{itemize}
\item $\prt{F}^{(i)}(t)\neq \prt{F}^{(i+1)}(t)$ for $i\in \{0,\ldots,k-1\}$ and
\item $\prt{F}^{(k)}(t)=\prt{F}^{(k+1)}(t)$,
\end{itemize}
\item for each $t,t'\in X$ where $\mu(t)$ and $\mu(t')$ are at depths $k_1$ and $k_2$, respectively, in the same component of $Q$,
and their nearest common ancestor is at depth $k$:
\begin{itemize}
\item $\prt{F}^{(k_1-i)}(t)\neq \prt{F}^{(k_2-i)}(t')$ for $i\in \{k+1,\ldots,\min(k_1,k_2)\}$ and
\item $\prt{F}^{(k_1-k)}(t)=\prt{F}^{(k_2-k)}(t')$,
\end{itemize}
\item $\prt{F}^{(k_1)}(t)\neq \prt{F}^{(k_2)}(t')$ for each $t,t'\in X$ where $\mu(t)$ and $\mu(t')$ are at depths $k_1$ and $k_2$,
respectively, in different components of $Q$.
\end{itemize}
Consider an assignment $\omega$ of vertices of $G$ to the variables appearing in $X$.
We say that $\omega$ \emph{$F$-matches} an $X$-template $(Q,\mu)$ (with respect to a fixed $G$-interpretation $\II$
of the unary functions appearing in the terms of $X$) if there exists an injective homomorphism $h$ from $Q$ to $F$ mapping the roots of
$Q$ to roots of $F$ and such that for each term $t\in X$, the value of $t$ according to $\II$ and $\omega$ is
equal to $h(\mu(t))$.  The formula $\tau_{T,F}$ is constructed so that the following claim holds.

\begin{observation}\label{obs-templ}
Let $X$ be a finite set of terms.  Let $G$ be a graph, let $\II$ be a $G$-interpretation of the function symbols appearing
in the terms of $X$, and let $F$ be a scaffolding in $G$.
An assignment $\omega$ of vertices of $G$ to the variables appearing in $X$ $F$-matches an $X$-template $T$ if and only if
$\II,\RR_F\models \tau_{T,F}(\omega)$.
\end{observation}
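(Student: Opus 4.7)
The plan is to verify the two implications of the biconditional by directly unpacking the definitions, as the formula $\tau_{T,F}$ was engineered to be a syntactic transcription of what $F$-matching means.

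For the forward direction, assume $\omega$ $F$-matches $T$ via an injective homomorphism $h\colon Q\to F$ mapping roots to roots and with $t^\II[\omega]=h(\mu(t))$ for every $t\in X$. Each conjunct of $\tau_{T,F}$ then follows directly: the predicate $\cin{F}(t)$ holds because $h(\mu(t))\in V(F)$; the chain of $\prt{F}$-iterates starting at $t$ traces the path in $F$ from $h(\mu(t))$ up to the root of its component, strictly decreasing along $k=\mathrm{depth}_Q(\mu(t))$ steps and then stabilizing at a root (where $\prt{F}$ acts as identity), which gives the intended equalities and inequalities at depth $k$; for two terms $t,t'$ with $\mu(t),\mu(t')$ in the same component of $Q$ and nearest common ancestor at depth $k$, the paths in $F$ from $h(\mu(t))$ and $h(\mu(t'))$ to that common ancestor coincide exactly at the $k$-th step above and differ strictly below, by the homomorphism property and injectivity; and when $\mu(t),\mu(t')$ lie in different components of $Q$, the roots of these components are distinct, so injectivity gives distinct roots in $F$, i.e., distinct final $\prt{F}$-iterates.

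For the reverse direction, assume $\II,\RR_F\models \tau_{T,F}(\omega)$, and define a candidate map $h\colon V(Q)\to V(F)$ as follows. Every $v\in V(Q)$ has a descendant in the image of $\mu$, since every leaf of $Q$ lies in that image; if $v$ is an ancestor of $\mu(t)$ at depth $k$, and $\mu(t)$ is at depth $k_1$ in $Q$, set $h(v)=\prt{F}^{(k_1-k)}(t^\II[\omega])$. Well-definedness is exactly the content of the common-ancestor equality conjuncts of $\tau_{T,F}$: if $v$ is a common ancestor of $\mu(t)$ and $\mu(t')$ in $Q$ (hence contains their nearest common ancestor on the way up), the relevant iterates of $\prt{F}$ coincide. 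Injectivity of $h$ follows from the inequality conjuncts: along a single component, distinct depths above a common descendant give distinct iterates; across distinct components, the roots are mapped to distinct roots in $F$. The fact that $h$ is a homomorphism mapping parents in $Q$ to parents in $F$ is immediate from the construction, and the depth-$k$ stabilization conjunct $\prt{F}^{(k)}(t)=\prt{F}^{(k+1)}(t)$ forces $h$ to map each root of $Q$ to a root of $F$ (since $\prt{F}$ fixes exactly the roots of $F$). Finally, by construction $h(\mu(t))=t^\II[\omega]$ for every $t\in X$, so $\omega$ $F$-matches $T$ via $h$.

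The only real bookkeeping point is checking that the depths $k_1-k$ used in the definition of $h$ match up consistently across different witnessing descendants, which is precisely what the common-ancestor block of $\tau_{T,F}$ was designed to encode. Given this, the proof is essentially a line-by-line translation between the semantic notion of $F$-matching and the syntactic conjuncts of $\tau_{T,F}$, so no further tools beyond the definitions of scaffolding, $X$-template, and the representation $\RR_F$ are needed.
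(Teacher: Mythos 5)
Your proof is correct and is essentially the argument the paper leaves implicit: the paper states the observation without a proof, remarking only that $\tau_{T,F}$ is "constructed so that the claim holds," and your two-direction unpacking of the definitions — forward by translating the injective root-preserving homomorphism into the conjuncts, backward by building $h$ from the $\prt{F}$-iterates and checking well-definedness, injectivity, the homomorphism property, and root-preservation against the respective conjunct blocks — is exactly the intended verification.
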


For a finite index set $I$ and counter $I$-signatures $\sigma$ and $\sigma'$, we write
$\sigma\subseteq \sigma'$ if all symbols from $\sigma$ belong to $\sigma'$, the
counter symbols of $\sigma$ have the same triggers and ordering in $\sigma'$,
and they are smaller than all the counter symbols from $\sigma'\setminus\sigma$.
The \emph{term set} of a formula $\varphi$ is the set $X$ of all terms appearing in $\varphi$,
including the subterms (e.g., if the term $f(g(x))$ appears in $\varphi$, we also include $g(x)$ and $x$ in $X$).
Let us now prove a key lemma that enables us to eliminate a quantifier subject to a template restriction.

\begin{lemma}\label{lemma-elimone}
Let $I$ be a finite index set, let $\sigma$ be a counter $I$-signature, and let
$\psi$ be a quantifier-free first-order $(I,\sigma)$-formula with one free variable $z$, where $\psi$ is 
a conjunction of elementary formulas and their negations and does not use the equality relation symbol.
Let $T=(Q,\mu)$ be an $X$-template for the term set $X$ of $\psi$ and let $r$ be a vertex of $Q$
such that $\mu$ maps all terms of $X$ to descendants of $r$.
There exists a counter $I$-signature $\sigma'\supseteq\sigma$ and an $x$-local
$(I,\sigma')$-formula $\lambda$ such that the following claim holds.
For any graph $G$, any $G$-interpretation $\II$ of $\sigma$,
and any scaffolding $F$ in $G$ of depth at most $d$, we can in linear time compute a $G$-interpretation $\II'$ of
$\sigma'$ such that for any $v\in V(G)$ and any $I$-tuple $A_I$ of subsets of $V(G)$,
\begin{itemize}
\item $G,\II',A_I\models\lambda(v)$ if and only if
\item $v\in V(F)$ and the depth of $v$ in $F$ is equal to the depth of $r$ in $Q$ and there exists a descendant $v'$ of $v$ in $F$
such that the assignment of $v'$ to $z$ $F$-matches $T$ and $G,\II,A_I\models\psi(v')$.
\end{itemize}
\end{lemma}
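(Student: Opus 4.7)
The plan is to construct $\lambda$ by structural induction on the subtree $Q_r$ of $Q$ rooted at $r$, producing for each $p \in V(Q_r)$ an $x$-local $(I,\sigma')$-formula $R_p$ whose intended meaning is ``$x$ plays the role of $p$ in an $F$-match of the sub-template rooted at $p$, and every elementary formula of $\psi$ coming from a term $t$ with $\mu(t)$ in the subtree rooted at $p$ holds under that match''. The final formula is then $\lambda := R_r$.

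The first phase is to load into $\II'$ every piece of information that does not depend on $A_I$. For any $w \in V(G)$ and any term $t \in X$, the value $\mathrm{val}_t(w)$ obtained by composing the (boundedly many) guarded unary functions of $\sigma$ starting from $w$ is computable in constant time after linear-time initialisation. Using this together with $\prt{F}$ and $\cin{F}$, I add unary predicate symbols to $\sigma'$ that precompute at each $w$: (a) the depth of $w$ in $F$ (one predicate per depth, of which there are at most $d$); (b) for every predicate symbol $P$ of $\sigma$ and every term $t$ appearing in $\psi$, the truth of $P(\mathrm{val}_t(w))$; and (c) for each consecutive subterm pair $(t', f(t'))$ in $X$, the consistency of the functional step with the ancestor/descendant step prescribed by $T$ in $F$, i.e.\ the equation $f^{\II}(\mathrm{val}_{t'}(w)) = \mathrm{val}_{f(t')}(w)$ together with the fact that the resulting vertex lies at the correct depth and relative position in $F$ when $w$ plays the role of $\mu(t')$. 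All of (a)--(c) reduce to evaluating bounded-depth combinations of the given data at each vertex, so the whole preprocessing runs in linear time. After it, the only $A_I$-dependent atomic conditions in $\psi$ are of the form $X_j(t)$ or $\gamma(t) \geq m$ with $\gamma$ a counter of $\sigma$, and under a valid template match these reduce to $X_j(x)$ and $\gamma(x) \geq m$ evaluated at $x = h(\mu(t))$, which are already $x$-local there.

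The induction is then essentially mechanical. For a leaf $p$ of $Q_r$, I take $R_p(x)$ to be the conjunction of the depth-$d_p$ predicate, all precomputed template-consistency predicates anchored at $p$, and the atoms $E(x)$ for every $E(t)\in\psi$ with $\mu(t)=p$; all conjuncts are $x$-local. For an internal $p$ with children $c_1,\ldots,c_k$ in $Q_r$, the key clause is that $x$ has pairwise distinct $F$-children realising $R_{c_1},\ldots,R_{c_k}$ respectively. I encode it Hall-style: for every subset $T \subseteq \{c_1,\ldots,c_k\}$ I add a fresh counter $\gamma^{p,T}$ to $\sigma'$, placed above all previously introduced counters in the signature ordering, with trigger $\bigl(\prt{F},\, \bigvee_{i \in T} R_{c_i}(x)\bigr)$. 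The trigger formula is $x$-local and $\gamma^{p,T}$-dominated, hence legal, and $\gamma^{p,T}(x)$ counts the $F$-children of $x$ that satisfy at least one of the $R_{c_i}$ with $i \in T$. By Hall's theorem the required system of distinct representatives exists iff $\gamma^{p,T}(x) \geq |T|$ for every $T$, so conjoining these inequalities with the leaf-style clauses at $p$ yields an $x$-local $R_p$.

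The step I expect to be the main obstacle is the correctness verification: arguing that the precomputed chain-consistency predicates from (c) together with the inductive Hall clauses really enforce the full ``$F$-matches $T$'' condition, in particular the term-value equalities $\mathrm{val}_t(u) = h(\mu(t))$ for every $t \in X$, where $u = h(\mu(z))$ is determined by the match. Injectivity of $h$ across sibling subtrees in $Q_r$ is automatic, because the distinct $F$-children produced by Hall sit in disjoint subtrees of $F$ and within a single subtree injectivity comes inductively; the part of $Q$ outside $Q_r$ is just the ancestor path of $r$, whose $F$-image is forced by $x$. The term-value equalities decompose, via the guard-consistency walk of Observation~\ref{obs-guard} connecting $\mu(z)$ to $\mu(t)$ through the $\mu$-images of the successive subterms of $t$, into a bounded sequence of one-step conditions ``$f^{\II}$ sends this vertex to its $F$-prescribed ancestor or descendant'', each of which is exactly what preprocessing step (c) records as a precomputed unary predicate on $w$. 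Threading these per-step checks correctly through the recursion is the delicate bookkeeping that carries the whole argument.
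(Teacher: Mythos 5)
Your plan replaces the paper's mechanism with a Hall-style counting argument, and this is where a real gap opens. The Hall conditions certify the \emph{existence} of a system of distinct $F$-children realizing $R_{c_1},\ldots,R_{c_k}$, but they say nothing about \emph{which} children get selected. This matters because the statement you need is tied to one specific match: once $v'=h(\mu(z))$ is fixed, every term value $t(v')$ is uniquely determined, and the $A_I$-dependent atoms of $\psi$ (those of the form $X_j(t)$ or $\gamma(t)\ge m$, which are exactly the ones you cannot precompute into $\II'$) must be evaluated at \emph{those} vertices. Your recursion instead checks them at whichever representatives Hall's theorem happens to supply, and nothing in the per-vertex consistency predicates of step (c) forces the representative chosen at depth $i$ to equal $t(v')$ for the relevant term $t$. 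The consistency checks of the kind ``$f^{\II}(w)$ is the ancestor/descendant of $w$ at the prescribed distance'' are evaluated locally at $w$ and propagate information \emph{from} $w$ (namely about $f^{\II}(w)$), but they do not constrain the \emph{independently} selected representative at the node $\mu(f(t'))$ to actually be $f^{\II}(w)$ when $\mu(f(t'))$ is a descendant with siblings. So the bookkeeping you flag as ``delicate'' is in fact not closeable in the form proposed: the term-value equalities $\mathrm{val}_t(v')=h(\mu(t))$ simply do not follow from local consistency plus Hall.

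The paper sidesteps this by never needing Hall at all. The existence of an injective homomorphism is asserted outright via a single precomputed, $A_I$-independent predicate $\tau$ placed at the node $\mu(z)$, since ``the assignment of $w$ to $z$ $F$-matches $T$'' depends only on $G$, $\II$, and $F$. The counters are then used purely to \emph{propagate} the $A_I$-dependent atoms of $\psi$ down the template tree, and the crucial ingredient that makes this propagation land on the correct vertex $t(v')$ is the predicate $p_y$, built from Observation~\ref{obs-guard}: for any node $y$ not an ancestor of $\mu(z)$, guard-consistency produces a subterm $t'_y$ mapped to an ancestor of $y$'s parent, and ``$a_y$ applied to the ancestor at that depth is a descendant of $w$'' is a precomputable unary condition that is satisfied by at most one child of any given parent. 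This built-in uniqueness replaces both distinctness-of-representatives and the anchoring of the recursion to the actual match. If you want to repair your approach, the minimal fix is to (i) reinstate a $\tau$-style predicate at $\mu(z)$ so that the match itself is asserted rather than constructed, and (ii) replace the Hall counters with the uniqueness predicate $p_y$ derived from Observation~\ref{obs-guard}, after which your counters $A_{u,y}(\cdot)\ge 1$ suffice with no need for the $2^k$ subset conditions.
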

\begin{proof}
We include in $\sigma'$ the symbols of the representation of $F$ and in $\II'$ the representation $\RR_F$.
We add to $\sigma'$ a counter $A_{u,y}$ for each strict descendant $y$ of $r$ and its parent $u$ in $Q$.
These counters are ordered in a non-increasing order according to the depth of $u$.
The trigger for $A_{u,y}$ is $(\prt{F},\lambda_y)$, where $\lambda_y$ is the conjunction of $x$-local formulas obtained as follows.
\begin{itemize}
\item[(i)] If $y$ is not an ancestor of $\mu(z)$, then let $t_y\in X$ be an arbitrarily chosen
term such that $y$ is an ancestor of $\mu(t_y)$.  By Observation~\ref{obs-guard},
there exists a subterm $t'_y$ of $t_y$ such that $\mu(t'_y)$ is a common ancestor of $\mu(t_y)$ and $\mu(z)$.
Since $y$ is not an ancestor of $\mu(z)$, we conclude that $\mu(t'_y)$ is an ancestor of $u$.
Since $t'_y$ is a subterm of $t_y$, we have $t_y=a_y(t'_y)$, where $a_y$ is a composition of function applications. 
We add to the conjunction $\lambda_y$ the formula $p_y(x)$, where $p_y$ is a new unary predicate symbol interpreted in $\II'$
as the set of vertices $w\in V(F)$ with the depth in $F$ equal to the depth of $y$ in $Q$
such that for the ancestor $w'$ of $w$ in $F$ whose depth is equal to the depth of $\mu(t'_y)$ in $Q$,
the vertex $a_y(w')$ (according to the interpretation $\II'$) is a descendant of $w$.

Note that this ensures that the vertex assigned to $x$ is at the same depth as $y$ and that its
parent has at most one child for which the formula $\lambda_y$ holds.
\item[(ii)] If $y=\mu(z)$, then we add to the conjunction $\lambda_y$ the formula $\tau(x)$, where
$\tau$ is a new unary predicate interpreted as the
set of vertices $w\in V(F)$ such that the assignment of $w$ to $z$ $F$-matches $T$.
\item[(iii)] We add to the conjunction $\lambda_y$ the formula $A_{y,y'}(x)\ge 1$ for each child $y'$ of $y$ in $Q$.
\item[(iv)] For each operand of the conjunction $\psi$ of form $p(t)$ ($p$ is a unary predicate symbol,
a comparison of the value of a counter, or a negation of one of these) where $t$ is a term such that $\mu(t)=y$,
we add to the conjunction $\lambda_y$ the formula $p(x)$.
\end{itemize}
We define the formula $\lambda_r$ in the same way (note that (i) does not apply, since $r$ is an ancestor of $\mu(z)$)
and let $\lambda\equiv\lambda_r$.  Let us show that $\lambda$ satisfies the forwards implication from the statement of the lemma
(it is easy to see that the backwards one holds).

Suppose that $G,\II',A_I\models\lambda(v)$.  Let $r=y_0,y_1,\ldots,y_d=\mu(z)$ be the path in $Q$ from $r$ to $\mu(z)$.
Let $v_0=v$. For $i=1,\ldots,d$, note that since $G,\II',A_I\models\lambda_{y_{i-1}}(v_{i-1})$, (iii)
ensures that $A_{y_{i-1},y_i}(v_{i-1})>0$, and since the trigger of $A_{y_{i-1},y_i}$ is $(\prt{F},\lambda_{y_i})$,
this implies that there exists a child $v_i$ of $v_{i-1}$ in $F$ such that $G,\II',A_I\models\lambda_{y_i}(v_i)$; we fix
any such child $v_i$.

Let $v'=v_d$.  Since $y_d=\mu(z)$ and $G,\II',A_I\models\lambda_{y_d}(v')$, (ii) implies that
the assignment of $v'$ to $z$ $F$-matches $T$, and in particular $v'\in V(F)$ has the same depth in $F$
as $\mu(z)$ in $Q$.  Note that $v_0=v$ is a vertex of $F$ whose depth is by $d$ smaller than the depth of $v'$, and thus equal
to the depth of $r$ in $Q$.

Finally, we need to argue that $G,\II,A_I\models\psi(v')$.  Consider any operand $p(t)$ of the conjunction $\psi$,
and let $y_a$ (where $a\in\{0,\ldots,d\}$) be the nearest common ancestor of $\mu(t)$ and $\mu(z)$ in $Q$.
Let $y_a=y'_0,y'_1,\ldots,y'_c=\mu(t)$ be the path in $Q$ from $y_a$ to $\mu(t)$.  Since the assignment of $v'$ to $z$ $F$-matches $T$,
$F$ contains a path $v_a=w_0,\ldots,w_c$ from $v_a$ to the vertex $w_c=t(v')$ (evaluated according to $\II$) of the same length.
For $i=1,\ldots,c$, note that since $G,\II',A_I\models\lambda_{y'_{i-1}}(w_{i-1})$, (iii)
ensures that $A_{y'_{i-1},y'_i}(w_{i-1})>0$, and since the trigger of $A_{y'_{i-1},y'_i}$ is $(\prt{F},\lambda_{y'_i})$,
this implies that there exists a child $w'_i$ of $w_{i-1}$ in $F$ such that $G,\II',A_I\models\lambda_{y'_i}(w'_i)$.
Note that (i) ensures that there is only one such child, and since the assignment of $v'$ to $z$ $F$-matches $T$,
we conclude that $w_i=w'_i$.  Consequently, we have $G,\II',A_I\models\lambda_{y'_c}(w_c)$, and since $y'_c=\mu(t)$
and $w_c=t(v')$, (iv) implies $G,\II,A_I\models p(t(v'))$.
\end{proof}

Using Lemma~\ref{lemma-elimone}, we now prove the following strengthening.
\begin{lemma}\label{lemma-elimtemp}
Let $I$ be a finite index set, let $\sigma$ be a counter $I$-signature, and let
$\psi$ be a quantifier-free first-order $(I,\sigma)$-formula with free variables $W$, where $\psi$ is 
a conjunction of elementary formulas and their negations, and let $z\in W$ be a free variable.
Let $T=(Q,\mu)$ be an $X$-template for the term set $X$ of $\psi$.
There exists a counter $I$-signature $\sigma'\supseteq\sigma$ and a global quantifier-free first-order
$(I,\sigma)$-formula $\psi'$ with free variables $W\setminus\{z\}$
such that the following claim holds.  For any graph $G$, any $G$-interpretation $\II$ of $\sigma$,
and any scaffolding $F$ in $G$, we can in linear time compute a $G$-interpretation $\II'$ of
$\sigma'$ such that for any assignment $\omega':W\setminus\{z\}\to V(G)$ and any $I$-tuple $A_I$ of subsets of $V(G)$,
\begin{itemize}
\item[(a)] $\omega'$ extends to an assignment $\omega:W\to V(G)$ $F$-matching $T$ such that $G,\II,A_I\models\psi(\omega)$ if and only if
\item[(b)] $G,\II',A_I\models\psi'(\omega')$.
\end{itemize}
\end{lemma}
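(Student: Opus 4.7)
The plan is to reduce to Lemma~\ref{lemma-elimone} by treating the free variables in $W \setminus \{z\}$ as anchors whose positions in $F$ are prescribed by the input assignment $\omega'$. The construction builds $\psi'(\omega')$ as a conjunction of three pieces. First, a template-consistency formula on $\omega'$: a formula analogous to $\tau_{T,F}$ but restricted to the subset of $X$ consisting of terms that do not involve $z$, ensuring that the non-$z$ part of any extension $\omega$ fits the non-$z$ portion of the template in $F$. Second, for each elementary subformula $p(t)$ or its negation in $\psi$ with $t$ not involving $z$, the same subformula evaluated directly on $\omega'$. Third, a formula expressing the existence of a valid $z$ compatible with $\omega'$ and the remaining $z$-involving subformulas of $\psi$.

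For the third piece, the key structural fact (a consequence of Observation~\ref{obs-guard}) is that every term of $X$ containing $z$ has $\mu$-image clustered around $\mu(z)$ in $Q$. Concretely, let $r$ be the deepest ancestor of $\mu(z)$ in $Q$ that is either a root of $Q$ or the $\mu$-image of some non-$z$ term $t_r$; then every term containing $z$ has $\mu$-image descending from $r$. The next step is to apply Lemma~\ref{lemma-elimone} to the sub-template obtained by restricting $\mu$ to the terms mapped to descendants of $r$, and to the conjunction of elementary subformulas of $\psi$ that mention $z$. The output is an $x$-local formula $\lambda(v)$ asserting that $v$ can serve as the position $r$ in $F$ and that there exists a valid descendant for $z$. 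The third piece of $\psi'$ then becomes $\lambda(t_r(\omega'))$ when $r$ is anchored by a non-$z$ term $t_r$, and a global statement of the form $\#\lambda' \geq 1$ (where $\lambda'$ strengthens $\lambda$ by excluding the vertices of $F$ already occupied by images of non-$z$ anchors) when $\mu(z)$'s component of $Q$ contains no non-$z$ anchor above $\mu(z)$.

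The hardest part is the case analysis and ensuring that all cross-component injectivity and depth constraints are correctly captured. Several subtle situations need care: if $\mu(z) = \mu(w)$ for some $w \in W \setminus \{z\}$, then $\omega(z)$ is forced to equal $\omega'(w)$ and the existence condition collapses to a direct check of $\psi$ on $\omega'(w)$; the injectivity of the homomorphism from $Q$ to $F$ imposes cross-component inequalities that must be woven into the first piece and into the counted version of the third piece; and the new counter signature $\sigma'$ must bundle the counters and predicates produced by Lemma~\ref{lemma-elimone} together with those needed to evaluate the restricted $\tau_{T,F}$-style formula in the first piece, ordered so that the trigger-domination condition continues to be respected. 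Once these cases are handled, correctness follows from Observation~\ref{obs-templ}, Lemma~\ref{lemma-elimone}, and a direct unfolding of the definition of $F$-matching a template.
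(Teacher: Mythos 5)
Your proposal correctly identifies the high-level shape of the argument: locate the region of the template around $\mu(z)$ that is ``owned'' by $z$, split $\psi$ into the $z$-involving and non-$z$ parts, apply Lemma~\ref{lemma-elimone} to the $z$-part together with the corresponding sub-template, and reconcile the result with the anchored non-$z$ part either through a direct substitution (when $z$ is pinned down) or through a counting statement. Observation~\ref{obs-guard} and Observation~\ref{obs-templ} are used in essentially the same roles as in the paper. However, there are two genuine gaps.

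First, your claim that ``every term containing $z$ has $\mu$-image descending from $r$'' (with your choice of $r$ as the deepest ancestor of $\mu(z)$ that is a root or the image of a non-$z$ term) is not true on its own. Observation~\ref{obs-guard} allows a $z$-term $t$ to be mapped to a strict ancestor of $\mu(z)$, possibly above your $r$, as long as some subterm lands at a common ancestor. The paper deals with this by a preliminary rewriting step that produces $\psi_1$: whenever a $z$-term $t'$ is mapped to an ancestor of some non-$z$ term $t$, it replaces all occurrences of $t'$ by $\prt{F}^{(k)}(t)$ and adds a new unary-predicate guard $q(z)$ to preserve equivalence; only after this rewriting is it proved (by a careful argument invoking Observation~\ref{obs-guard}) that all remaining $z$-terms live under $r$. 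Without this rewriting, the decomposition into $\psi_2$ and $\psi_3$ does not cleanly separate by variable, and the application of Lemma~\ref{lemma-elimone} (which requires all terms of the sub-formula to be mapped to descendants of $r$) is not justified.

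Second, the construction of $\lambda'$ ``excluding the vertices of $F$ already occupied by images of non-$z$ anchors'' does not work as a formula construction. An $x$-local formula and its interpretation are fixed before $\omega'$ is given, so $\lambda'$ cannot mention $\omega'(w)$ or any term evaluated at $\omega'$; the anchors to exclude are not known at signature-building time. The paper resolves this with a combinatorial disjunction over subsets $J$ of the excluded positions $s_1,\ldots,s_m$: for each $J$ it asserts $\#\lambda\ge m+1-|J|$ (or, when $r$ is not a root, $A(s)\ge m+1-|J|$ for a counter $A$ with trigger $(\prt{F},\lambda)$, so that the count is over children of the prescribed parent $s$) conjoined with $\bigwedge_{j\in J}\lnot\lambda(s_j)$. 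This pushes the dependence on $\omega'$ into ordinary atomic subformulas $\lnot\lambda(s_j)$, which is the crucial device you are missing. Relatedly, when $r$ is not a root you need the counter $A$ keyed to the parent $s$ of $r$ rather than a global $\#\lambda$ count, since the chosen witness must be a child of the prescribed parent; your sketch does not make this distinction.
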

\begin{proof}
If the conjunction $\psi$ contains an elementary formula of form $t=t'$ and $\mu(t)\neq \mu(t')$ or
an elementary formula of form $\lnot(t=t')$ and $\mu(t)=\mu(t')$, then $\psi$ is false
for any assignment $\omega$ that $F$-matches $T$ and we can set $\psi'=\text{false}$.
Similarly, if $\psi$ contains an elementary formula of form $t=t'$ and $\mu(t)=\mu(t')$
or $\lnot(t=t')$ and $\mu(t)\neq\mu(t')$, then this subformula is true for any
assignment $\omega:W\to V(G)$ $F$-matching $T$, and we can delete the subformula from $\psi$ without
affecting the validity of (a).  Hence, we can without loss of generality assume that $\psi$ does not
use equality.

We include in $\sigma'$ the symbols of the representation of $F$ and in $\II'$ the representation $\RR_F$.
If there exists a term $t\in X$ not containing the variable $z$ such that $\mu(z)$ is an ancestor of $\mu(t)$ in $Q$,
then, letting $k$ be the length of the path between $\mu(t)$ and $\mu(z)$ in $Q$,
we can set $\psi'$ to be the formula obtained from $\tau_{T,F}\land \psi$ by substituting
$\prt{F}^{(k)}(t)$ for $z$.  Hence, suppose this is not the case, and thus $\mu$ maps only terms containing $z$ to
descendants of $\mu(z)$.

Let $\psi_1$ be the formula obtained from $\psi$ by repeatedly performing the following operation as long as there exist terms
$t$ and $t'$ appearing in the formula such that $t'$ contains the variable $z$, $t$ does not contain $z$, and
$\mu(t')$ is an ancestor of $\mu(t)$ in $Q$.  Without loss of generality, choose such a term $t'$ for which the
length $k$ of the path from $\mu(t)$ to $\mu(t')$ in $Q$ is maximum, and thus $\mu$ does not map terms containing $z$ to
strict ancestors of $\mu(t')$.  By Observation~\ref{obs-guard}, this implies $\mu(t')$ is a strict ancestor of $\mu(z)$ in $Q$;
let $k'$ be the length of the path between $\mu(z)$ and $\mu(t')$ in $Q$.  Let us add to $\sigma'$ a new unary predicate symbol $q$,
interpreted in $\II'$ as the set of vertices $v\in V(F)$ such that the vertex $t'_{\II}(v)$ is the ancestor of $v$ in $F$ at distance $k'$.
We replace all occurrences of the term $t'$ (even as subterms of other terms) in the formula
by the term $\prt{F}^{(k)}(t)$, and add the operand $q(z)$ to the conjunction.

Let $X_1$ be the term set of $\psi_1$.  Observe that along with the replacements described in the previous paragraph, we
can transform the template $T$ into an $X_1$-template $T_1=(Q,\mu_1)$ such that for any assignment $\omega':W\setminus\{z\}\to V(G)$,
\begin{itemize}
\item $\omega'$ extends to an assignment $\omega:W\to V(G)$ such that $\omega$ $F$-matches $T$ and $G,\II,A_I\models\psi(\omega)$ if and only if
\item $\omega'$ extends to an assignment $\omega:W\to V(G)$ such that $\omega$ $F$-matches $T_1$ and $G,\II',A_I\models\psi_1(\omega)$.
\end{itemize}
Moreover, $\mu_1$ also maps only terms containing $z$ to descendants $\mu_1(z)$.
Let $r$ be the vertex of $Q$ of the smallest depth such that $\mu(z_1)$ is a descendant of $r$
and $\mu_1$ maps only terms containing $z$ to descendants of $r$.  Let $B$ be the set consisting of 
descendants of $r$ in $Q$.

We claim that $\mu_1$ maps all terms of $X_1$ containing $z$ to $B$.  Indeed, consider
for a contradiction a term $t'$ containing $z$ such that $\mu_1(t')\not\in B$
and $t'$ has minimum depth.  By Observation~\ref{obs-guard}, $\mu_1(t')$ is a strict ancestor of $\mu_1(z)$, and thus
also a strict ancestor of $r$.  However, by the choice of $r$, $\mu_1$ maps a term $t$ not containing $z$ to
a descendant of $\mu_1(t')$; and we would have replaced $t'$ in the construction of the formula $\psi_1$.

Let $X_2$ be the set of terms in $X_1$ containing $z$, let $\mu_2$ be the restriction of $\mu_1$ to $X_2$,
and let $Q_2$ be the rooted tree induced in $Q$ by $B$ and the ancestors of $r$.
Then $T_2=(Q_2,\mu_2)$ is an $X_2$-template.  Let $Q_3=Q-B$ and let $\mu_3$ be the restriction of $\mu_1$ to $X_3=X_1\setminus X_2$; then
$T_3=(Q_3,\mu_3)$ is an $X_3$-template.  If $r$ is a root of $Q$, then
let $r_1$, \ldots, $r_m$ be all roots of $Q$ distinct from $r$.  Otherwise, let $r_1, \ldots, r_m$ be the
vertices of $Q$ distinct from $r$ with the same parent as $r$.
For $i\in \{1,\ldots,m\}$, let $s_i$ be a term $\prt{F}^{(k_i)}(s'_i)$, where $s'_i\in X_3$ is
a term such that $\mu'(s'_i)$ is a descendant of $r_i$ in $Q$ at distance $k_i$.
In the case that $r$ is not a root, let us additionally define a term $s$ as follows.
If there exists a term in $X_3$ mapped to the parent of $r$ by $\mu_3$, then we choose any such term as $s$.
Otherwise, the choice of $r$ implies $m\ge 1$ and we let $s=\prt{F}(s_1)$. 

Let $\psi_2$ be the conjunction of the operands of $\psi_1$ containing $z$ and let $\psi_3$ be the conjunction of the operands not containing $z$. 
Since $\psi$ (and thus also $\psi_1$) does not contain operands of form $t=t'$ and $\lnot (t=t')$ (and does not use the edge predicate),
$z$ is the only variable appearing in $\psi_2$.  Note that for any assignment $\omega':W\setminus\{z\}\to V(G)$,
\begin{itemize}
\item $\omega'$ extends to an assignment $\omega:W\to V(G)$ such that $\omega$ $F$-matches $T_1$ and $G,\II',A_I\models\psi_1(\omega)$ if and only if
\item $\omega'$ $F$-matches $T_3$ and $G,\II',A_I\models \psi_3(\omega')$ and there exists a vertex $v\in V(F)$
whose depth in $F$ is equal to the depth of $r$ in $Q$ such that
\begin{itemize}
\item $v$ has a descendant $v'$ such that the assignment of $v'$ to $z$ $F_2$-matches $T_2$ and $G,\II',A_I\models\psi_2(v')$,
\item $v$ is different from $s_1$, \ldots, $s_m$ interpreted according to $\II'$ and $\omega'$, and
\item if $r$ is not a root, then the parent of $v$ in $F$ is $s$ interpreted according to $\II'$ and $\omega'$.
\end{itemize}
\end{itemize}
We apply Lemma~\ref{lemma-elimone} to $\psi_2$ and $T_2$, extending $\sigma'$ and $\II'$ and obtaining a formula $\lambda$ such that
the above condition is equivalent to
\begin{itemize}
\item $\omega'$ $F$-matches $T_3$ and $G,\II',A_I\models \psi_3(\omega')$ and there exists a vertex $v\in V(F)$
such that 
\begin{itemize}
\item $G,\II',A_I\models \lambda(v)$,
\item $v$ is different from $s_1$, \ldots, $s_m$ interpreted according to $\II'$ and $\omega'$, and
\item if $r$ is not a root, then the parent of $v$ in $F$ is $s$ interpreted according to $\II'$ and $\omega'$.
\end{itemize}
\end{itemize}

In case $r$ is a root of $Q$, we can now set
$$\psi'\equiv \tau_{T_3,F}\land\psi_3\land \bigvee_{J\subseteq \{1,\ldots,m\}} \#\lambda\ge m+1-|J|\land \bigwedge_{j\in J} \lnot \lambda(s_j).$$
If $r$ is not a root of $Q$, we additionally introduce a counter $A$ with trigger $(\prt{F},\lambda)$ and we set
$$\psi'\equiv \tau_{T_3,F}\land\psi_3\land\bigvee_{J\subseteq \{1,\ldots,m\}} A(s)\ge m+1-|J|\land \bigwedge_{j\in J} \lnot \lambda(s_j).$$
\end{proof}

We further strengthen Lemma~\ref{lemma-elimtemp} to the following quantifier elimination lemma.

\begin{lemma}\label{lemma-elim}
Let $I$ be a finite index set, let $\sigma$ be a counter $I$-signature, and let
$\varphi\equiv(Q z)\psi$ be a global first-order $(I,\sigma)$-formula, where $\psi$ is quantifier-free and $Q$ is a quantifier.
Let $\GG$ be a class of graphs with bounded expansion.  There exists a counter $I$-signature $\sigma'\supseteq\sigma$
and a global quantifier-free first-order $(I,\sigma)$-formula $\varphi'$ with the same free variables as $\varphi$
such that the following claim holds.  For any $G\in \GG$ and a $G$-interpretation $\II$ of $\sigma$,
we can in linear time compute a $G$-interpretation $\II'$ of $\sigma'$ such that for any assignment $\omega'$
of vertices of $G$ to variables and any $I$-tuple $A_I$ of subsets of $V(G)$,
$$G,\II,A_I\models\varphi(\omega')\text{ if and only if }G,\II',A_I\models\varphi'(\omega').$$
\end{lemma}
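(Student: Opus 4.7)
The plan is to eliminate the quantifier $(Qz)$ by reducing to a finite disjunction over pairs (scaffolding, template), each handled by Lemma~\ref{lemma-elimtemp}.

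First I would reduce to the case $Q=\exists$: since $(\forall z)\psi\equiv\lnot(\exists z)\lnot\psi$ and the class of global quantifier-free first-order formulas is closed under Boolean combinations, the universal case follows by negation. Next, rewrite $\psi$ in disjunctive normal form $\psi\equiv\bigvee_j\psi_j$, where each $\psi_j$ is a conjunction of elementary literals. Split $\psi_j\equiv\alpha_j\land\beta_j$, putting into $\alpha_j$ all literals that do not mention $z$ (in particular all closed $\#\theta\ge m$ literals and their negations) and into $\beta_j$ those that do. Then
$$(\exists z)\psi\equiv\bigvee_j\bigl(\alpha_j\land(\exists z)\beta_j\bigr),$$
and each $\beta_j$ is a first-order quantifier-free conjunction of elementary literals, of exactly the shape required by Lemma~\ref{lemma-elimtemp}.

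Let $s$ be the maximum, over all $j$, of the cardinality of the term set $X_j$ of $\beta_j$. Invoking Lemma~\ref{lemma-scaffold} on $\GG$ with this $s$ yields a constant $c$ and a linear-time algorithm producing, for each $G\in\GG$, an $s$-generic family $\ZZ=\{F_1,\ldots,F_c\}$ of scaffoldings of depth at most $d=2^s$. For each $j$, each $k\in\{1,\ldots,c\}$, and each of the (boundedly many) $X_j$-templates $T$ whose forest has depth at most $d$, apply Lemma~\ref{lemma-elimtemp} to $\beta_j$, $T$, and $F_k$ to obtain a signature extension and a global quantifier-free formula $\beta'_{j,k,T}$ with free variables $W_{\beta_j}\setminus\{z\}$. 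Let $\sigma'$ be the union of all the resulting signatures, compute $\II'$ by executing all the constantly many linear-time constructions from Lemma~\ref{lemma-elimtemp} (together with the representations $\RR_{F_1},\ldots,\RR_{F_c}$), and set
$$\varphi'\equiv\bigvee_j\Bigl(\alpha_j\land\bigvee_{k,T}\beta'_{j,k,T}\Bigr).$$

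For correctness, in the forward direction a witness $\omega$ to $(\exists z)\psi_j(\omega')$ determines a set $Y\subseteq V(G)$ of at most $s$ vertices, namely the values of the terms of $X_j$ under $\omega$ and $\II$; the $s$-genericity of $\ZZ$ supplies some $F_k$ with $Y\subseteq V(F_k)$, and then $\omega$ $F_k$-matches the unique $X_j$-template $T$ read off from the ancestor-descendant structure of $Y$ inside $F_k$, so Lemma~\ref{lemma-elimtemp} delivers $\beta'_{j,k,T}(\omega')$. The converse is immediate from Lemma~\ref{lemma-elimtemp}. The main obstacle, and the conceptual core of the argument, is precisely this scaffolding step: it is what lets us trade the unbounded existential search over $V(G)$ for a constant-sized disjunction indexed by (scaffolding, template) pairs, without which Lemma~\ref{lemma-elimtemp} could not be applied. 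Everything else is Boolean bookkeeping, and linear time follows because the number of triples $(j,k,T)$ depends only on $\varphi$ and $\GG$.
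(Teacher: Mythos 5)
Your proposal is correct and follows essentially the same route as the paper: reduce to $\exists$ by negation, pass to disjunctive normal form and distribute the existential over the disjuncts, factor out the literals not involving $z$, then reduce the remaining conjunction to a finite disjunction over (scaffolding, template) pairs via Lemma~\ref{lemma-scaffold} and Lemma~\ref{lemma-elimtemp}. The only cosmetic difference is that the paper factors out only the global $\#\theta$ literals before invoking Lemma~\ref{lemma-elimtemp} (letting that lemma absorb the remaining $z$-free first-order literals), whereas you pull out all $z$-free literals into $\alpha_j$; both are valid since $z$-free conjuncts commute with $(\exists z)$ and Lemma~\ref{lemma-elimtemp} tolerates either split.
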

\begin{proof}
If $\varphi\equiv(\forall z)\psi$, we instead consider the formula $\varphi_0\equiv \lnot\varphi\equiv (\exists z) \lnot\psi$.  Processing
this formula as described below, we obtain a formula $\varphi'_0$ such that
$$G,\II,A_I\models\varphi_0(\omega')\text{ if and only if }G,\II',A_I\models\varphi'_0(\omega').$$
Then, we return the formula $\lnot\varphi'_0$.  Hence, we can assume that $\varphi\equiv(\exists z)\psi$.

Next, let us use the induction by the complexity of $\psi$ to simplify the formula.  Without loss of generality, we can assume $\psi$
is in the disjunctive normal form.  If $\psi\equiv\psi_1\lor \ldots\lor \psi_m$ for $m>1$, then note that $\varphi$ is equivalent
to $(\exists z) \psi_1\lor \ldots\lor (\exists z) \psi_m$.  We apply the induction hypothesis to $(\exists z) \psi_1$, \ldots,
$(\exists z) \psi_m$ in order (accumulating new symbols and their interpretations) and finally return the disjunction of the resulting
formulas.

Hence, we can assume $\psi$ is a conjunction of elementary formulas and their negations.
If $\psi\equiv \psi_1\land \psi_2$, where $\psi_1$ is a formula of form $\#\theta\ge n$ for some $x$-local formula $\theta$
or a negation of such a formula, then note that $\varphi$ is equivalent to $\psi_1 \land (\exists z) \psi_2$.
We apply the induction hypothesis to $(\exists z) \psi_2$ and return the conjunction of the resulting formula with $\varphi$.

Hence, we can assume that $\psi$ is a non-global formula.  Let $X$ be the term set of $\psi$.
Let $c$ and $d=2^{|X|}$ be the constants such that by Lemma~\ref{lemma-scaffold}, any graph from $\GG$ admits an $|X|$-generic system
of scaffoldings of depth at most $d$ and size $c$.  Let $F_1$, \ldots, $F_c$ be labels for scaffoldings and let
$T_1,\ldots, T_b$ be all possible $X$-templates of depth at most $d$.  We let $\varphi'$ be the disjunction of formulas
$\varphi_{i,j}$ obtained using Lemma~\ref{lemma-elimtemp} applied to $\psi$, $F_i$ and $T_j$ (accumulating new symbols to $\sigma'$)
for all $i\in\{1,\ldots,c\}$ and $j\in\{1,\ldots,b\}$.

Given a graph $G\in\GG$ and the $G$-interpretation $\II$, we use the algorithm from Lemma~\ref{lemma-scaffold} to obtain
an $|X|$-generic system $\ZZ$ of scaffoldings of $G$ of depth at most $d$ and size $c$, and assign the labels $F_1$, \ldots, $F_c$
to the elements of this system.  We now apply the algorithm from Lemma~\ref{lemma-elimtemp} for all $F_i$ and $T_j$ with
$i\in\{1,\ldots,c\}$ and $j\in\{1,\ldots,b\}$ to obtain the interpretation $\II'$.

Observe that for any any assignment $\omega'$ of vertices of $G$ to free variables, we have $G,\II,A_I\models\varphi(\omega')$
if and only if $\omega'$ extends to an assignment $\omega$ giving a value to $z$ such that $G,\II,A_I\models\psi(\omega)$.
Since the cover $\ZZ$ is $|X|$-generic, this is the case exactly if for some scaffolding $F_i$ and a template $T_j$,
$\omega'$ extends to an assignment $\omega$ $F_i$-matching $T_j$ such that $G,\II,A_I\models\varphi(\omega)$.
Equivalently, this is the case exactly if for some scaffolding $F_i$ and a template $T_j$, $G,\II',A_I\models\varphi_{i,j}(\omega')$,
i.e., if $G,\II',A_I\models\varphi'(\omega')$.
\end{proof}

To conclude the argument, we iterate Lemma~\ref{lemma-elim}.

\begin{proof}[Proof of Theorem~\ref{thm-elim}]
By Lemma~\ref{lemma-elimedge}, we can assume $\varphi$ does not use the adjacency predicate.
Without loss of generality, we assume $\varphi$ is in the prenex normal form, that is, $\varphi\equiv (Q_1x_1)(Q_2x_2)\ldots (Q_nx_n) \psi_n$
for a quantifier-free formula $\psi_n$.  For $i=n,\ldots, 1$, we apply Lemma~\ref{lemma-elim} to the formula
$(Q_ix_i)\psi_i$, obtaining an equivalent formula $\psi_{i-1}$ (and accumulating new symbols and their interpretations).
In the end, we return the formula $\psi_0$.
\end{proof}

\bibliographystyle{siam}
\bibliography{../data.bib}

\end{document}